\RequirePackage{xcolor}
\documentclass[journal,twoside,web]{ieeecolor_mod}


\usepackage{generic}
\usepackage{cite}
\usepackage{amsmath,amssymb}
\usepackage{graphicx}
\usepackage{algorithm,algorithmic}

\usepackage[hidelinks]{hyperref}
\usepackage{textcomp}
\usepackage{stmaryrd}
\usepackage{tikz,tikz-cd}


\newtheorem{theorem}{Theorem}[section]
\newtheorem{lemma}[theorem]{Lemma}
\newtheorem{definition}{Definition}
\newtheorem{corollary}[theorem]{Corollary}
\newtheorem{proposition}[theorem]{Proposition}

\newtheorem{remark}{Remark}
\newtheorem{assumption}{Assumption}
\newtheorem{example}[theorem]{Example}

\allowdisplaybreaks[4] 

\newcommand{\real}{\mathbb{R}}

\newcommand{\e}{\operatorname{e}}

\newcommand{\tx}{{\tilde{x}}}
\newcommand{\tz}{{\tilde{z}}}

\newcommand{\osLip}{\mathsf{osLip}}
\newcommand{\Lip}{\mathsf{Lip}}

\newcommand{\sfG}{\mathsf{G}}
\newcommand{\sfF}{\mathsf{F}}

\newcommand{\sfGn}{\mathsf{G}^n}

\newcommand{\solxone}{\ensuremath{x}}
\newcommand{\solxtwo}{\ensuremath{\bar{x}}}
\newcommand{\solzone}{\ensuremath{z}}
\newcommand{\solztwo}{\ensuremath{\bar{z}}}

\newcommand{\solyone}{\ensuremath{y}}
\newcommand{\solytwo}{\ensuremath{\bar{y}}}


\DeclareMathOperator*{\argmin}{arg\,min}

\newcommand*\circled[1]{\tikz[baseline=(char.base)]{
\node[shape=circle,draw,inner sep=1pt] (char) {#1};}}


\definecolor{gnblue6}{RGB}{35,156,255} 

\definecolor{subsubsection-blue}{RGB}{0,80,160}
\newcommand{\subblue}[1]{{\color{subsubsection-blue} #1}}

\usepackage{epstopdf}
\epstopdfsetup{suffix=,} 	

\graphicspath{{./IMG/}{images/}{./IMG/SAGEMATH_PLOTS/}}

\def\BibTeX{{\rm B\kern-.05em{\sc i\kern-.025em b}\kern-.08em
    T\kern-.1667em\lower.7ex\hbox{E}\kern-.125emX}}

\begin{document}

\title{Sampled-data Systems: Stability,
Contractivity and Single-iteration Suboptimal MPC}

\author{Yiting Chen, Francesco Bullo, and Emiliano Dall'Anese 
\thanks{Y. Chen is with the Department of Electrical and Computer Engineering, Boston University, Boston, MA 02215, USA. Email: \texttt{yich4684@bu.edu}.}
\thanks{F. Bullo is with the Center for Control, Dynamical Systems, and Computation, UC Santa Barbara, Santa Barbara, CA 93106 USA. Email: \texttt{bullo@ucsb.edu}.}
\thanks{E. Dall'Anese is with the Department of Electrical and Computer Engineering and the Division of Systems Engineering,  Boston University, Boston, MA 02215, USA. Email: \texttt{edallane@bu.edu}.}
\thanks{This work was supported in part by the NSF Award 2444163 and the AFOSR Awards AFOSR FA9550-22-1-0059 and FA9550-23-1-0740.} 
\thanks{The authors thank Dr.\ Emilio Benenati for  insightful discussions.}
}

\maketitle

\begin{abstract}
  This paper analyzes the stability of interconnected continuous-time (CT)
  and discrete-time (DT) systems coupled through sampling and zero-order
  hold mechanisms. The DT system updates its output at regular intervals
  $T>0$ by applying an $n$-fold composition of a given map.  This setup is
  motivated by online and sampled-data implementations of
  optimization-based controllers -- particularly model predictive control
  (MPC) -- where the DT system models $n$ iterations of an algorithm
  approximating the solution of an optimization problem.

  We introduce the concept of a \emph{reduced model}, defined as the
  limiting behavior of the sampled-data system as $T \to 0^+$ and $n \to
  +\infty$.  Our main theoretical contribution establishes that when the
  reduced model is contractive, there exists a threshold duration $T(n)$
  for each iteration count $n$ such that the CT-DT interconnection achieves
  exponential stability for all sampling periods $T < T(n)$.  Finally,
  under the stronger condition that both the CT and DT systems are
  contractive, we show exponential stability of their interconnection using
  a small-gain argument.  Our theoretical results provide new insights into
  suboptimal MPC stability, showing that convergence guarantees hold even
  when using a single iteration of the optimization algorithm—a practically
  significant finding for real-time control applications.
\end{abstract}

\section{Introduction}
\label{sec:introduction}

This paper studies the interconnection of a continuous-time (CT) dynamical system with a discrete-time (DT) system, coupled via sampling and zero-order hold mechanisms. As illustrated in the block diagram  in Fig.~\ref{fig:block-diagram} \emph{(Left)}, the DT system acquires samples of the state $x$ regularly at intervals of length $T$, and updates its output  $z$ based on a map $\sfGn(x,z)$; in particular,  $\sfGn(x,z)$ is defined as the $n$-fold composition of a given map $\sfG(x,z)$. In this setup, the goal of this paper is threefold: \emph{(i)} First, inspired by the foundational work on two-time-scale CT systems~\cite{kokotovic1968singular,kokotovic1976singular}, we aim to formalize the notion of a \emph{reduced model} (RM) associated with~\eqref{eq:sample_data} by  viewing $T$ and $n$ as parameters that induce a ``time-scale separation''; a preview of our RM model is in Fig.~\ref{fig:block-diagram} \emph{(Right)}, where  $z^*(x) := \lim_{n\to +\infty }\sfGn(x,z)$ for any fixed $x$. \emph{(ii)}~We examine how the properties of the RM, $n$, and $T$ influence the stability of the interconnected system. These conditions are then compared with those derived from small-gain arguments. Our analysis builds on tools from contraction theory~\cite{lohmiller1998contraction,FB:26-CTDS}, which provide a natural framework for studying stability of the interconnected system in Fig.~\ref{fig:block-diagram} \emph{(Left)} using Lipschitz constants for the CT and DT systems. For many systems of interest, these  constants can be computed analytically or estimated numerically. Before moving onto a brief literature review, we first provide motivations for the overall setup.

\begin{figure}[t]
\centering 
\includegraphics[width=1.0\columnwidth]{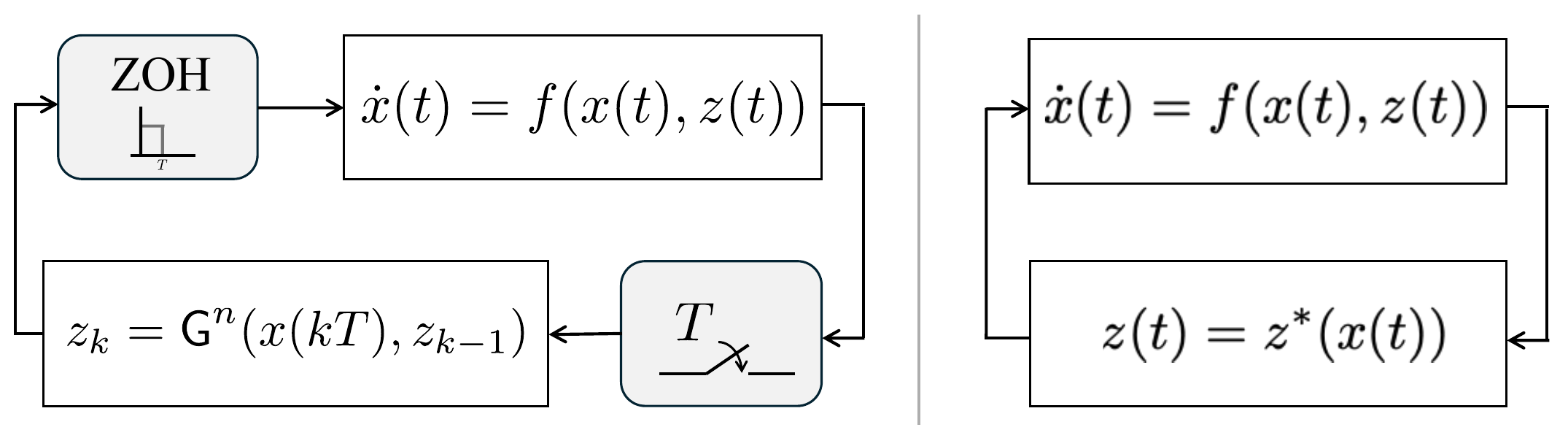}
\caption{\emph{(Left)} Interconnected system considered in this paper, with a CT sub-system, a DT sub-system, and sampling and zero-order hold blocks. Here, $\sfGn(x,z)$ is defined as $\sfG^1(x,z)=\sfG(x,z), \ldots, \sfGn(x,z) = \sfG(x,
\sfG^{n-1}(x,z))$. \emph{(Right)} Reduced model. Here, for any $x$, $z^*(x)$ is defined by $z^*(x) = \sfG(x,z^*(x))$. }
\label{fig:block-diagram}
\vspace{-.5cm}
\end{figure}

\emph{Motivations}. Our analysis is motivated by optimization-
based control frameworks -- which include, for example,  model predictive control (MPC)~\cite{rawlings2017model,gros2020linear,morari1988model}, Control Barrier Function (CBF) based control~\cite{ames2019control,xiao2023safe} and extensions including both CBF and Control Lyapunov Functions (CLFs)~\cite{garg2019control}, pointwise minimum-norm control~\cite{freeman2008robust}, and online feedback optimization~\cite{colombino2019online, hauswirth2020timescale}. These frameworks share a key characteristic: the control input is implicitly defined as the solution to an optimization problem that encodes performance objectives and system constraints. In our setup, the closed-loop system resulting from the interconnection of a plant with such an optimization-based controller is abstracted by the RM shown in Fig.~\ref{fig:block-diagram} \emph{(Right)}, where the map 
 $z^*(x)$ represents the control law obtained as the solution to the underlying optimization problem.  Accordingly, the interconnected system depicted in Fig.~\ref{fig:block-diagram} \emph{(Left)} captures practical aspects of real-world implementations, where the system state 
 $x$ is sampled at discrete time intervals, and the computation of the control input $z^*(x)$ is limited by computational constraints. Specifically, instead of solving the optimization problem to full convergence, only a finite number $n$ of iterations of an algorithm are executed. 
 
 We are particularly interested in applying our theoretical results to MPC problems, where the optimal solution map is often not available in closed form. This motivates the need for real-time MPC implementations~\cite{gros2020linear,diehl2005real,giselsson2010distributed}, as well as approaches such as instant MPC and sub-optimal MPC~\cite{zeilinger2011real,figura2024instant,yoshida2019instant,liao2020time,liao2021analysis,van2024suboptimal,moriyasu2024sampled,karapetyan2023finite}. See also~\cite{karapetyan2025closed} for a broader perspective on online control and the recent work on receding horizon games in~\cite{hall2024stability}.

\emph{Connections to the literature}. The study of two-time-scale CT systems dates back to the foundational works in singular perturbation theory~\cite{kokotovic1968singular,kokotovic1976singular}. A unified framework for analyzing stability and robustness in such systems was developed in~\cite{teel2003unified}, while contraction-based approaches were explored in~\cite{del2012contraction,cothren2023singular}. 
Contractivity of reduced-type dynamics has been studied in the 
context of nonlinear differential–algebraic equation systems in, e.g.,~\cite{nguyen2020contraction}.
  
Stability of coupled or networked CT dynamical systems was investigated in, e.g.,~\cite{russo2012contraction,picallo2022sensitivity}. 
Our analysis is inspired by these works, which prompted us to treat $T$ and $n$ as parameters that induce a time-scale separation. When considering the limiting case 
$n = + \infty$, $T > 0$, our framework encompasses classical sampled-data control systems~\cite{heemels2010stability}. The study of such systems has a rich literature, with numerous foundational contributions; representative works include~\cite{nesic2004framework,naghshtabrizi2006robust,sivashankar1993robust,fujioka2009discrete,gabriel2021sampled,fagundes2023stability}. Notably, in the context of linear time-invariant (LTI) systems,~\cite{naghshtabrizi2006robust} establishes stability conditions based on the sampling period $T$, formulated through Linear Matrix Inequalities (LMIs). In the context of instant MPC and time-distributed MPC, existing stability analysis focuses on CT-CT interconnections~\cite{figura2024instant,yoshida2019instant} or DT-DT settings~\cite{liao2020time,liao2021analysis,van2024suboptimal}.   

\emph{Contributions}. Our contribution is threefold. 

\noindent \emph{(a)} First, we formalize the notion of RM, which is defined as the limiting system for $T\!\rightarrow \!0^+$ and $n\!\rightarrow \!+\infty$, modeling the case where the map $z^*(x)$ is applied at all times. Then, we show our main result: if the DT system is contractive and the RM is strongly infinitesimally contractive, then for any $n \in \mathbb{Z}_{>0}$ there exists a $T(n)$ such that the interconnection between the CT and DT systems is globally exponentially stable (GES) for any $T\!<\!T(n)$. We offer a detailed expression for $T(n)$, which depends solely on $n$ and on  the Lipschitz constants of the RM and of the CT and DT systems. We also show that $T(n)$ is strictly increasing with 
respect to $n$ and it is bounded.   

\noindent \emph{(b)} Under the assumption that the CT system and DT system are contractive, we show that the interconnected system in Fig.~\ref{fig:block-diagram} \emph{(Left)} is globally exponentially stable and contractive in discrete time (i.e., when trajectories are sampled at intervals $T$) for any $n$ and any $T$ if the interconnection gain is small enough. 
This yields a quantitative small-gain condition for the CT–DT interconnection and extends classical small-gain contraction theorems (which apply only to CT–CT interconnections~\cite[Sec.~3.6]{FB:26-CTDS}\cite{russo2012contraction}) to the CT–DT architecture.
We show that these conditions imply contractivity of the RM and, thus, they are stricter than the one derived in \emph{(a)}. 
We briefly summarize the relations among the contributions \emph{(a)} and \emph{(b)} as an implication diagram in Fig.~\ref{fig:brief_diagram}.

\noindent \emph{(c)} 
As an application of our theoretical result in \emph{(a)}, we consider the MPC problem associated with an LTI dynamical system.
Interestingly, and in contrast
with existing results~\cite{liao2020time,liao2021analysis}, our findings
establish stability of the online MPC even when $n = 1$, provided that $T$
is sufficiently small. In other words, we show that only one evaluation
(i.e., only one iteration) of the MPC solver is sufficient to ensure
stability, provided $T$ is sufficiently small.  To the best of our
knowledge, 
this is the first time that such an MPC result is rigorously established.

\begin{figure}
    \centering
\vspace{.1cm}
\noindent \framebox[.98\width]{
\begin{tikzpicture}
[>=stealth, ->, line width=1.5pt]
\node[align=center] (vc) at (-3.2,3) {\small{\textsf{Overview of the implication diagram}} };
\node[align=center] (v2_1) at (-6,0) {\small{RM  contractive } \\
\small{DT subsystem contractive }};
\node[align=center] (v2_2) at (-1.4,0) {\small{$\forall~n\in\mathbb{Z}_{>0},~\exists~T(n)\!>\!0$: }\\ \small{$\forall~T\!<\!T(n)$, CT-DT system is GES }};  
    \node[align=center] (v1_1) at (-6,2.0) {\small{CT subsystem contractive} \\ \small{DT subsystem contractive }\\ \small{$+$ small-gain condition}};
\node[align=center] (v1_2) at (-1.4,2.0) {$\forall~n\in\mathbb{Z}_{>0}$ and $\forall~T>0$: \\ \small{CT-DT system is GES and } \\ \small{contractive (in discrete-time)} };
\draw[->]  (v1_1) edge  node[midway, above, sloped] {} (v1_2) ;
\draw[->] (v1_2) edge node[midway, right] {} (v2_2);
\draw[->]  (v2_1) edge node[midway, above, sloped] {} (v2_2);
\draw[->]  (v1_1) edge node[midway, above, right] {} (v2_1);
\end{tikzpicture}
}
 \caption{Implication structure underlying the main results for the CT-DT system. Rigorous formulations for the general nonlinear case and the LTI specialization are provided in  Sections~\ref{sec: main results} and~\ref{sec: LTI-case}, respectively.
 }
\vspace{-.5cm}
    \label{fig:brief_diagram}
\end{figure}

The rest of the paper is organized as follows. Section~\ref{sec:preliminaries} presents some mathematical preliminaries and definitions. Section~\ref{sec:ctdt-interconnection} outlines our main setup, presents the main results, and illustrates the implication diagram. Section~\ref{sec:mpc} applies our findings to the MPC, and presents some illustrative numerical results. Proofs of the results are presented in Section~\ref{sec:proofs}, while Section~\ref{sec:conclusions} concludes the paper.

\section{Mathematical Preliminaries}
\label{sec:preliminaries}

\subsection{Notation and definitions}
We denote by $\real$, $\real_{>0}$, and $\real_{\geq 0}$ the set of real numbers, positive real numbers and nonnegative real numbers, respectively. Similarly, we define $\mathbb{Z}$, $\mathbb{Z}_{>0}$, and $\mathbb{Z}_{\geq 0}$. 
For $n\in\mathbb{Z}_{>0}$,
$\mathbf{0}_n$ is the $n$-dimensional zero vector. 
We also write $[n] = \{ 1, \ldots, n \}$. Vector inequalities of the form $x \leq  (\geq,>,<) ~y$ are entry-wise. Given a matrix  $M \in \mathbb{R}^{N \times N}$, its spectrum $\operatorname{spec}(M)$ is the set of its eigenvalues and its spectral radius $\rho(M)$ and spectral abscissa $\alpha(M)$ are defined by $\rho(M): = \max\{|\lambda|\!:  \lambda \in  \operatorname{spec}(M)\}$ and
$\alpha(M):= \max\{\operatorname{Re}(\lambda)\ \! :  \lambda \in  \operatorname{spec}(M)\}$,  respectively, where $|\lambda|$ and $\operatorname{Re}(\lambda)$ denote the absolute value and the real part of $\lambda$. A matrix $M \in \mathbb{R}^{N \times N}$ is Schur stable if $\rho(M)<1$ and Hurwitz stable if $\alpha(M)<0$.
We define the weighted $\ell_p$ norm $\|\cdot \|_{p,[\eta]}$ on $\mathbb{R}^n$ as  $\|x \|_{p,[\eta]}:=\left(\sum_{i=1}^n \eta_i |x_i|^p\right)^{\frac{1}{p}}$, where $\eta=[\eta_1,..,\eta_n]^\top$.
Given  $\mathcal{X}\subseteq \mathbb{R}^n$ and  $\mathcal{Z}\subseteq \mathbb{R}^m$, we let $\| \cdot\|_{\mathcal{X}}$ and $\| \cdot\|_{\mathcal{Z}}$ be the vector norms on $\mathcal{X}$ and $\mathcal{Z}$, respectively. We say a vector norm $\|\cdot \|$ on $\mathbb{R}^n$ is monotonic if for any $v, w\in\mathbb{R}^n$, $\left|v_i\right| \leq\left|w_i\right|, i \in\{1, \ldots, n\}  \Longrightarrow \|v\| \leq\|w\|$. Next we define the matrix norm $\| \cdot\|_{\mathcal{E}_1 \to \mathcal{E}_2}$, $\mathcal{E}_1,\mathcal{E}_2\in\{\mathcal{X},\mathcal{Z}\}$, as 
    $\| M\|_{\mathcal{E}_1 \to \mathcal{E}_2}:=\max_{ \|v \|_{\mathcal{E}_1}=1 } \|Mv \|_{\mathcal{E}_2}$. 
If $\mathcal{E}_1=\mathcal{E}_2$, we simply write the matrix norm $\| \cdot\|_{\mathcal{X} \to \mathcal{X}}$ (resp. $\| \cdot\|_{\mathcal{Z} \to \mathcal{Z}}$ ) as $\| \cdot\|_\mathcal{X} $ (resp. $\| \cdot\|_\mathcal{Z} $), since they are exactly the matrix norms induced by the corresponding vector norms. We define the induced logarithmic norm (log-norm) $\mu_{\| \cdot\|_\mathcal{X}}(\cdot)$ of matrix $M$ by:
$$
\mu_{\| \cdot\|_\mathcal{X}}(M)=\lim _{h \rightarrow 0^{+}} \frac{\left\|I_n+h M\right\|_\mathcal{X}-1}{h}.
$$

We refer to \cite[Sec.~2.6]{FB:26-CTDS} for the definition and properties of weak pairings. We let $\llbracket \cdot, \cdot
\rrbracket_\mathcal{X}$ denote a weak pairing on $\mathcal{X}$ compatible with the norm $\|\cdot\|_\mathcal{X}$ (such a weak pairing always exists).  Since we only need log-norm and
weak pairing on $\mathcal{X}$, we will omit the sub-index for log-norm and weak pairing throughout this paper.

Given two normed spaces $\left(\mathcal{E}_1,\|\cdot\|_{\mathcal{E}_1}\right),\left(\mathcal{E}_2,\|\cdot\|_{\mathcal{E}_2}\right)$ and  a norm $\|\cdot\|$ on $\mathbb{R}^2$, we define a composite norm $\|\cdot\|_{\textsf{cmp}}$ on $\mathcal{E}_1\times \mathcal{E}_2$ by $\| 
[w^\top,v^\top]^\top \|_{\textsf{cmp}}:=\| [\|w\|_{\mathcal{E}_1}, \|v\|_{\mathcal{E}_2}   ]^\top  \|$, $\forall~w\in\mathcal{E}_1,~\forall~v\in \mathcal{E}_2$~\cite[Sec.~2]{FB:26-CTDS},~\cite{russo2012contraction}.
A map $\sfG: \mathcal{E}_1 \rightarrow \mathcal{E}_2$ is Lipschitz with constant $L \geq 0$ if $\left\|\sfG\left(w_1\right)-\sfG\left(w_2\right)\right\|_{\mathcal{E}_2} \leq L\left\|w_1-w_2\right\|_{\mathcal{E}_1}$, for all $w_1, w_2 \in \mathcal{E}_1$. Furthermore, if $\mathcal{E}_1=\mathcal{E}_2$, the map is one-sided Lipschitz with constant $\Tilde{L}$ if $\llbracket \sfG\left(w_1\right)-\sfG\left(w_2\right) ; w_1-w_2 \rrbracket_{\mathcal{E}_1} \leq \Tilde{L}\|w_1-w_2\|_{\mathcal{E}_1}^2$ for all $w_1, w_2 \in \mathcal{E}_1$. The minimal Lipschitz constant and the minimal one-sided Lipschitz constant of the map $w \mapsto \sfG(w)$ are defined by
$$
\begin{aligned}
\Lip_w(\sfG) & :=\sup _{w_1 \neq w_2} \frac{\|\sfG(w_1)-\sfG(w_2)\|_{\mathcal{E}_2}}{\|w_1-w_2\|_{\mathcal{E}_1}} \in \mathbb{R}_{\geq 0} \\
\osLip_w(\sfG) & :=\sup _{w_1 \neq w_2} \frac{\llbracket \sfG(w_1)-\sfG(w_2) ; w_1-w_2 \rrbracket_{\mathcal{E}_1}}{\|w_1-w_2\|^2_{\mathcal{E}_1}} \in \mathbb{R} .
\end{aligned}
$$
Similarly, we say that the map $v\mapsto\sfF(w,v)$ is uniformly Lipschitz over $w$ if $\Lip_v(\sfF)<+\infty$, where 
\begin{align*}
\Lip_v(\sfF) & := \sup_w  \sup _{v_1 \neq v_2} \frac{\|\sfF(w,v_1)-\sfF(w,v_2)\|}{\|v_1-v_2\|}, 
\end{align*}
where the norms are defined appropriately~\cite[Ch.~2]{FB:26-CTDS}.
The map $v\mapsto\sfF(w,v)$ is uniformly one-sided Lipschitz over $w$ if $\osLip_v(\sfF)<+\infty$, where
\begin{align*}
    \osLip_v(\sfF) & :=\sup_w \sup _{v_1 \neq v_2} \frac{\llbracket \sfF(w,v_1)-\sfF(w,v_2) ; v_1-v_2 \rrbracket}{\|v_1-v_2\|^2},
\end{align*}
and the weak pairing and norm are defined consistently.

\subsection{Contractivity and stability of dynamical systems}

Consider a dynamical system $\dot x=\sfF(x)$, where $\sfF:  \mathcal{C} \mapsto \mathcal{C}$ is  continuous. Assume that $\mathcal{C} \subseteq \mathbb{R}^n$ is convex, and that it is forward invariant for the dynamics $\dot x=\sfF(x)$. Let $t \mapsto \phi (t;x_0)$ be the flow map of the system from an initial point $x(0) := x_0 \in \mathcal{C}$. We give the following definitions of stability and contractivity (and refer the reader to~\cite[Ch.~4]{khalil2002nonlinear} and~\cite[Ch.~3]{FB:26-CTDS} for background and additional details).

\vspace{.1cm}

 \begin{definition}[Exponential stability]
 \label{def:ges}
      Let $x^* \in \mathcal{C}$ be such that $\sfF(x^*)=0$. The system $\dot x=\sfF(x)$ renders the  equilibrium $x^* \in \mathcal{C}$ globally exponentially stable if there exists a norm $\|\cdot\|$ defined on $\mathcal{C}$ and constants $a  >0$ and $r> 0$ such that  $\|\phi (t;x_0)-x^* \|\leq r \e^{-at}\|x_0-x^* \|, ~\forall~t \geq 0$, and $\forall~x_0 \in \mathcal{C}$. \hfill $\Box$
 \end{definition}

\vspace{.1cm}
 
 \begin{definition}[Strongly infinitesimally contracting system]
  \label{def:contraction}
    Given a norm $\|\cdot\|$ on $\mathbb{R}^n$ with a compatible weak pairing $\llbracket \cdot ; \cdot \rrbracket $, the vector field $\sfF$ is strongly infinitesimally contracting on $\mathcal{C}$ with 
    contraction rate $c>0$ if  $\osLip_x(\sfF)\leq -c$. \hfill $\Box$
\end{definition}

\vspace{.1cm}

By~\cite[Theorem~3.7]{FB:26-CTDS}, the vector field $\sfF$ is $c$-strongly
infinitesimally contracting if and only if any two trajectories $\phi
(t;x_0)$ and $\phi (t;y_0)$ of the system $\dot x=\sfF(x)$ satisfy $\|\phi
(t;x_0)-\phi (t;y_0) \|\leq \e^{-ct}\|x_0-y_0 \|$, for all $t \geq 0$. In
addition, if $\sfF$ is $c$-strongly infinitesimally contracting, then there
exists a unique equilibrium $x^*$ that is globally exponentially stable at
rate $c$.

For a given $T \in \mathbb{R}_{>0}$, consider the sampled trajectories  $\phi (kT;x_0)$ and $\phi (kT;y_0)$, $k \in \mathbb{N}$,  of the system $\dot x=\sfF(x)$ for initial conditions $x_0, y_0 \in \mathcal{C}$. We state the following definition. 

\vspace{.1cm}

 \begin{definition}[Discrete-time contracting flow map]
  \label{def:DTcontraction}
    Given a norm $\|\cdot\|$ on $\mathbb{R}^n$ and $T>0$, the continuous-time system $\dot x=\sfF(x)$ is discrete-time contracting on $\mathcal{C}$ if 
   $\exists~b \in [0,1)$ such that  $\|\phi (kT;x_0) - \phi (kT;y_0)\| \leq b^k \|x_0 - y_0\|$  for all $k \in \mathbb{N}$ and $x_0, y_0 \in \mathcal{C}$. \hfill $\Box$
\end{definition}

\vspace{.1cm}

Definition~\ref{def:DTcontraction} considers samples of a trajectory $\phi (t;x_0;y_0)$ at times $t = kT, k \in \mathbb{N}$. This is a key distinction relative to Definition~\ref{def:contraction}, since the transient bound applies to only the time instants $t = kT, k \in \mathbb{N}$, and not on the distance between trajectories $\phi (\tau + kT;x_0) - \phi (\tau + kT;y_0)$ during the open interval  $\tau + kT, \tau \in (0,T)$. Finally, we have the following definition for discrete-time dynamical systems.

\vspace{.1cm}

 \begin{definition}[Contracting discrete-time system]
  \label{def:contraction-DTsystem}
    Consider a discrete-time dynamical system $x_{k+1} =\sfG(x_{k})$, $k \in \mathbb{N}$, where $\sfG:  \mathbb{R}^n \mapsto \mathbb{R}^n$ is  Lipschitz continuous. Given a norm $\|\cdot\|$ on $\mathbb{R}^n$, the system $x_{k+1} =\sfG(x_{k})$ is contracting (in discrete time) if $\Lip_x(\sfG) <1$.  \hfill $\Box$
\end{definition}

\vspace{.1cm}

Hereafter, we use the shorthand terms CT and DT to refer to \emph{continuous-time} and \emph{discrete-time} systems, respectively.

\section{Interconnection of a Continuous-time System and a Discrete-time System}
\label{sec:ctdt-interconnection}

\subsection{Model setup, assumptions, and research goals}

We consider the interconnection of a CT system and of a DT system of the form (see also Fig.~\ref{fig:block-diagram}):
\begin{subequations}
\label{eq:sample_data}
\begin{align}
    \dot x(t) & = f(x(t),z(t)) \label{eq:sample_data-plant}\\
    z_{k} & = \sfGn(x(kT),z_{k-1}),  \label{eq:sample_data-discrete} \\
   z(t) & = z_{k} , \,\, t \in [kT, (k+1)T) \label{eq:sample_data-z}
\end{align}
\end{subequations}
where $t \in \mathbb{R}_{\geq 0}$ denotes time, $k \in \mathbb{Z}_{\geq 0}$ is the index for the updates of the DT sub-system,  $T > 0$ is a given time interval, $x \in \mathcal{X}\subseteq\mathbb{R}^{n_x}$ and $z \in\mathcal{Z}
\subseteq\real^{n_z}$ are the states of the CT sub-system and of the  DT sub-system, respectively, $\mathcal{X}$ and $\mathcal{Z} $ are convex and forward invariant  for~\eqref{eq:sample_data}, and   $f: \mathcal{X} \times \mathcal{Z} \to \mathcal{X}$ is continuous in its arguments; finally $\sfGn(x,z)$ is defined as $\sfG^1(x,z)=\sfG(x,z), \ldots, \sfGn(x,z) = \sfG(x,
\sfG^{n-1}(x,z))$, $n\in\mathbb{Z}_{>0}$, with $\sfG(x,z)$ continuous in its arguments. In other words, $\sfGn(x,z)$ is given by the $n$-fold composition of the map $z \mapsto \sfG(x,z)$, for fixed $x$.  Hereafter, we denote the (full) state of the interconnected system \eqref{eq:sample_data} as $y(t) := [x(t)^\top, z(t)^\top]^\top$.

We first clarify how  Definitions~\ref{def:ges} and \ref{def:DTcontraction}  apply to the interconnected system~\eqref{eq:sample_data} (as on the right-hand side of the diagram in Fig.~\ref{fig:brief_diagram}).
Following Definition~\ref{def:ges}, the interconnected system~\eqref{eq:sample_data} renders the  origin GES if, for any norm $\|\cdot\|$  on $\mathcal{X}\times\mathcal{Z}$, there exist constants $a > 0$ and $r> 0$ such that 
\begin{equation}\label{eq: def_GES}
    \| y(t)\| \leq r\e^{-at}   \| y(0)\|
\end{equation}
for any $y(0) \in \mathcal{X}\times\mathcal{Z}$.
For a given $T>0$, we say that the  interconnected system~\eqref{eq:sample_data} is $T$-discrete-time \emph{contractive} (in short, $T$-DTC) if there exists a norm $\|\cdot \|$ on $\mathcal{X}\times\mathcal{Z}$ and a constant $b \in (0,1)$ such that,  for any two solutions $\solyone(t)$ and $\solytwo(t)$ and any $k\in\mathbb{Z}_{>0}$, it holds that 
\begin{equation}\label{eq: def_contraction_dt}
    \|  \solyone(kT) - \solytwo(kT)\| \leq b^k \| \solyone(0) - \solytwo(0)\|
\end{equation}
for any $\solyone(0), \solytwo(0) \in \mathcal{X}\times\mathcal{Z}$ (cf.~Definition~\ref{def:DTcontraction}).

To set the foundation for the subsequent discussion, we present the main working assumptions adopted in this paper.

\vspace{.1cm}

\begin{assumption}[Lipschitz CT dynamics]\label{as: existence-of-weak-pairing}
   The map $x \mapsto f(x,z)$ is uniformly Lipschitz
   over $\mathcal{Z}$.  \hfill $\Box$
\end{assumption}

\vspace{.1cm}

\begin{assumption}[Lipschitz interconnection]
\label{as:Lipinterconnection-2}
The map $z \mapsto f(x,z)$  is uniformly Lipschitz (over $x \in \mathcal{X}$)  with respect to $\|\cdot\|_\mathcal{X}$ and $\|\cdot\|_\mathcal{Z}$ with constant $\Lip_z(f) > 0$. 
Similarly, the map $x \mapsto \sfG(x,z)$  is uniformly Lipschitz (over $z \in \mathcal{Z}$) with constant $\Lip_x(\sfG) > 0$.
\hfill $\Box$
\end{assumption}

\vspace{.1cm}

\begin{assumption}[Contractive DT dynamics]
\label{as:contrDiscr-2}
The map $z \mapsto \sfG(x,z)$ is uniformly Lipschitz  with constant $0<\Lip_z(\sfG) < 1$ with respect to $\|\cdot\|_\mathcal{Z}$. \hfill $\Box$
\end{assumption}

\vspace{.1cm}

Under Assumption~\ref{as: existence-of-weak-pairing}, the map $x \mapsto f(x,z)$ is uniformly one-sided Lipschitz with respect to weak pairing  $\llbracket \cdot ; \cdot \rrbracket_\mathcal{X} $ and norm $\|\cdot\|_{\mathcal{X}}$; equivalently, $ \osLip_x(f)$ is finite. Due to the Picard-Banach-Caccioppoli theorem~\cite{banach1922operations,caccioppoli1930teorema},
Assumption~\ref{as:contrDiscr-2} implies  that for any $x\in\mathcal{X}$, there exists a unique fixed point $z^*(x)$ for the map $\sfG(x,z)$;  i.e., $z^*(x)=\sfG(x,z^*(x))$. Moreover, for any fixed $x$, one has that $\lim_{n\to +\infty }\sfGn(x,z)=z^*(x)$. With this in mind, the DT system~\eqref{eq:sample_data-discrete} models the execution of $n$ Picard iterations, utilized to approximately calculate $z^*(x)$. Without loss of generality, and to streamline notation, we assume that $f(\mathbf{0}_{n_x},\mathbf{0}_{n_z})=\mathbf{0}_{n_x}$, $\sfG(\mathbf{0}_{n_x},\mathbf{0}_{n_z})=\mathbf{0}_{n_z}$, and $z^*(\mathbf{0}_{n_x}) = \mathbf{0}_{n_z}$.\footnote{If the equilibrium of the interconnected system~\eqref{eq:sample_data} is not the origin, all the arguments in this paper can be adopted upon performing a variable shift (see the standard discussion in, e.g.,~\cite[Ch.~4]{khalil2002nonlinear} and~\cite[Ch.~3]{FB:26-CTDS}).}

To ground our modeling approach, we describe its application to optimization-based control in the following remarks. 

\vspace{.1cm} 

\begin{remark}[Link to optimization-based control]
\label{example:optimization-based-control} In several optimization-based control frameworks such as MPC~\cite{rawlings2017model,gros2020linear}, CBF-based safety filters~\cite{ames2019control,xiao2023safe} and CLF-CBF extensions~\cite{garg2019control},  pointwise minimum norm control~\cite{freeman2008robust}, and online feedback optimization~\cite{colombino2019online, hauswirth2020timescale}, the feedback control law is implicitly defined as the optimal solution map of an optimization problem associated with the sub-system~\eqref{eq:sample_data-plant} (which represents a plant to be controlled). The map $z^*(x)$ in our framework represents the optimal solution map of an optimization-based controller.

  The sub-system~\eqref{eq:sample_data-discrete} is designed to capture the practical settings where: \emph{i)} the state of the system may be measured at given intervals $\{kT, k \in \mathbb{Z}_{\geq 0}\}$; and \emph{ii)} the optimization problem used to compute $z^*(x)$ is solved by an iterative algorithm and may not be solved \emph{to convergence} (i.e., only a finite number $n$ of algorithmic iterations are performed); this is because of computational complexity or computational time  considerations~\cite{liao2021analysis,van2024suboptimal}, or because solving the problem requires a pervasive sensing of disturbances~\cite{colombino2019online}. Moreover, the subsystem~\eqref{eq:sample_data-discrete} reduces to the case where $z^*(x)$ is available in closed-form by setting $\mathsf{G}(x,z)\equiv z^*(x)$, for any $z\in\mathcal{Z}$.  We  provide a more in-depth discussion of MPC implementations in Section~\ref{sec:mpc}.
  
\hfill $\Box$
\end{remark}

\vspace{.1cm}

Given this setup, the goal of this paper is threefold:

\noindent \emph{(Research goal \#1)} First, inspired by classical formulations of two-time-scale CT-CT systems~\cite{kokotovic1968singular}, we aim to formalize the notion of a \emph{reduced model} (RM) associated with~\eqref{eq:sample_data}. The parameters $T$ and $n$ are interpreted as inducing a ``time-scale separation'' between the subsystems — with $T$ capturing the frequency of updates of $z(t)$, and 
$n$ reflecting how the computation of $z^*(x)$
 is distributed over time.

\noindent \emph{(Research goal \#2)} A second goal is to unveil conditions on the RM and the CT subsystem~\eqref{eq:sample_data-plant} to ensure contractivity or stability of the interconnected system~\eqref{eq:sample_data} (in the sense of Definitions~\ref{def:ges}--\ref{def:DTcontraction}), while also elucidating how these conditions depend on the parameters $T$ and $n$.

\noindent \emph{(Research goal \#3)} Finally, we aim to extend classical  network contraction theorems for interconnected CT systems~\cite[Sec.~3.6]{FB:26-CTDS} to the system~\eqref{eq:sample_data}, and to compare the resulting conditions with those obtained in (ii).

\vspace{-0.1cm}
\subsection{The reduced model}

We define the \emph{reduced model}  associated with the interconnected system~\eqref{eq:sample_data} as:
\vspace{-0.1cm}
\begin{equation}\label{eq: reduced dynamics}    
   \textsf{RM:~~~~} \dot x(t)=f(x(t),z^*(x(t))) ,  \,\,\,\,\, t \in \mathbb{R}_{\geq 0}, 
\end{equation}
$x(0) \in \mathcal{X}$. Comparing~\eqref{eq: reduced dynamics} and~\eqref{eq:sample_data}, and in light of Assumption~\ref{as:contrDiscr-2},~\eqref{eq: reduced dynamics} can be interpreted as a limiting system  where: 
\begin{itemize}
    \item $n \rightarrow + \infty$, so that $z^*(x)$ is computed for any $x$; and
    \item $T \rightarrow 0^+$, so that $z^*(x)$ is applied at all times. 
\end{itemize}

To further motivate the \emph{(Research goals \#2 and \#3)} and to highlight challenges and the inherent limitations imposed by discrete sampling, we start by  examining the role of $T$, which determines the update frequency of $z(t)$. While the map $x \mapsto z^*(x)$ may be designed so that the reduced model
$f(x,z^*(x))$ enjoys stability or contractivity properties under
continuous-time feedback, such properties are not necessarily preserved under sampled-data
implementations, thereby requiring restriction on $T$.
To see this, we consider the system
\vspace{-0.4cm}
\begin{subequations}
\label{eq:rm-sampled}
\begin{align}
    \dot{x}(t) & = f(x(t),z(t)) \\
    z(t) & = z^*(x(kT)) ,  \,\, t \in [kT, (k+1)T)
\end{align}
\end{subequations}
for some $x(0) \in \mathcal{X}$ and $z(0) \in \mathcal{Z}$. 
This is a version of~\eqref{eq:sample_data} where $n \rightarrow + \infty$ but $T > 0$.  The following simple example shows that~\eqref{eq:rm-sampled} may be unstable if $T$ is too large.

\vspace{.1cm} 

\begin{example}[Large $T$ may not preserve stability]
\label{example:Low-dimensional-LTI}
Consider a special case of \eqref{eq:sample_data} where $\dot x=ax+bz$, $z_k=\sfGn(x_k,z_{k-1})$, $\sfG(x,z)=cx+dz$, where $a$, $b$, $c$, $d$, $x$, $z$ $\in\mathbb{R}$. Suppose that $a> 0$ and, based on Assumption~\ref{as:contrDiscr-2}, that  $|d|<1$. Then, it follows that $z^*(x)=c/(1-d)x$.  Assume that $a+bc/(1-d)<0$; this ensures that the RM  $\dot x=(a+bc/(1-d))x$ renders the origin GES. As in system~\eqref{eq:sample_data}, suppose now that $z^*(x)$ is updated every $T > 0$ due to underlying computational and sampling constraints; the dynamics are then $\dot x(t)=a x(t)+ bz(t)$, $z(t) = \frac{c}{1-d}x((k-1)T)$ for $t\in[(k-1)T,kT)$. Note that:
\begin{align*}
    x(kT)&=\e^{aT}x((k-1)T)+\int_{(k-1)T}^{kT} \e^{a(kT-\tau) } bz ~d\tau\\
&=\left(\e^{aT}+\frac{bc}{1-d}\frac{\e^{aT}-1}{a}\right) x((k-1)T)\\
&=\left(\e^{aT}+\frac{bc}{1-d}\frac{\e^{aT}-1}{a}\right)^k x(0).
\end{align*}
Since $a+\frac{bc}{1-d}<0$ and $a>0$, it follows that $\frac{bc}{a(1-d)}<-1$.
Then, for any $T\geq \frac{1}{a}\log \left( \frac{-2a(1-d)+bc}{a(1-d)+bc} \right)$, we have $\left(\e^{aT}+\frac{bc}{1-d}\frac{\e^{aT}-1}{a}\right)\leq-2$, implying that the system $\dot x(t)=ax(t)+bz(t)$, $z(t) = \frac{c}{1-d}x((k-1)T)$ for $t\in[(k-1)T,kT)$ is unstable. \hfill $\Box$
\end{example}

\vspace{.1cm} 

This example shows that, even if the RM
$f(x,z^*(x))$  is designed to exhibit desirable stability or contractivity properties in the sense of Definitions~\ref{def:ges}--\ref{def:contraction}, these properties may not be preserved if $T$ is too large. We also note that there is no universal bound on $T$ to guarantee stability: a counterexample can be easily constructed, similar to Example \ref{example:Low-dimensional-LTI}: for any $T>0$, we let $a=\frac{2}{T}$, $|d|<1$ and  $a+\frac{bc}{1-d}=-a$; it then follows that  $\left(\e^{aT}+\frac{bc}{1-d}\frac{\e^{aT}-1}{a}\right)=2-\e^2<-2$ and such system is unstable.  Therefore, a bound on the sampling time required to preserve stability must depend on the properties of $f$ and $\sfG$. In the ensuing Section~\ref{sec: main results}, we will examine the role of $T$ alongside $n$.

\vspace{.1cm} 

\begin{remark}[Connections with classical CT-CT systems]
Consider a classical two-time-scale CT system of the form~\cite{kokotovic1968singular}: 
\begin{equation}
 \label{eq:two-time-system-ct}
    \dot x  = f(x,z),~~~  \epsilon \dot z  = g(x,z)
\end{equation}
where $\epsilon>0$ induces a separation between slow and fast dynamics. Under the assumption that, for each $x$, there exists a unique equilibrium $z^\star(x)$ satisfying $g(x,z^\star(x))=0$, the associated reduced model is 
$\dot x_{\mathrm{r}} = f(x_{\mathrm{r}}, z^\star(x_{\mathrm{r}}))$, obtained by letting $\epsilon\to 0^+$ in the fast subsystem. Stability and contractivity properties of such systems have been extensively studied; see, e.g.,~\cite{kokotovic1968singular,kokotovic1976singular,del2012contraction,khalil2002nonlinear}.

Our CT–DT interconnection~\eqref{eq:sample_data} admits a closely related interpretation. 
For fixed $x$, \eqref{eq:sample_data-discrete} has a unique equilibrium $z^*(x)$ satisfying $\mathsf{G}(x,z^*(x))=z^*(x)$ under Assumption \ref{as:contrDiscr-2}. 
In analogy with the classical CT--CT singular perturbation framework~\cite[Chapter~11]{khalil2002nonlinear}, one may view~\eqref{eq:sample_data-discrete} as a ``fast'' subsystem and~\eqref{eq:sample_data-plant} as its ``slow'' counterpart. However, unlike the CT--CT case, the ``reduced dynamics'' (i.e., \eqref{eq:rm-sampled}, obtained by letting $n\to +\infty$) need not be stable unless $T$ is sufficiently small. This motivates considering the limiting form of~\eqref{eq:rm-sampled}, obtained by letting $T \to 0^+$, which yields~\eqref{eq: reduced dynamics}.
 This viewpoint highlights how the pair $(T,n)$ jointly plays the role of the singular-perturbation parameter $\epsilon$ in~\eqref{eq:two-time-system-ct}.
\hfill$\Box$ 
\end{remark}

\subsection{Supporting results: a fundamental limitation, discrete-time contractivity, and continuous-time stability}

Before presenting the main technical results associated with the implication diagram in Fig.~\ref{fig:brief_diagram}, we highlight a fundamental limitation
of the CT–DT system: continuous-time contraction cannot be
expected under sampled-data implementations. This motivates a shift in perspective, whereby contraction properties are instead characterized in discrete time and combined with continuous-time stability arguments.

To begin with, we note that the interconnected system \eqref{eq:sample_data} is not contracting in
each open interval $(kT,(k+1)T)$. To see this, 
 any solution $y(t)$ to system \eqref{eq:sample_data},  $t \in (kT, (k+1)T)$, is also a solution of the following system: 
\begin{equation}\label{eq: constant-input system}
    \dot y=F(y):=\begin{bmatrix}
        f(x,z)\\~\mathbf{0}_{n_z}
    \end{bmatrix}.
\end{equation}
By \cite[Lemma~3.4]{FB:26-CTDS}, one has that $\osLip(F)\geq \mu(DF(y))$, where $DF(y)$ is the Jacobian of $F$. We note that $DF(y)$
has an eigenvalue $0$ and $ \mu(DF(y))\geq \alpha(DF(y))\geq 0$ holds by
\cite[Theorem~2.9(ii)]{FB:26-CTDS}. Hence, \eqref{eq: constant-input system} cannot be contracting.

In the following, we seek conditions for the interconnected system \eqref{eq:sample_data} to be $T$-DTC and GES as in equations~\eqref{eq: def_GES} and \eqref{eq: def_contraction_dt}. As a first step, we investigate the relationship between $T$-DTC and GES. To do so, we present a bound on the distance between trajectories within an interval $[kT, (k+1)T)$ and we show that $T$-DTC implies GES. 

\vspace{.1cm}

\begin{lemma}[Bound within $[kT, (k+1)T)$]
\label{lem:bound-x-by-xkT} Consider the interconnected system \eqref{eq:sample_data}, and let Assumptions~\ref{as: existence-of-weak-pairing} and \ref{as:Lipinterconnection-2} hold. Consider any two solutions $[\solxone(t)^\top,\solzone(t)^\top]^\top$ and $[\solxtwo(t)^\top,\solztwo(t)^\top]^\top$ to system \eqref{eq:sample_data}, and define $\tx(t) := \solxone(t) - \solxtwo(t)$ and $\tz(t) := \solzone(t) - \solztwo(t)$. For any $\tau \in [0, T)$, it holds that: 
\begin{equation}
\hspace{-.2cm} \begin{bmatrix}
    \| \tx(\tau + kT)\|_\mathcal{X}\\
        \left\| \tz(\tau + kT)\right\|_\mathcal{Z}
    \end{bmatrix}
    \leq  \mathcal{B}(T)  \begin{bmatrix}
    \left\| \tx(kT)\right\|_\mathcal{X}\\
        \left\| \tz(kT)\right\|_\mathcal{Z}
    \end{bmatrix}
\end{equation}
where the matrix $\mathcal{B}(T) \in \mathbb{R}^{2 \times 2}_{\geq 0}$ is defined by
\begin{align}
    \label{eq:def_B}
    \mathcal{B}(T) := 
    \begin{bmatrix}
       \max\{\e^{\xi T},1\} & \Lip_z(f) \int_0^T \e^{\xi(T-s)} ds \\ 0 & 1
    \end{bmatrix} \, ,
\end{align}
where $\xi < + \infty$ satisfies $\osLip_x(f)\leq\xi$.  \hfill $\Box$
\end{lemma}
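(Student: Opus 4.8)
The plan is to analyze each sampling interval $[kT,(k+1)T)$ in isolation, exploiting the fact that the zero-order hold freezes the DT state there. The bottom row of the claimed inequality is immediate: by~\eqref{eq:sample_data-z} we have $\solzone(\tau+kT)=\solzone(kT)$ and $\solztwo(\tau+kT)=\solztwo(kT)$ for every $\tau\in[0,T)$, hence $\tz(\tau+kT)=\tz(kT)$ and $\|\tz(\tau+kT)\|_\mathcal{Z}=\|\tz(kT)\|_\mathcal{Z}$, which is exactly the relation encoded by the row $[\,0\ \ 1\,]$ of $\mathcal{B}(T)$.

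For the top row I would track the growth of $\|\tx\|_\mathcal{X}$ through a weak-pairing differential inequality. On $[kT,(k+1)T)$ the error dynamics read $\dot{\tx}=f(\solxone,\solzone(kT))-f(\solxtwo,\solztwo(kT))$, and I would split this increment as $[f(\solxone,\solzone(kT))-f(\solxtwo,\solzone(kT))]+[f(\solxtwo,\solzone(kT))-f(\solxtwo,\solztwo(kT))]$. The curve norm derivative formula for the weak pairing $\llbracket\cdot;\cdot\rrbracket_\mathcal{X}$~\cite[Sec.~2.6]{FB:24-CTDS} gives $\|\tx\|_\mathcal{X}\,D^+\|\tx\|_\mathcal{X}\leq\llbracket\dot{\tx};\tx\rrbracket_\mathcal{X}$, and subadditivity of the weak pairing in its first argument lets me treat the two increments separately. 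The first is controlled by the one-sided Lipschitz estimate $\llbracket f(\solxone,\solzone(kT))-f(\solxtwo,\solzone(kT));\tx\rrbracket_\mathcal{X}\leq\osLip_x(f)\|\tx\|_\mathcal{X}^2\leq\xi\|\tx\|_\mathcal{X}^2$ (finite by Assumption~\ref{as: existence-of-weak-pairing}), while the second is controlled by Cauchy--Schwarz for weak pairings together with Assumption~\ref{as:Lipinterconnection-2}, namely $\llbracket f(\solxtwo,\solzone(kT))-f(\solxtwo,\solztwo(kT));\tx\rrbracket_\mathcal{X}\leq\Lip_z(f)\|\tz(kT)\|_\mathcal{Z}\|\tx\|_\mathcal{X}$. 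Dividing by $\|\tx\|_\mathcal{X}$ (at points where it is nonzero, extending by continuity) yields the scalar linear differential inequality $D^+\|\tx(\tau+kT)\|_\mathcal{X}\leq\xi\|\tx(\tau+kT)\|_\mathcal{X}+\Lip_z(f)\|\tz(kT)\|_\mathcal{Z}$.

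Integrating this inequality by the Gr\"onwall comparison lemma over $[0,\tau]$, and again using $\|\tz(\tau'+kT)\|_\mathcal{Z}=\|\tz(kT)\|_\mathcal{Z}$, gives $\|\tx(\tau+kT)\|_\mathcal{X}\leq\e^{\xi\tau}\|\tx(kT)\|_\mathcal{X}+\Lip_z(f)\|\tz(kT)\|_\mathcal{Z}\int_0^\tau\e^{\xi(\tau-s)}\,ds$. To reach the $\tau$-uniform constants appearing in $\mathcal{B}(T)$, I would then note that $\e^{\xi\tau}\leq\max\{\e^{\xi T},1\}$ on $[0,T)$ irrespective of the sign of $\xi$, and that after the substitution $u=\tau-s$ the map $\tau\mapsto\int_0^\tau\e^{\xi u}\,du$ is monotonically increasing, so $\int_0^\tau\e^{\xi(\tau-s)}\,ds\leq\int_0^T\e^{\xi(T-s)}\,ds$. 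These two bounds reproduce the top-row entries of $\mathcal{B}(T)$.

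I expect the main obstacle to be the weak-pairing/Dini-derivative step: one must justify the curve norm derivative formula along $\tx$ and verify that subadditivity and Cauchy--Schwarz for the weak pairing combine with the one-sided Lipschitz constant $\osLip_x(f)$ (whose finiteness is the content of Assumption~\ref{as: existence-of-weak-pairing}) to bound $\llbracket\dot{\tx};\tx\rrbracket_\mathcal{X}$. Once this inequality is in place, the remainder is a routine scalar comparison argument; the only mild subtlety is the monotonicity check underlying the uniform-in-$\tau$ bound, which must hold for both signs of $\xi$.
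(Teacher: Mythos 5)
Your proposal is correct and follows essentially the same route as the paper: the paper also exploits that the zero-order hold freezes $\tz$ on $[kT,(k+1)T)$ and then invokes its incremental ISS-type bound (Lemma~\ref{lemma:Input-state stability properties}, itself a mild generalization of \cite[Corollary~3.17]{FB:24-CTDS}) with the constant inputs $z(kT)$, $\bar z(kT)$, followed by the same $\max\{\e^{\xi T},1\}$ and integral-monotonicity bounds to make the estimate uniform in $\tau$. The only difference is that you prove that ISS bound inline — via the curve-norm-derivative/weak-pairing decomposition and Gr\"onwall — rather than citing it, which is precisely how the cited result is established, so the mathematical content coincides.
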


\vspace{.1cm}

This result follows from Lemma~\ref{lemma:Input-state stability properties} presented in Section~\ref{sec:proofs}. In the rest of the paper, the integral of the form $\int_0^t \e^{c(t-s)} ds$ will be used frequently. To simplify the notation, we define the function $h:\mathbb{R}_{\geq 0}\times \mathbb{R}\to\mathbb{R}$ as
\begin{equation}\label{eq:def-of-h}
h(t,c):=\int_0^t \e^{c(t-s)} ds=\begin{cases}
    \frac{\e^{c t} -1}{c},~&c\neq 0\\
    t,~&c= 0
\end{cases}.
\end{equation}
We note that $h(t,c)$ is always non-negative and is equal to $0$ if and only if $t=0$. Moreover, for any $c$, $t \mapsto h(t,c)$ is strictly increasing since $\frac{\partial h(t,c)}{\partial t}= \e^{c t}$ or $1$. Furthermore, for any fixed $t>0$, $c \mapsto h(t,c)$ is also strictly increasing. Indeed, $\frac{\partial h(t,c)}{\partial c}=\frac{tc\e^{ct}-\e^{ct}+1}{c^2}=\frac{\e^{ct}}{c^2}(\e^{-ct}-(-ct+1))> 0$ if $c\neq 0$, and $\frac{\e^{c_1 t} -1}{c_1}\leq \frac{c_1 t}{c_1}=t=   \frac{c_2 t}{c_2}\leq\frac{\e^{c_2 t} -1}{c_2}$ for any $c_1<0$ and any $c_2>0$.

Leveraging Lemma \ref{lem:bound-x-by-xkT}, we show that $T$-DTC implies GES (and recall that $y = [x^\top, z^\top]^\top$). 

\vspace{.1cm}

\begin{proposition}[$T$-DTC implies GES]
\label{prop:contractivity-implies-GES}
Consider the system \eqref{eq:sample_data} and let Assumptions~\ref{as: existence-of-weak-pairing} and~\ref{as:Lipinterconnection-2} hold.  Let  $\| \cdot\|$ be a monotonic norm on $\mathbb{R}^2$, and let $\|y\|_{\textsf{cmp}}:=\| [\| x\|_\mathcal{X},\| z\|_\mathcal{Z} ]^\top\|$. Let $\xi < + \infty$ be such that $\osLip_x(f)\leq\xi$. If 
\begin{align*}
     \|  \solyone(kT) - \solytwo(kT) \|_{\textsf{cmp}} \leq b^k \| \solyone(0) - \solytwo(0) \|_{\textsf{cmp}}, \,\, k \in \mathbb{Z}_{\geq 0} ,  
\end{align*}
for some $b \in (0,1)$ and for any two solutions $\solyone(t), \solytwo(t)$, then the origin is GES for the interconnected system \eqref{eq:sample_data}. In particular,  it holds that
\begin{equation}
   \|y(t)\|_{\textsf{cmp}} \leq r\e^{-ct}  \|y(0)\|_{\textsf{cmp}}, \,\, t \geq 0 , 
\end{equation}
where $r:=\left\| \mathcal{B}(T)
    \right\| b^{-1}$ and $c:=-T^{-1}\ln b$. \hfill $\Box$
\end{proposition}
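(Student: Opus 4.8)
The plan is to combine the intra-interval growth bound of Lemma~\ref{lem:bound-x-by-xkT} with the assumed inter-sample contraction factor $b$, and then convert the resulting geometric decay in the discrete index $k$ into a continuous-time exponential. The key preliminary observation is that the origin is an equilibrium of \eqref{eq:sample_data}: by the normalization $f(\mathbf{0},\mathbf{0})=\mathbf{0}$, $\sfG(\mathbf{0},\mathbf{0})=\mathbf{0}$, $z^*(\mathbf{0})=\mathbf{0}$, the identically-zero trajectory $\bar y(t)\equiv\mathbf{0}$ is a genuine solution of \eqref{eq:sample_data}. Hence I would instantiate both the Lemma and the contraction hypothesis with the second solution taken to be the zero trajectory, which turns every difference bound on $\tx,\tz$ into a bound on the components $\|x(t)\|_{\mathcal X}$, $\|z(t)\|_{\mathcal Z}$, and thus on $\|y(t)\|_{\textsf{cmp}}$.

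First I would fix an arbitrary $t\ge 0$ and write $t=kT+\tau$ with $k=\lfloor t/T\rfloor$ and $\tau\in[0,T)$. Applying Lemma~\ref{lem:bound-x-by-xkT} with $\bar y\equiv\mathbf{0}$ gives the entrywise inequality
\begin{equation*}
\begin{bmatrix}\|x(t)\|_{\mathcal X}\\ \|z(t)\|_{\mathcal Z}\end{bmatrix}
\;\le\; \mathcal B(T)\begin{bmatrix}\|x(kT)\|_{\mathcal X}\\ \|z(kT)\|_{\mathcal Z}\end{bmatrix},
\end{equation*}
where the right-hand side is a nonnegative vector since $\mathcal B(T)\in\mathbb R^{2\times2}_{\ge0}$. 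Because $\|\cdot\|$ is monotonic and both sides are nonnegative, applying $\|\cdot\|$ preserves the inequality, and the submultiplicativity of the induced matrix norm then yields
\begin{equation*}
\|y(t)\|_{\textsf{cmp}}
\;\le\; \big\|\mathcal B(T)\big\|\,\big\|\,[\,\|x(kT)\|_{\mathcal X},\ \|z(kT)\|_{\mathcal Z}\,]^{\top}\big\|
\;=\; \big\|\mathcal B(T)\big\|\,\|y(kT)\|_{\textsf{cmp}}.
\end{equation*}

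Next I would invoke the contraction hypothesis, again with $\bar y\equiv\mathbf{0}$, to bound the sampled term by $\|y(kT)\|_{\textsf{cmp}}\le b^{k}\|y(0)\|_{\textsf{cmp}}$, giving $\|y(t)\|_{\textsf{cmp}}\le \|\mathcal B(T)\|\,b^{k}\|y(0)\|_{\textsf{cmp}}$. The final step is the exponential conversion: since $b\in(0,1)$ and $k=\lfloor t/T\rfloor\ge t/T-1$, I have $b^{k}\le b^{t/T-1}=b^{-1}b^{t/T}=b^{-1}\e^{(\ln b)\,t/T}$. Identifying $r:=\|\mathcal B(T)\|\,b^{-1}$ and $c:=T^{-1}\ln b<0$ reproduces the stated bound $\|y(t)\|_{\textsf{cmp}}\le r\,\e^{ct}\|y(0)\|_{\textsf{cmp}}$; since $\ln b<0$, this is genuine decay at the strictly positive rate $-c=T^{-1}\ln(1/b)$, so the origin is GES in the sense of Definition~\ref{def:ges}.

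The argument is short because Lemma~\ref{lem:bound-x-by-xkT} does the heavy lifting, and the points requiring care are essentially bookkeeping. The main one is the \emph{simultaneous} use of monotonicity of $\|\cdot\|$ and nonnegativity of $\mathcal B(T)$: both are needed for the step that applies the composite norm to the entrywise vector inequality and lands on $\|\mathcal B(T)\|\,\|y(kT)\|_{\textsf{cmp}}$ through the induced matrix norm. The other is the legitimacy of choosing the zero trajectory as the comparison solution, which relies on the equilibrium normalization. Once these are secured, the floor estimate $k\ge t/T-1$ is the only quantitative computation, and it is precisely what produces the prefactor $b^{-1}$ inside $r$.
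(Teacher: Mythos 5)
Your proposal is correct and follows essentially the same route as the paper's own proof: apply Lemma~\ref{lem:bound-x-by-xkT} on the interval $[kT, kT+\tau)$, pass to the composite norm via monotonicity and the induced matrix norm of $\mathcal{B}(T)$, invoke the discrete-time contraction hypothesis to get the factor $b^k$, and convert $b^k$ into an exponential in $t$ at the cost of the prefactor $b^{-1}$ (your floor estimate $k \geq t/T - 1$ is just a repackaging of the paper's computation $b^k = b^{-\tau/T}\left(b^{1/T}\right)^{kT+\tau} \leq b^{-1}\left(b^{1/T}\right)^{t}$). Two small points in your favor: you make explicit that the comparison trajectory is the zero solution (legitimate by the equilibrium normalization, a step the paper leaves implicit), and your sign bookkeeping $c = T^{-1}\ln b < 0$ with decay $\e^{ct}$ quietly corrects a sign inconsistency in the paper's stated bound $\e^{-ct}$.
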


\vspace{.1cm}

The proof of this proposition is provided in Section~\ref{sec:proofs}. Proposition~\ref{prop:contractivity-implies-GES} asserts that, if the interconnected system \eqref{eq:sample_data} is $T$-DTC with rate $b$ with respect to a composite norm $\|\cdot\|_{\textsf{cmp}}$, then the system renders the origin GES with respect to the same composite norm. The result leverages the notion of composite norm; we refer the reader to~\cite[Ch.~2.7]{FB:26-CTDS} and~\cite{russo2012contraction} for a more in-depth discussion on  composite norms. 

With these preliminary results established, we now proceed to examine the conditions for contractivity and stability of the interconnected system \eqref{eq:sample_data}.

\subsection{Main results}
\label{sec: main results}

We now present the main technical results of the paper for the interconnected sampled-data system~\eqref{eq:sample_data}. The implication diagram in Fig.~\ref{fig:brief_diagram} is
reformulated in the more detailed and rigorous diagram in Fig.~\ref{fig:general_sys_diagram}, which summarizes the key quantitative
conditions and provides a structured overview of the logical relationships among the various sufficient conditions established in this section. The implication $\circled{2}\!\implies\!\circled{4}$ is immediate,
while the remaining implications are formalized by 
three corresponding theorems, presented next.

\begin{figure}[!ht]
    \centering

\noindent \framebox[.98\width]{
\begin{tikzpicture}
[>=stealth, ->, line width=1.5pt]
\node[align=center] (vc) at (-3.4,3.5) {\small{\textsf{Implication diagram for the general nonlinear system~\eqref{eq:sample_data}}} };
\node[align=center] (v1_1_c) at (-7.4,2.95) {\circled{1}};
\node[align=center] (v1_1_c) at (-2.8,2.95) {\circled{2}};
\node[align=center] (v2_1_c) at (-7.4,.7) {\circled{3}};
\node[align=center] (v2_2_c) at (-2.8,.7) {\circled{4}};
\node[align=center] (v2_1) at (-6,-.2) {\small{$\osLip_x(f(x, z^*(x)))< 0$} \\
\small{$\Lip_z(\sfG) < 1$}};
\node[align=center] (v2_2) at (-1.2,-.2) {\small{$\forall~n\in\mathbb{Z}_{>0},~\exists~T(n)\!>\!0$: }\\ \small{$\forall\,T\!<\!T(n)$, \eqref{eq:sample_data} is GES} 
};    
    \node[align=center] (v1_1) at (-6,2.0) {\small{$\osLip_x(f) < 0,  \Lip_z(\sfG)<1$} \\ \small{$-\osLip_x(f)(1-\Lip_z(\sfG))$}\\ \small{$>\Lip_z(f)\Lip_x(\sfG)$}};
\node[align=center] (v1_2) at (-1.2,2.0) {$\forall~n\in\mathbb{Z}_{>0}$ and $\forall~T\!>\!0:$ \\ 
\small{\eqref{eq:sample_data} is $T$-DTC and GES}
};
\draw[->]  (v1_1) edge  node[midway, above, sloped] {} (v1_2) ;
\draw[->] (v1_2) edge node[midway, right] {} (v2_2);
\draw[->]  (v2_1) edge node[midway, above, sloped] {} (v2_2);
\draw[->] (v1_1) edge node[midway, above, right] {} (v2_1);
\end{tikzpicture}
}
 \caption{Detailed implication diagram corresponding to Fig.~\ref{fig:brief_diagram} for the general nonlinear
CT-DT system~\eqref{eq:sample_data}. Under
Assumptions~\ref{as: existence-of-weak-pairing}--\ref{as:contrDiscr-2},
each arrow denotes a logical implication, where the
condition at the tail of the arrow is sufficient to guarantee the property
at the head.}
\vspace{-0.3cm}
    \label{fig:general_sys_diagram}
\end{figure}

\vspace{.1cm}

\subblue{\emph{1)}~\circled{1} $\implies$ \circled{2}: \emph{Contractivity and stability under small-gain conditions}:} Our analysis begins with the upper-left section of the implication diagram.

\vspace{.1cm}

\begin{theorem}[\circled{1} $\implies$ \circled{2}] \label{prop:Cont_Disc_11_to_12_general_case}
Consider the interconnected system \eqref{eq:sample_data}, and let Assumptions~\ref{as: existence-of-weak-pairing}, \ref{as:Lipinterconnection-2} and \ref{as:contrDiscr-2} hold with norms $\|\cdot\|_\mathcal{X}$ and $\|\cdot\|_\mathcal{Z}$. If  
\begin{equation}
\label{eq:small-gain}
-\osLip_x(f)(1-\Lip_z(\sfG))>\Lip_z(f)\Lip_x(\sfG),
\end{equation} 
then for any $n\in \mathbb{Z}_{>0}$ and for any $T>0$, there exists a weighted
vector norm $\|\cdot \|_{2,[\eta]}$ on $\mathbb{R}^2$ and a corresponding
composite norm $\|y\|_{\textsf{cmp}}:=\| [\| x\|_\mathcal{X},\|
  z\|_\mathcal{Z} ]^\top\|_{2,[\eta]}$ on $\mathbb{R}^{n_x+n_z}$ such that
the following holds:
\begin{itemize}
    \item[\emph{(i)}] The interconnected system \eqref{eq:sample_data} is $T$-DTC.  
    \item[\emph{(ii)}] The interconnected system \eqref{eq:sample_data} renders the origin GES. \hfill $\Box$
\end{itemize}
\end{theorem}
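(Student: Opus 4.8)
The plan is to reduce the statement to a one-step discrete-time contraction estimate on the sampled state $\solyone(kT) = [\solxone(kT)^\top, \solzone(kT)^\top]^\top$, and then obtain item~(ii) for free from Proposition~\ref{prop:contractivity-implies-GES}. Concretely, for two solutions I set $\tx(t) := \solxone(t) - \solxtwo(t)$, $\tz(t):= \solzone(t)-\solztwo(t)$, and I aim to show that the nonnegative vector $v_k := [\|\tx(kT)\|_\mathcal{X},\ \|\tz(kT)\|_\mathcal{Z}]^\top$ obeys an entrywise bound $v_{k+1}\le A(T,n)\,v_k$ for a suitable nonnegative matrix $A(T,n)\in\mathbb{R}^{2\times 2}$. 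Finding a monotonic norm on $\mathbb{R}^2$ in which $A(T,n)$ has induced operator norm strictly below $1$ then yields claim~(i), and hence~(ii).

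First I would establish the two scalar recursions. On the open interval $(kT,(k{+}1)T)$ the input is frozen, $z(t)\equiv z_k$, so $\tz(t)$ is constant and the differential inequality underlying Lemma~\ref{lem:bound-x-by-xkT} (via the weak pairing, Assumptions~\ref{as: existence-of-weak-pairing}--\ref{as:Lipinterconnection-2}) reads $D^+\|\tx\|_\mathcal{X}\le \osLip_x(f)\,\|\tx\|_\mathcal{X}+\Lip_z(f)\,\|\tz(kT)\|_\mathcal{Z}$. Integrating with $\xi:=\osLip_x(f)$ and evaluating at the \emph{endpoint} $\tau=T$ gives
\begin{equation*}
\|\tx((k{+}1)T)\|_\mathcal{X}\le \e^{\xi T}\|\tx(kT)\|_\mathcal{X}+\Lip_z(f)\,h(T,\xi)\,\|\tz(kT)\|_\mathcal{Z}.
\end{equation*}
It is essential here to use the sharp endpoint factor $\e^{\xi T}$, which is $<1$ precisely because \eqref{eq:small-gain} forces $\osLip_x(f)<0$; the coarser interval factor $\max\{\e^{\xi T},1\}$ in Lemma~\ref{lem:bound-x-by-xkT} would not suffice. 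For the discrete update I exploit the composition structure of $\sfGn$: splitting the $x$- and $z$-variations and using $\Lip_z(\sfG)<1$ gives $\Lip_z(\sfGn)=\Lip_z(\sfG)^n$ and, from the recursion $\Lip_x(\sfG^{m})\le \Lip_x(\sfG)+\Lip_z(\sfG)\Lip_x(\sfG^{m-1})$, the geometric bound $\Lip_x(\sfGn)\le \Lip_x(\sfG)\frac{1-\Lip_z(\sfG)^n}{1-\Lip_z(\sfG)}=:\bar g$. Since $z_{k+1}=\sfGn(x((k{+}1)T),z_k)$, this yields $\|\tz((k{+}1)T)\|_\mathcal{Z}\le \bar g\,\|\tx((k{+}1)T)\|_\mathcal{X}+\Lip_z(\sfG)^n\|\tz(kT)\|_\mathcal{Z}$. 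Substituting the $x$-bound assembles $A(T,n)$ with entries $A_{11}=\e^{\xi T}$, $A_{12}=\Lip_z(f)h(T,\xi)$, $A_{21}=\bar g\,\e^{\xi T}$, $A_{22}=\bar g\,\Lip_z(f)h(T,\xi)+\Lip_z(\sfG)^n$, all strictly positive.

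The crux is to show $\rho(A(T,n))<1$ for \emph{every} $T>0$ and $n\in\mathbb{Z}_{>0}$, and that this is exactly what \eqref{eq:small-gain} buys. For a $2\times2$ nonnegative matrix, $\rho(A)<1$ iff $\mathrm{tr}(A)<2$ and $\det(I-A)>0$. A direct computation gives $\det(A)=\e^{\xi T}\Lip_z(\sfG)^n$ and $\det(I-A)=(1-\e^{\xi T})(1-\Lip_z(\sfG)^n)-\bar g\,\Lip_z(f)h(T,\xi)$. Using $h(T,\xi)=(1-\e^{\xi T})/(-\xi)$ and the definition of $\bar g$, and cancelling the positive common factor $(1-\e^{\xi T})(1-\Lip_z(\sfG)^n)$, the inequality $\det(I-A)>0$ collapses \emph{exactly} to \eqref{eq:small-gain}, independently of $T$ and $n$. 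The trace bound follows since $\det(I-A)>0$ forces $\bar g\,\Lip_z(f)h(T,\xi)<(1-\e^{\xi T})(1-\Lip_z(\sfG)^n)<2-\e^{\xi T}-\Lip_z(\sfG)^n$, where the last step uses $1-\e^{\xi T}\Lip_z(\sfG)^n>0$; hence $\mathrm{tr}(A)<2$.

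Finally I would convert $\rho(A)<1$ into a genuine norm contraction. Because $A$ has strictly positive entries, its discriminant $(A_{11}-A_{22})^2+4A_{12}A_{21}$ is positive, so its eigenvalues are real; moreover the diagonal congruence $D=\mathrm{diag}(1,\,A_{12}/A_{21})$ makes $D^{1/2}AD^{-1/2}$ symmetric (both off-diagonal entries equal $\sqrt{A_{12}A_{21}}$), whence its $\ell_2$ operator norm equals $\rho(A)$. Taking $\eta=(1,\,A_{12}/A_{21})^\top$, the weighted norm $\|\cdot\|_{2,[\eta]}$ (which is monotonic) makes the operator norm of $A$ equal to $\rho(A)=:b<1$. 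Combining the entrywise estimate $v_{k+1}\le A v_k$ with monotonicity then gives $\|\solyone(kT)-\solytwo(kT)\|_{\textsf{cmp}}\le b^k\|\solyone(0)-\solytwo(0)\|_{\textsf{cmp}}$ with $\|\solyone\|_{\textsf{cmp}}=\|[\|\solxone\|_\mathcal{X},\|\solzone\|_\mathcal{Z}]^\top\|_{2,[\eta]}$, i.e.\ claim~(i); claim~(ii) is immediate from Proposition~\ref{prop:contractivity-implies-GES}. The main obstacle is this last passage from $\rho(A)<1$ to an \emph{induced}-norm bound below $1$: the key observation is the diagonal symmetrizability of a positive $2\times2$ matrix, which makes the weighted $\ell_2$ operator norm coincide with the spectral radius.
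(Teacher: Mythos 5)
Your proof is correct and follows essentially the same route as the paper's own: the identical per-interval recursion yielding the same gain matrix $\mathcal{A}(n,T)$, the same trace/determinant Schur criterion that collapses exactly to the small-gain condition \eqref{eq:small-gain} (hence independence of $T$ and $n$), and the same final passage to claim (i) via a monotonic weighted $\ell_2$ composite norm and to claim (ii) via Proposition~\ref{prop:contractivity-implies-GES}. The only difference is cosmetic: where the paper invokes \cite[Lemmas~2.8, 2.22]{FB:24-CTDS} to obtain a norm with $\|\mathcal{A}(n,T)\|_{2,[\eta]}=\rho(\mathcal{A}(n,T))$, you construct $\eta=(1,\,A_{12}/A_{21})^\top$ explicitly by diagonal symmetrization of the positive $2\times 2$ matrix, a valid self-contained substitute (your one slip, writing $\Lip_z(\sfGn)=\Lip_z(\sfG)^n$ instead of $\leq$, is harmless since only the upper bound is used).
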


\vspace{.1cm}

The proof of the theorem is provided in Section~\ref{sec:proofs}.  We first note that  the condition~\eqref{eq:small-gain} imposes that $\osLip_x(f)<0$; this is because $\Lip_z(\sfG) <1$ and $\Lip_z(f)\Lip_x(\sfG) > 0$. This means that $x \mapsto f(x,z)$ must be strongly infinitesimally contracting, uniformly over $\mathcal{Z}$. We also note that~\eqref{eq:small-gain} is a general ``Lipschitz'' interconnection condition~\cite[Sec.~3.6]{FB:26-CTDS} and it can be interpreted as a ``small gain'' condition; in fact, it requires that the contraction rates of the sub-systems are larger than the interconnection gains.

Before proceeding with the next result, we provide four remarks. 

\vspace{.1cm}
\begin{remark}[No constraints on $T$ or $n$] 
\label{rem:No-constraints-T-n}
The small gain condition~\eqref{eq:small-gain} ensures stability of the interconnected system \eqref{eq:sample_data}, without imposing any condition on the sampling interval $T$ or the number of discrete-time steps $n$. We explain this (somewhat surprising) fact by noting that the claim \emph{(i)} of the Theorem~\ref{prop:Cont_Disc_11_to_12_general_case} is obtained by leveraging a \emph{gain matrix} $\mathcal{A}(n,T) \in \mathbb{R}^{2 \times 2}_{\geq 0}$ that allows us to derive a per-interval bound of the form: 
\begin{equation}
    \begin{bmatrix}
    \left\| x(kT)\right\|_\mathcal{X}\\
        \left\| z(kT)\right\|_\mathcal{Z}
    \end{bmatrix}\leq \mathcal{A}(n,T) \begin{bmatrix}
    \left\| x((k-1)T)\right\|_\mathcal{X}\\
        \left\| z((k-1)T)\right\|_\mathcal{Z}
    \end{bmatrix},
\end{equation}
which is valid for any $k = 1, 2, \ldots$. In particular, letting $\xi=\osLip_x(f)$ and $\mathcal{A}(n,T)=\begin{bmatrix}
    \mathcal{A}_{11} & \mathcal{A}_{12} \\
    \mathcal{A}_{21} & \mathcal{A}_{22}
\end{bmatrix}$, the entries are:    
\begin{align*}    
&\mathcal{A}_{11}:=        \e^{\xi T}, \hspace{.5cm} \mathcal{A}_{12}:=  \Lip_z(f)\frac{1-\e^{\xi T}}{-\xi},  \\
&\mathcal{A}_{21}:=
   \frac{1-\Lip_z(\sfG)^n}{1-\Lip_z(\sfG)}\Lip_x(\sfG) \e^{\xi T}\\
   &\mathcal{A}_{22}:= \frac{1-\Lip_z(\sfG)^n}{1-\Lip_z(\sfG)}\Lip_x(\sfG)  
 \Lip_z(f)\frac{1-\e^{\xi T}}{-\xi}+\Lip_z(\sfG)^n \, .
\end{align*}
Obviously, if $\mathcal{A}(n,T)$ is Schur stable, then,  the claim \emph{(i)} of Theorem~\ref{prop:Cont_Disc_11_to_12_general_case} follows (as we discuss in detail in the proof of the theorem). It turns out that the conditions for $\mathcal{A}(n,T)$   to be Schur stable do not depend on either $T$ or $n$. 
\hfill $\Box$
\end{remark}

\vspace{.1cm}

\begin{remark}[Comparison with Example~\ref{example:Low-dimensional-LTI}]
The small-gain condition~\eqref{eq:small-gain} (which implies $\osLip_x(f)<0$) rules out the scenario in Example \ref{example:Low-dimensional-LTI}, where the interconnected system becomes unstable when $T$ is large enough. This is because in Example~\ref{example:Low-dimensional-LTI}  we have that $\osLip_x(f) > 0$.  \hfill $\Box$
\end{remark}

\vspace{.1cm}

\begin{remark}[On the choice of the composite norm]
  In  Theorem~\ref{prop:Cont_Disc_11_to_12_general_case}, we use the weighted vector norm $\|\cdot \|_{2,[\eta]}$  in order to ensure that $\|\mathcal{A}(n,T) \|_{2,[\eta]}=\rho(\mathcal{A}(n,T))$ (where $\mathcal{A}(n,T)$ is defined in the previous remark). 
  To determine $\eta$, 
   one can use the Perron eigenvectors and the fact that the matrix $\mathcal{A}(n,T)$ is non-negative and irreducible; see \cite[Section~2.5]{FB:26-CTDS} and, in particular, \cite[Lemma 2.22]{FB:26-CTDS}. \hfill $\Box$
\end{remark}

\vspace{.1cm}

\begin{remark}[Detailed transient bound]
Theorem~\ref{prop:Cont_Disc_11_to_12_general_case}(ii) established that the  interconnected system \eqref{eq:sample_data} renders the origin GES. Leveraging Proposition~\ref{prop:contractivity-implies-GES}, we can characterize the transient bound as follows:
\begin{equation}
\label{eq:transient-bound-small-gain}
     \left\|   \begin{bmatrix}
    \left\| x(t)\right\|_\mathcal{X}\\
        \left\| z(t)\right\|_\mathcal{Z}
    \end{bmatrix}\right\|_{2,[\eta]}\leq r \e^{- a t}  \left\|  \begin{bmatrix}
    \left\| x(0)\right\|_\mathcal{X}\\
        \left\| z(0)\right\|_\mathcal{Z}
    \end{bmatrix}\right\|_{2,[\eta]}, \forall~t\geq 0,
\end{equation}
where $
r := \frac{1}{\rho\left( \mathcal{A}(n,T)\right)}\left\| \begin{bmatrix}
         1 &\frac{\e^{\xi T} -1}{\xi} \Lip_z(f)  \\ 0 & 1
    \end{bmatrix}
    \right\|_{2,[\eta]} 
$
and $a :=-\frac{1}{T}\ln (\rho\left( \mathcal{A}(n,T)\right))$.  \hfill $\Box$
\end{remark}

\vspace{.1cm}

\subblue{\emph{2)}~\circled{1} $\implies$ \circled{3}: \emph{small-gain conditions imply contractive RM}:}

The next main result
shows that the condition in Theorem \ref{prop:Cont_Disc_11_to_12_general_case} also implies contractivity of the RM.

\vspace{.1cm}

\begin{theorem}[\circled{1} $\implies$ \circled{3}]
\label{prop:Cont_Disc_11_to_21_general_case}
    Let Assumptions~\ref{as: existence-of-weak-pairing},
    \ref{as:Lipinterconnection-2} and \ref{as:contrDiscr-2} be satisfied
    with norms $\|\cdot\|_\mathcal{X}$ and $\|\cdot\|_\mathcal{Z}$. If
    \begin{equation}\label{eq:small-gain-condition}
      -\osLip_x(f)(1-\Lip_z(\sfG))>\Lip_z(f)\Lip_x(\sfG),
    \end{equation}
    then the RM \eqref{eq: reduced dynamics} is strongly infinitesimally
    contracting with respect to the norm $\|\cdot\|_\mathcal{X}$,
    i.e. $\osLip_x(f(x,z^*(x)))<0$. \hfill $\Box$
\end{theorem}

\vspace{.1cm}

The proof of the theorem is provided in Section~\ref{sec:proofs}. Theorem~\ref{prop:Cont_Disc_11_to_21_general_case} asserts that contractivity of the map $z \mapsto \sfG(x,z)$ and the small-gain condition~\eqref{eq:small-gain}, taken together, imply that the RM is strongly infinitesimally contracting with a given rate $\zeta$. The converse is, obviously, not true. This can be seen from the simple Example~\ref{example:Low-dimensional-LTI}; it is also at the foundations of feedback stabilization, where a feedback controller $z^*(x)$ is utilized to  stabilize a (possibly unstable) plant $\dot{x} = f(x,u)$. Lastly, we stress once again that the result of the theorem relies only on the Lipschitz interconnection gains, and does not involve conditions on $T$ or $n$.  The result of Theorem~\ref{prop:Cont_Disc_11_to_21_general_case}  prompts us to explore when contractivity of the RM may imply stability of the interconnected system~\eqref{eq:sample_data}.

\vspace{.1cm}

\subblue{\emph{3)}~\circled{3} $\implies$ \circled{4}: \emph{Stability from contractivity of RM}:} we consider the case where the RM~\eqref{eq: reduced dynamics} is designed to have contractivity properties. The main question is whether these properties can be preserved in the CT-DT system when $T > 0$ and $n < +\infty$. The main result is presented in the following.  

\vspace{.1cm}

\begin{theorem}[\circled{3} $\implies$ \circled{4}]\label{prop:Cont_Disc_21_to_22_general_case}
Consider the interconnected system~\eqref{eq:sample_data}, and let
Assumptions~\ref{as: existence-of-weak-pairing},
\ref{as:Lipinterconnection-2} and \ref{as:contrDiscr-2} hold with norms
$\|\cdot\|_\mathcal{X}$ and $\|\cdot\|_\mathcal{Z}$.  Let
$\xi\geq\osLip_x(f)$ and assume that $\zeta > 0$ is such that
$\osLip(f(x,z^*(x)) )\leq-\zeta$.  Then, for any $n\in \mathbb{Z}_{>0}$ there
exists a $T(n) > 0$ such that the interconnected system
\eqref{eq:sample_data} renders the origin GES for any $0<T<T(n)$. In
particular, such a $T(n)$ is given by:
\begin{equation}\label{eq:Tn}
\hspace{-.2cm}   T(n)=  \begin{cases}
   \displaystyle
 \frac{1}{\xi} \log\left(  \frac{\xi(1-[\Lip_z(\sfG)]^n)}{C_{2}(n)+C_{1}/\zeta}+1\right)  , &\text{ if $\xi\neq 0$ }\\[2.5ex]
  \displaystyle 
   \frac{1-[\Lip_z(\sfG)]^n}{C_{2}(n)+C_{1}/\zeta},&\text{ if $\xi=0$}
    \end{cases}    
    \hspace{-.2cm}
\end{equation}
where  $C_{1} > 0$ and $C_{2}(n) > 0$ are defined as
\begin{align}
\notag
   & C_{1}:= \frac{\Lip_z(f) \Lip_x(\sfG)}{1-\Lip_z(\sfG)}   \left(\Lip_x(f)+\frac{\Lip_z(f)\Lip_x(\sfG)}{1-\Lip_z(\sfG)}\right),\\
 &C_{2}(n):= [\Lip_z(\sfG)]^n \frac{\Lip_x(\sfG)\Lip_z(f)}{1-\Lip_z(\sfG)}. 
 \tag*{\text{$\Box$}}
\end{align}
\end{theorem}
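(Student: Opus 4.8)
The plan is to establish the per-interval (discrete-time) contraction of the sampled trajectories and then invoke Proposition~\ref{prop:contractivity-implies-GES}. Since the origin is an equilibrium of~\eqref{eq:sample_data}, I would compare an arbitrary trajectory $y=(x,z)$ against it and track the two sampled quantities $X_k := \|x(kT)\|_\mathcal{X}$ and the \emph{suboptimality} $e_k := \|z_k - z^*(x(kT))\|_\mathcal{Z}$; controlling these controls the full state, because $\|z_k\|_\mathcal{Z}\le e_k+\gamma X_k$, where $\gamma := \Lip_x(\sfG)/(1-\Lip_z(\sfG))$ is the Lipschitz constant of $z^*$ (read off from $z^*(x)=\sfG(x,z^*(x))$ together with Assumption~\ref{as:contrDiscr-2}). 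I would also record that the RM field $x\mapsto f(x,z^*(x))$ is Lipschitz with constant $L_{\mathrm{rm}} := \Lip_x(f)+\gamma\Lip_z(f)$ and, by hypothesis, satisfies $\osLip(f(\cdot,z^*(\cdot)))\le-\zeta$.

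On each interval $[kT,(k+1)T)$ the input is frozen at $z_k$, and I would derive two estimates. First, an $x$-decay estimate: writing $f(x,z_k)=f(x,z^*(x))+\big(f(x,z_k)-f(x,z^*(x))\big)$ and pairing with $x$ through the weak pairing, the first summand contributes $\le-\zeta\|x\|_\mathcal{X}^2$ (RM contraction) and the second $\le\Lip_z(f)\|z_k-z^*(x(t))\|_\mathcal{Z}\,\|x\|_\mathcal{X}$, giving $\tfrac{d^+}{dt}\|x\|_\mathcal{X}\le-\zeta\|x\|_\mathcal{X}+\Lip_z(f)\|z_k-z^*(x(t))\|_\mathcal{Z}$. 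Second, a displacement estimate: setting $v=x(t)-x(kT)$ and using $\osLip_x(f(\cdot,z_k))\le\xi$ with the subadditivity and Cauchy--Schwarz properties of the weak pairing yields $\tfrac{d^+}{dt}\|v\|_\mathcal{X}\le\xi\|v\|_\mathcal{X}+\|f(x(kT),z_k)\|_\mathcal{X}$ with $v(kT)=0$, so by the comparison lemma $\|x(kT+\tau)-x(kT)\|_\mathcal{X}\le h(\tau,\xi)\,\|f(x(kT),z_k)\|_\mathcal{X}$, and $\|f(x(kT),z_k)\|_\mathcal{X}\le L_{\mathrm{rm}}X_k+\Lip_z(f)e_k$ since $f(\mathbf{0},\mathbf{0})=\mathbf{0}$. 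This is the step that injects $T$ through the factor $h(T,\xi)$ and explains why a possibly positive $\xi$ (open-loop growth under a frozen input) enters the threshold alongside the contraction rate $\zeta$.

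Next I would combine $\|z_k-z^*(x(t))\|_\mathcal{Z}\le e_k+\gamma\|x(t)-x(kT)\|_\mathcal{X}$ with the displacement bound, turning the suboptimality seen over the interval into an affine function of $(X_k,e_k)$. Integrating the $x$-decay inequality across the interval and using the Picard/Banach contraction $e_{k+1}\le[\Lip_z(\sfG)]^n\big(e_k+\gamma\|x((k+1)T)-x(kT)\|_\mathcal{X}\big)$ (valid because $z^*(x)$ is the fixed point of $\sfGn(x,\cdot)$ and $\Lip_z(\sfG)<1$), I obtain a nonnegative $2\times2$ comparison system $[X_{k+1},e_{k+1}]^\top\le\mathcal{M}(n,T)[X_k,e_k]^\top$ whose entries are exactly the combinations of $[\Lip_z(\sfG)]^n$, $h(T,\xi)$, $C_1$ and $C_2(n)$ in the statement. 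The interconnection is $T$-DTC precisely when $\mathcal{M}(n,T)$ is Schur; since $\mathcal{M}(n,T)$ is nonnegative, this reduces (via Perron--Frobenius, as in the proof of Theorem~\ref{prop:Cont_Disc_11_to_12_general_case}) to a single scalar inequality, which rearranges to $[\Lip_z(\sfG)]^n+h(T,\xi)\big(C_2(n)+C_1/\zeta\big)<1$. Because $h(\cdot,\xi)$ is strictly increasing, this is equivalent to $T<T(n)$ with $T(n)$ as in~\eqref{eq:Tn}. With $T$-DTC established, Proposition~\ref{prop:contractivity-implies-GES}, combined with $\|z_k\|_\mathcal{Z}\le e_k+\gamma X_k$ and Lemma~\ref{lem:bound-x-by-xkT} to fill in the open intervals, gives GES of the origin.

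I expect the main obstacle to be closing the coupled interval estimate: the suboptimality $\|z_k-z^*(x(t))\|_\mathcal{Z}$ must be controlled for all $t$ in the open interval, not merely at the sample $kT$, because $z^*(x(t))$ drifts as $x$ evolves, and this drift is itself governed by the $x$-displacement, which in turn depends on the suboptimality. Disentangling this loop while keeping the bounds tight enough that the matrix entries collapse to the clean constants $C_1$ and $C_2(n)$ --- in particular producing the $1/\zeta$ factor from the contracting $x$-subsystem and the $h(T,\xi)$ factor from the frozen-input displacement --- is the delicate bookkeeping. A secondary verification is that the scalar Schur condition indeed rearranges into the closed form~\eqref{eq:Tn}, and that $T(n)$ is strictly increasing and bounded in $n$, which should follow from $[\Lip_z(\sfG)]^n$ and $C_2(n)$ decreasing monotonically to $0$ as $n\to+\infty$.
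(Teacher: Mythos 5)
Your proposal follows essentially the same route as the paper's proof: the paper likewise tracks the suboptimality $e(t)=z(t)-z^*(x(t))$ against the origin, treats the CT dynamics on each interval as the reduced model perturbed by an input bounded via the $h(T,\xi)$ displacement estimate, uses the $n$-fold Picard contraction for the $e$-update, and reduces Schur stability of the resulting nonnegative $2\times 2$ comparison matrix to the single scalar inequality $h(T,\xi)\bigl(C_{2}(n)+C_{1}/\zeta\bigr)+[\Lip_z(\sfG)]^n<1$, i.e.\ $T<T(n)$, before filling in the open intervals to get GES. The only cosmetic differences are that the paper packages your differential inequalities into standalone ISS-type lemmas (Lemmas~\ref{lemma:Input-state stability properties} and~\ref{lemma: Explicit bound for x(t)- x(0,k-1)}) and carries out the Schur reduction via an explicit $2\times 2$ eigenvalue test (Lemma~\ref{le:2by2_matrix}, using the identities $C_{1}C_{2}(n)=C_{12}C_{21}(n)$ and $C_{1}[\Lip_z(\sfG)]^n=\Lip_z(f)C_{21}(n)$) rather than a Perron--Frobenius argument, the latter being used only to build the weighted composite norm in the final GES step.
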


\vspace{.2cm}

When the RM is strongly infinitesimally contracting,
Theorem~\ref{prop:Cont_Disc_21_to_22_general_case} asserts that, for any
$n\in \mathbb{Z}_{>0}$, stability of the interconnected
system~\eqref{eq:sample_data} can be preserved so long as $T <
T(n)$. Comparing Theorem~\ref{prop:Cont_Disc_21_to_22_general_case} with
the results of Theorem~\ref{prop:Cont_Disc_11_to_12_general_case}, one can
notice that Theorem~\ref{prop:Cont_Disc_11_to_12_general_case} does not
impose any conditions on $T$, but at the cost of stronger assumptions on
the interconnected system~\eqref{eq:sample_data}. In fact,
Theorem~\ref{prop:Cont_Disc_11_to_12_general_case} requires the
sub-system~\eqref{eq:sample_data-plant} to be strongly infinitesimally
contracting in $x$, uniformly in $z$. In view of
Theorem~\ref{prop:Cont_Disc_11_to_21_general_case}, this requirement is stricter
than assuming that the RM is contractive.

Inspecting the expression for $T(n)$ in~\eqref{eq:Tn}, one can observe the following: 
\begin{itemize}
    \item[a)] The bound $T(n)$ depends solely on Lipschitz constants of the maps $f$ and $\sfG$, and on the contractivity rate $\zeta$ of the RM.  
    \item[b)] For any $n\in\mathbb{Z}_{>0}$, $T(n)>0$. 
    \item[c)] If $\Lip_z(\mathsf{G}) > 0$ (corresponding to the case where $z^*$ can only be obtained via an iterative algorithm), $T(n)$ is strictly increasing with 
    respect to $n$. Therefore, the higher $n$, the less stringent are the requirements on $T$.
    \item[d)] $T(n)$ is monotonically decreasing with respect to both the minimal Lipschitz constant and the minimal one-sided Lipschitz constant. Consequently, the more conservative the estimates of these constants are, the smaller the value of $T(n)$ will be.
    
        \item[e)] For any $n\in\mathbb{Z}_{>0}$, $T(n)$ can be bounded as:
        \begin{equation}
        \label{eq:bound-on-T(n)}
        T(n)<\frac{\zeta}{\frac{\Lip_z(f) \Lip_x(\sfG)}{1-\Lip_z(\sfG)}\left(\Lip_x(f)+\frac{\Lip_z(f)\Lip_x(\sfG)}{1-\Lip_z(\sfG)}\right)}.
        \end{equation}
\end{itemize}

The last observation prompts the following remark. 

\vspace{.1cm}

\begin{remark}[Sampled-data implementation of $z^*(x)$]
Even when the map $z^*(x)$ can be computed (i.e., $n$ can be sufficiently large so that the Picard iterations converge to the unique fixed point),  there exists an upper bound on the  interval $T$ so that the system~\eqref{eq:rm-sampled} preserves the stability properties of its associated RM. For $n = \infty$, one has that: 
\begin{align}
\label{eq:Tn-infty}
    T(\infty)= \frac{1}{\xi} \log\left(  \frac{\xi \zeta}{C_{1}}+1\right).
\end{align} 
Note that \eqref{eq:bound-on-T(n)} is an upper bound for~\eqref{eq:Tn-infty}. Continuing the reference to optimization-based control methods, this means that the controller $z^*(x)$ needs to update the control input sufficiently fast in order to ensure stability of the closed-loop system~\eqref{eq:rm-sampled}. This translates into engineering specifications  in terms of allowable sensing and computational times.    
\hfill $\Box$ 
\end{remark}

\vspace{.1cm}

\begin{remark}[Detailed transient bound] 
\label{rem:transientbound}
We give a more detailed expression of the transient bound for the interconnected system~\eqref{eq:sample_data}. Define the matrix $\overline{\mathcal{A}}(n,T) :=\begin{bmatrix}
     \overline{\mathcal{A}}_{11} &  \overline{\mathcal{A}}_{12}\\
      \overline{\mathcal{A}}_{21} &  \overline{\mathcal{A}}_{22}
\end{bmatrix}$ where 
\begin{align*}
\overline{\mathcal{A}}_{11}:=&  \e^{-\zeta T}+    \frac{1-\e^{-\zeta T}}{\zeta}h(T,\xi) C_{1}, \\ 
     \overline{\mathcal{A}}_{12}:=&  \frac{1-\e^{-\zeta T}}{\zeta}(\Lip_z(f)+ h(T,\xi)C_{12}),\\
     \overline{\mathcal{A}}_{21}:=& h(T,\xi) C_{21}(n), \,\,\,   \overline{\mathcal{A}}_{22}:=h(T,\xi)C_{2}(n) +[\Lip_z(\sfG)]^n
\end{align*}
with $C_{1}, C_{2}(n)$ defined in the Theorem~\ref{prop:Cont_Disc_21_to_22_general_case} and with:
\begin{align*}
C_{12}:=&   \frac{\Lip_z(f)\Lip_x(\sfG)}{1-\Lip_z(\sfG)}\Lip_z(f) , \\
    C_{21}(n):=&  \frac{[\Lip_z(\sfG)]^n\Lip_x(\sfG)}{1-\Lip_z(\sfG)}\left( \Lip_x(f)+\frac{\Lip_x(\sfG)\Lip_z(f)}{1-\Lip_z(\sfG)} \right) .
\end{align*}
Note that matrix $\overline{\mathcal{A}}(n,T)$ is non-negative and irreducible. 
Therefore, by~\cite[Lemma~2.22]{FB:26-CTDS}, for any $n \in \mathbb{Z}_{>0}$ and $T > 0$ there exists  a weighted vector norm $\|\cdot \|_{2,[\bar{\eta}]}$  so that $\|\overline{\mathcal{A}}(n,T) \|_{2,[\bar{\eta}]}=\rho(\overline{\mathcal{A}}(n,T))$. Then, we have the following transient bound: 
\begin{equation}
\label{eq:transient-bound}
     \left\|   \begin{bmatrix}
    \left\| x(t)\right\|_\mathcal{X}\\
        \left\| z(t)\right\|_\mathcal{Z}
    \end{bmatrix}\right\|_{2,[\bar{\eta}]}\leq \varrho \e^{- \alpha t}  \left\|  \begin{bmatrix}
    \left\| x(0)\right\|_\mathcal{X}\\
        \left\| z(0)\right\|_\mathcal{Z}
    \end{bmatrix}\right\|_{2,[\bar{\eta}]}, \forall~t\geq 0,
\end{equation}
where 
$\varrho := \frac{1}{\rho\left( \overline{\mathcal{A}}(n,T)\right)}  \left\| \mathcal{B}(T)\right\|_{2,[\bar{\eta}]}
\left\| \begin{bmatrix}
        1 & 0 \\
        \frac{\Lip_x(\sfG)}{1-\Lip_z(\sfG)} & 1
    \end{bmatrix}\right\|_{2,[\bar{\eta}]}^2$
and $\alpha :=-\frac{1}{T}\ln \rho\left( \overline{\mathcal{A}}(n,T)\right)$, and where we recall that $\mathcal{B}(T)$ is defined in~\eqref{eq:def_B}. 
\hfill $\Box$ 
\end{remark}

\vspace{.1cm}

In the previous remark, we used a  weighted vector norm $\|\cdot \|_{2,[\eta]}$. This is further clarified in the following.

\vspace{.1cm}

\begin{remark}[Composite norm]
While omitted from the main statement of the theorem to improve readability, the exponential stability in Theorem~\ref{prop:Cont_Disc_21_to_22_general_case} is with respect to a composite norm $\|y\| = \| [\| x\|_\mathcal{X},\| z\|_\mathcal{Z} ]^\top\|_{2,[\eta]}$. Since the matrix $\overline{\mathcal{A}}(n,T)$ defined in Remark~\ref{rem:transientbound} is  non-negative and irreducible, one can find a vector $\eta = [\eta_1, \eta_2]^\top$ such that the weighted vector norm $\|\cdot \|_{2,[\eta]}$ satisfies that  $\|\overline{\mathcal{A}}(n,T) \|_{2,[\eta]}=\rho(\overline{\mathcal{A}}(n,T))$.
In particular, $\eta$ can be computed as in \cite[Lemma 2.22]{FB:26-CTDS}.
We note that, since  $\eta$ is computed using the Perron eigenvectors, it depends itself on $T$ and $n$. Therefore, $T$ and $n$ affect the bound~\eqref{eq:transient-bound} via $\varrho$ and $\alpha$, as well as the norm $\|\cdot \|_{2,[\eta]}$. 
\hfill $\Box$ 
\end{remark}

\vspace{.1cm}

In Theorem~\ref{prop:Cont_Disc_21_to_22_general_case}, we established GES of the origin under the condition that all assumptions hold on a forward-invariant set $\mathcal{X} \times \mathcal{Z}$. Our result can be generalized to cases where $\mathcal{X} \times \mathcal{Z}$ is not forward invariant, leading to a local version of Theorem~\ref{prop:Cont_Disc_21_to_22_general_case} as stated next.

\vspace{.1cm}

\begin{corollary}[Local exponential stability]\label{corollary:local_Cont_Disc_21_to_22_general_case}    
Consider the interconnected system~\eqref{eq:sample_data}, and let
Assumptions~\ref{as: existence-of-weak-pairing}-\ref{as:contrDiscr-2} hold with norms
$\|\cdot\|_\mathcal{X}$ and $\|\cdot\|_\mathcal{Z}$.  Let
$\xi\geq\osLip_x(f)$. Assume that $\zeta > 0$ is such that
$\osLip(f(x,z^*(x)) )\leq-\zeta$.  Given any $n \in \mathbb{Z}_{>0}$ and any $T$ with  $0 < T < T(n)$, where $T(n)$ is defined in \eqref{eq:Tn}, suppose that there exist subsets $\mathcal{X}_0 \subseteq \mathcal{X}$ and $\mathcal{Z}_0 \subseteq \mathcal{Z}$ such that every solution of \eqref{eq:sample_data} with initial condition in $\mathcal{X}_0 \times \mathcal{Z}_0$ remains in $\mathcal{X} \times \mathcal{Z}$ for all $t \geq 0$. Then, if $x(0) \in \mathcal{X}_0$ and $z(0) \in \mathcal{Z}_0$, the bound~\eqref{eq:transient-bound} holds.
 \hfill $\Box$
\end{corollary}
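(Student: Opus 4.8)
The plan is to observe that the entire argument behind Theorem~\ref{prop:Cont_Disc_21_to_22_general_case} is driven purely by \emph{local} Lipschitz and one-sided Lipschitz estimates, and that these estimates remain valid along any trajectory that never leaves $\mathcal{X}\times\mathcal{Z}$. Concretely, the proof of Theorem~\ref{prop:Cont_Disc_21_to_22_general_case} establishes a per-interval bound of the form
\begin{equation*}
\begin{bmatrix}\|x(kT)\|_\mathcal{X}\\ \|z(kT)\|_\mathcal{Z}\end{bmatrix}\leq \overline{\mathcal{A}}(n,T)\begin{bmatrix}\|x((k-1)T)\|_\mathcal{X}\\ \|z((k-1)T)\|_\mathcal{Z}\end{bmatrix},
\end{equation*}
shows that $\rho(\overline{\mathcal{A}}(n,T))<1$ precisely when $T<T(n)$, and then upgrades this discrete contraction to the continuous-time transient bound~\eqref{eq:trasient-bound} via Proposition~\ref{prop:contractivity-implies-GES}. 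Every inequality feeding into $\overline{\mathcal{A}}(n,T)$ uses only the constants $\osLip_x(f)$, $\Lip_z(f)$, $\Lip_x(\sfG)$, $\Lip_z(\sfG)$ and $\zeta$, which (by Assumptions~\ref{as: existence-of-weak-pairing}--\ref{as:contrDiscr-2} and by hypothesis) bound the corresponding increments at all points of $\mathcal{X}\times\mathcal{Z}$. Hence the global forward-invariance of $\mathcal{X}\times\mathcal{Z}$ is used in Theorem~\ref{prop:Cont_Disc_21_to_22_general_case} only to ensure that the trajectory never evaluates $f$ or $\sfG$ outside this set; it plays no other role.

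Accordingly, the proof would proceed as follows. First, fix $n$ and any $T$ with $0<T<T(n)$, take $(x(0),z(0))\in\mathcal{X}_0\times\mathcal{Z}_0$, and invoke the standing hypothesis to conclude that the resulting solution of~\eqref{eq:sample_data} satisfies $(x(t),z(t))\in\mathcal{X}\times\mathcal{Z}$ for all $t\geq 0$; in particular $x(kT)\in\mathcal{X}$ and $z_k\in\mathcal{Z}$ for every $k$. Second, I would re-run the chain of estimates from the proof of Theorem~\ref{prop:Cont_Disc_21_to_22_general_case}, checking at each step that the arguments of $f$, $\sfG$, $\sfG^n$ and $z^*(\cdot)$ at which a Lipschitz or weak-pairing inequality is invoked lie in $\mathcal{X}\times\mathcal{Z}$. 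For the CT flow this is immediate from confinement of $(x(t),z(t))$; for the DT update I would additionally note that the intermediate Picard iterates $\sfG^{j}(x(kT),z_{k-1})$, $j=1,\dots,n$, together with $z^*(x(kT))$, remain in $\mathcal{Z}$, which follows from convexity of $\mathcal{Z}$ and the fact that $\sfG(x,\cdot)$ maps $\mathcal{Z}$ into itself for each $x\in\mathcal{X}$ (so that $z^*(x)\in\mathcal{Z}$ and every iterate stays in $\mathcal{Z}$). This reproduces verbatim the same per-interval inequality with the same matrix $\overline{\mathcal{A}}(n,T)$. Third, since $T<T(n)$ yields $\rho(\overline{\mathcal{A}}(n,T))<1$, applying Proposition~\ref{prop:contractivity-implies-GES} along this single trajectory gives exactly the bound~\eqref{eq:trasient-bound}.

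The only point requiring care, and the expected main obstacle, is the bookkeeping in the second step: the global theorem tacitly relies on forward invariance of $\mathcal{X}\times\mathcal{Z}$ to guarantee that \emph{every} quantity appearing in the estimates lives in the domain where the Lipschitz constants are valid, whereas here forward invariance is replaced by the weaker trajectory-confinement hypothesis. I would therefore make explicit that confinement of $(x(t),z(t))$ to $\mathcal{X}\times\mathcal{Z}$, together with invariance of $\mathcal{Z}$ under the maps $\sfG(x,\cdot)$ composing $\sfG^n$, is precisely what each inequality needs, and that no estimate in the proof of Theorem~\ref{prop:Cont_Disc_21_to_22_general_case} ever evaluates $f$ or $\sfG$ outside $\mathcal{X}\times\mathcal{Z}$ along such a trajectory. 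Since the Schur-stability computation for $\overline{\mathcal{A}}(n,T)$ and Proposition~\ref{prop:contractivity-implies-GES} are purely algebraic and require no invariance, the conclusion then follows without any further global structure.
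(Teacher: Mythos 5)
Your proposal is correct and follows essentially the same route the paper intends: the paper gives no separate proof of this corollary, treating it as immediate from the proof of Theorem~\ref{prop:Cont_Disc_21_to_22_general_case}, precisely because every estimate there (the bounds via Lemmas~\ref{lemma: Lip-of-gn}--\ref{lemma: Explicit bound for x(t)- x(0,k-1)}, the iteration with $\overline{\mathcal{A}}(n,T)$ in the $(\|x\|,\|e\|)$ coordinates, and the final chaining through $\mathcal{B}(T)$) is a trajectory-wise inequality that only requires the solution and the equilibrium to remain in $\mathcal{X}\times\mathcal{Z}$, which is exactly what the confinement hypothesis supplies. Your explicit check that the intermediate Picard iterates $\sfG^{j}(x(kT),z_{k-1})$ and $z^*(x(kT))$ stay in $\mathcal{Z}$ (via the self-map property of $\sfG(x,\cdot)$ needed for $\sfGn$ to be well defined) is a detail the paper leaves implicit, and the only cosmetic deviation is that the discrete-time decay is stated in $(\|x\|,\|e\|)$ rather than $(\|x\|,\|z\|)$ coordinates, so one obtains \eqref{eq:trasient-bound} by the direct chaining at the end of the proof of Theorem~\ref{prop:Cont_Disc_21_to_22_general_case} rather than by a literal invocation of Proposition~\ref{prop:contractivity-implies-GES}.
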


\subsection{Special case: Interconnected CT and DT LTI systems}
\label{sec: LTI-case}

Finally, we consider the special case of interconnected linear
time-invariant systems:
\begin{subequations}
\label{eq:sample_data_LTI}
\begin{align}
    \dot x(t) & = Ax(t)+Bz(t) \label{eq:sample_data-plant_LTI}\\
    z_{k} & = \sum_{i=0}^{n-1} D^i Cx(kT)+D^n z_{k-1},  \label{eq:sample_data-discrete_LTI} \\
   z(t) & = z_{k} , \,\, t \in [kT, (k+1)T), \label{eq:sample_data-z_LTI}
\end{align}
\end{subequations}
where $A \in \mathbb{R}^{n_x\times n_x}$, $B\in \mathbb{R}^{n_x\times
  n_z}$, $C \in \mathbb{R}^{n_z\times n_x}$, and $D \in
\mathbb{R}^{n_z\times n_z}$.  In other words, we let $\sfG(x,z) = C x + D
z$. Under Assumption~\ref{as:contrDiscr-2}, the matrix $D$ is Schur stable
and $z^*(x) = (I_{n_z}-D)^{-1}Cx$. Accordingly, the RM associated with the
interconnected LTI system~\eqref{eq:sample_data_LTI} is given by $
\dot
x_{\textsf{r}}(t) = \left(A+B(I_{n_z}-D)^{-1}C \right) x_{\textsf{r}}(t)$,
$t \geq 0$, with $x_{\textsf{r}}(t) \in \real^{n_x}$. In Fig.~\ref{fig:LTI_diagram}, we
present a customized version of the implication diagram for the
interconnected system~\eqref{eq:sample_data_LTI}.

\begin{figure}
    \centering
\vspace{.2cm}

\noindent \framebox[.98\width]{
\begin{tikzpicture}
[>=stealth, ->, line width=1.5pt]
\node[align=center] (vc) at (-3.6,3.5) {\small{\textsf{Implication diagram for the LTI system}~\eqref{eq:sample_data_LTI}} };
\node[align=center] (v1_1_c) at (-7.4,2.95) {\circled{1}};
\node[align=center] (v1_1_c) at (-2.5,2.95) {\circled{2}};
\node[align=center] (v2_1_c) at (-7.4,.7) {\circled{3}};
\node[align=center] (v2_2_c) at (-2.5,.8) {\circled{4}};
\node[align=center] (v2_1) at (-6,0) {\small{$A\!+\!B(I\!-\!D)^{-1}C$ is Hurwitz} \\
\small{$D$ is Schur}};
\node[align=center] (v2_2) at (-1.2,0) {\small{$\forall \, n\!\in\!\mathbb{Z}_{>0}~\exists~T(n)\!>\!0$: }\\ \small{$\forall\,T\!<\!T(n)$, \eqref{eq:sample_data_LTI} is} \\\small{  $T$-DTC and  GES} };    
    \node[align=center] (v1_1) at (-6,2.0) {\small{$\mu_{\|\cdot \|_\mathcal{X}}(A) < 0,  \|D \|_\mathcal{Z}<1$} \\ \small{$-\mu_{\|\cdot \|_\mathcal{X}}(A)(1-\| D\|_\mathcal{Z})$}\\ \small{$>\|B\|_{\mathcal{Z}\to\mathcal{X}} \| C\|_{\mathcal{X}\to\mathcal{Z}}$}};
\node[align=center] (v1_2) at (-1.2,2.0) {$\forall~n\in\mathbb{Z}_{>0}$ and $\forall~T\!>\!0$: \\ \small{\eqref{eq:sample_data_LTI} is $T$-DTC} and GES  };
\draw[->]  (v1_1) edge  node[midway, above, sloped] {} (v1_2) ;
\draw[->] (v1_2) edge node[midway, right] {} (v2_2);
\draw[->]  (v2_1) edge node[midway, above, sloped] {} (v2_2);
\draw[->] (v1_1) edge node[midway, above, right] {} (v2_1);
\end{tikzpicture}
}

 \caption{   
 Detailed implication diagram corresponding to Fig.~\ref{fig:brief_diagram} for 
 the CT-DT LTI system~\eqref{eq:sample_data_LTI}. In contrast to the general nonlinear case
(Fig.~\ref{fig:general_sys_diagram}), the claim in the bottom-right block additionally includes the $T$-DTC property.
 }
\vspace{-.5cm}
    \label{fig:LTI_diagram}
\end{figure}

Comparing the implication diagram for the interconnection of LTI systems~\eqref{eq:sample_data_LTI} with the one for~\eqref{eq:sample_data},  the main difference is in the implication $\circled{3} \implies \circled{4}$: in this case, if the RM is strongly infinitesimally contractive and the DT sub-system~\eqref{eq:sample_data-discrete_LTI} is contractive, then the interconnected system~\eqref{eq:sample_data_LTI} is $T$-DTC for any $T < T(n)$.  
This stronger result is due to the linearity of the discretization of system~\eqref{eq:sample_data-plant_LTI} and the linearity of $z^*(x)$; the result  is formalized in the following.  

\vspace{.1cm}

\begin{proposition}[\circled{3} $\implies$ \circled{4} for~\eqref{eq:sample_data_LTI}]
Consider the interconnected system~\eqref{eq:sample_data_LTI}. Assume that the matrix 
$A\!+\!B(I_{n_z}\!-\!D)^{-1}C$ is Hurwitz and that $D$ is Schur. Then, for any
$n\in\mathbb{Z}_{>0}$ and $T<T(n)$ (with $T(n)$ as in~\eqref{eq:Tn}), the
following holds for any two solutions $\solyone(t)$ and $\solytwo(t)$ of
\eqref{eq:sample_data_LTI}:
\begin{align}
    \label{eq:dtc-lti}
    \|\solyone(kT) \! - \! \solytwo(kT)\| \leq \left(\frac{1\!+\!\rho(\mathcal{L}(n,T))}{2}\right)^k\|\solyone(0) \! - \! \solytwo(0)\| 
\end{align}
for any $k \in \mathbb{Z}_{\geq 0}$, where $\mathcal{L}(n,T) \in \mathbb{R}^{(n_x+n_z) \times (n_x+n_z)}$ is Schur stable, with block entries:   $\mathcal{L}_{11}:=\e^{AT}$,  $\mathcal{L}_{12}:= \left(\int_{0}^T \e^{A(T-s)}~ds \right)B$,    
$\mathcal{L}_{21}:=(I_{n_z}\!-\!D^n)(I_{n_z}\!-D)^{-1}C \e^{AT}$,  $    \mathcal{L}_{22}:=    (I_{n_z}\!-D^n)(I_{n_z}\!-\!D)^{-1}C\mathcal{L}_{12}+\!D^n$. 
Moreover, for any $n\in\mathbb{Z}_{>0}$ and $T<T(n)$,~\eqref{eq:sample_data_LTI} renders the origin GES. 
\hfill $\Box$
\end{proposition}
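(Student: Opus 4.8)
The plan is to reduce the proposition to the general nonlinear result in Theorem~\ref{prop:Cont_Disc_21_to_22_general_case}, exploiting the fact that for the LTI system~\eqref{eq:sample_data_LTI} the exact sampled dynamics are governed by the single matrix $\mathcal{L}(n,T)$. First I would derive this matrix explicitly. Since $z(t)\equiv z_{k-1}$ on $[(k-1)T,kT)$, the variation-of-constants formula gives $x(kT)=\e^{AT}x((k-1)T)+\bigl(\int_0^T \e^{A(T-s)}\,ds\bigr)Bz_{k-1}$, which yields the first block row $[\mathcal{L}_{11},\mathcal{L}_{12}]$. Using the geometric identity $\sum_{i=0}^{n-1}D^i=(I_{n_z}-D^n)(I_{n_z}-D)^{-1}$ (valid since $D$ is Schur, so $I_{n_z}-D$ is invertible) in~\eqref{eq:sample_data-discrete_LTI} and substituting the expression just obtained for $x(kT)$ produces the second block row $[\mathcal{L}_{21},\mathcal{L}_{22}]$. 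Writing $y(kT)=[x(kT)^\top,z_k^\top]^\top$, this establishes the one-step recursion $y(kT)=\mathcal{L}(n,T)\,y((k-1)T)$; by linearity the difference of any two solutions obeys $\solyone(kT)-\solytwo(kT)=\mathcal{L}(n,T)^k\bigl(\solyone(0)-\solytwo(0)\bigr)$.

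The crux is to show that $\mathcal{L}(n,T)$ is Schur stable for $T<T(n)$. I would obtain this from Theorem~\ref{prop:Cont_Disc_21_to_22_general_case} applied to the LTI instance $f(x,z)=Ax+Bz$, $\sfG(x,z)=Cx+Dz$. Since $D$ is Schur and $A+B(I_{n_z}-D)^{-1}C$ is Hurwitz, I would first fix a norm $\|\cdot\|_\mathcal{Z}$ with $\|D\|_\mathcal{Z}<1$ and, independently, a norm $\|\cdot\|_\mathcal{X}$ with $\mu_{\|\cdot\|_\mathcal{X}}(A+B(I_{n_z}-D)^{-1}C)<0$; both exist because Schur/Hurwitz matrices admit such adapted norms. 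With these choices Assumptions~\ref{as: existence-of-weak-pairing}--\ref{as:contrDiscr-2} hold with $\Lip_z(\sfG)=\|D\|_\mathcal{Z}$, $\Lip_x(\sfG)=\|C\|_{\mathcal{X}\to\mathcal{Z}}$, $\Lip_z(f)=\|B\|_{\mathcal{Z}\to\mathcal{X}}$, $\Lip_x(f)=\|A\|_\mathcal{X}$, while $\osLip(f(x,z^*(x)))=\mu_{\|\cdot\|_\mathcal{X}}(A+B(I_{n_z}-D)^{-1}C)=:-\zeta<0$ and $\xi:=\osLip_x(f)=\mu_{\|\cdot\|_\mathcal{X}}(A)$; these are exactly the constants entering~\eqref{eq:Tn}. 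Theorem~\ref{prop:Cont_Disc_21_to_22_general_case} then guarantees GES of~\eqref{eq:sample_data_LTI} for every $T<T(n)$, which settles the ``moreover'' claim. Evaluating the GES bound at the sampling instants $t=kT$ gives $\|\mathcal{L}(n,T)^k\|\le r\,\e^{-akT}\to 0$, whence $\rho(\mathcal{L}(n,T))\le \e^{-aT}<1$, i.e.\ $\mathcal{L}(n,T)$ is Schur. (Equivalently, the entrywise comparison bound built from $\overline{\mathcal{A}}(n,T)$ in the proof of Theorem~\ref{prop:Cont_Disc_21_to_22_general_case}, being Schur for $T<T(n)$, forces $\mathcal{L}(n,T)^k\to 0$ directly.)

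Finally I would upgrade Schur stability to the explicit $T$-DTC estimate~\eqref{eq:dtc-lti}, which is the qualitative gain over the nonlinear case. Here I use the standard fact that for any matrix $M$ and any $\varepsilon>0$ there is a vector norm whose induced norm satisfies $\|M\|\le\rho(M)+\varepsilon$. Applying this to $M=\mathcal{L}(n,T)$ with $\varepsilon=\tfrac{1}{2}\bigl(1-\rho(\mathcal{L}(n,T))\bigr)>0$ produces a norm in which $\|\mathcal{L}(n,T)\|\le\tfrac{1}{2}\bigl(1+\rho(\mathcal{L}(n,T))\bigr)<1$; submultiplicativity then gives $\|\solyone(kT)-\solytwo(kT)\|=\|\mathcal{L}(n,T)^k(\solyone(0)-\solytwo(0))\|\le\bigl(\tfrac{1+\rho(\mathcal{L}(n,T))}{2}\bigr)^k\|\solyone(0)-\solytwo(0)\|$, which is precisely~\eqref{eq:dtc-lti}. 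Note that the realizing norm depends on $\mathcal{L}(n,T)$, hence on the pair $(n,T)$.

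The step I expect to require the most care is the reduction to Theorem~\ref{prop:Cont_Disc_21_to_22_general_case}: one must check that the \emph{independently} chosen norms $\|\cdot\|_\mathcal{X}$, $\|\cdot\|_\mathcal{Z}$ simultaneously realize all three Assumptions and make $\zeta>0$ well-defined, so that the threshold $T(n)$ in~\eqref{eq:Tn} is the very one referenced in the statement. The reason $\circled{3}\!\implies\!\circled{4}$ now delivers $T$-DTC and not merely GES is precisely that the exact linear map $\mathcal{L}(n,T)$ reduces matters to the spectral radius of a single matrix: Schur stability converts, via an adapted norm, into a genuine per-step contraction, whereas in the nonlinear setting the comparison matrix $\overline{\mathcal{A}}(n,T)$ controls the sampled trajectory only up to a transient constant $\varrho\ge 1$.
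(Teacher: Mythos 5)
Your proposal is correct and takes essentially the same approach as the paper, whose own (omitted, sketched) proof likewise derives the exact sampled recursion $y(kT)=\mathcal{L}(n,T)\,y((k-1)T)$, obtains GES and Schur stability of $\mathcal{L}(n,T)$ from Theorem~\ref{prop:Cont_Disc_21_to_22_general_case}, and then invokes an adapted norm satisfying $\|\mathcal{L}(n,T)\|<\frac{1+\rho(\mathcal{L}(n,T))}{2}$ to conclude~\eqref{eq:dtc-lti}. The details you supply --- the adapted-norm choices on $\mathcal{X}$ and $\mathcal{Z}$ realizing Assumptions~\ref{as: existence-of-weak-pairing}--\ref{as:contrDiscr-2}, and the argument that decaying powers force $\rho(\mathcal{L}(n,T))<1$ --- are precisely the ``repetitive arguments'' the paper chose to omit.
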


\vspace{.1cm}

The bound~\eqref{eq:dtc-lti} establishes that the interconnected system~\eqref{eq:sample_data_LTI} is $T$-DTC. We omit the proof to avoid repetitive arguments and save space. However, as a sketch, GES follows from Theorem \ref{prop:Cont_Disc_21_to_22_general_case}, which also implies in this specific setup that $\mathcal{L}(n,T)$ is Schur stable with any $n\in\mathbb{Z}_{>0}$ and $T<T(n)$. Finally, for any $n\in\mathbb{Z}_{>0}$ and $T<T(n)$, by \cite[Lemma 2.17]{FB:26-CTDS}, there exists a norm $\| \cdot\|$ on $\mathbb{R}^{n_x+n_z}$ such that the induced matrix norm  $\|\mathcal{L}(n,T) \|<\frac{1+\rho(\mathcal{L}(n,T))}{2}$.

\section{Application to Model Predictive Control}
\label{sec:mpc}

As previously discussed, a key motivation for our analysis is optimization-based control. In this setting, $\dot{x}(t) = f(x(t), z(t))$ as  in~\eqref{eq:sample_data-plant} is a given \emph{plant} and:

\noindent $\bullet$ $z^*(x(t))$ is a state feedback law that is implicitly defined as the optimal solution of a parametric optimization problem -- where the plant state $x(t)$ serves as the problem's parameter.

\noindent $\bullet$ $\sfG(x,z)$ is the map of an optimization algorithm utilized to compute $z^*(x)$ via $\lim_{n \rightarrow \infty}\sfGn(x,z_0) = z^*(x)$; and,

Using the interconnected system~\eqref{eq:sample_data} in lieu of the RM~\eqref{eq: reduced dynamics} represents a case where the control input $z(t)$ is produced in a sampled-data and online fashion, i.e., based on an early termination of the optimization algorithm.   
Our implication diagram suggests two approaches to designing a  controller:

\vspace{.1cm}

\noindent $\bullet$ $\circled{3}\!\implies\!\circled{4}$ \emph{(RM-based design)}: Design the controller $z^*(x)$ so that the closed-loop system $\dot{x}(t) = f(x(t), z^*(x(t)))$ is contractive, and then utilize Theorem~\ref{prop:Cont_Disc_21_to_22_general_case} or Corollary~\ref{corollary:local_Cont_Disc_21_to_22_general_case} to impose constraints on $n$ and $T$. 

\vspace{.1cm}

\noindent $\bullet$ $\circled{1} \!\implies\!\circled{2}$ \emph{(Small-gain-based design)}: Under the stronger assumption that the plant $\dot{x}(t) = f(x(t), z(t))$ is already contractive (i.e., $\osLip_x(f) < 0$), design the controller using Theorem~\ref{prop:Cont_Disc_11_to_12_general_case}. 

\vspace{.1cm}

Obviously, the second approach implies that the RM $\dot{x}(t) = f(x(t), z^*(x(t)))$ is contractive in view of the implication $\circled{1} \! \implies \! \circled{3}$. In the following, we illustrate a specific case where $z^*(x)$ is given by an MPC strategy~\cite{rawlings2017model,gros2020linear}.

\subsection{Batch and online MPC}

Consider an LTI system $\dot x(t)=Ax(t)+ B u(t)$, where $A \in \mathbb{R}^{n_x\times n_x}$ and $B\in \mathbb{R}^{n_x\times n_u}$. To formulate an instance of MPC for this system, let  $\Delta > 0$  be an interval used to discretize the dynamics of the LTI system, and let $H$ be the length of the MPC  horizon (here, $\Delta > 0$ is an MPC design parameter that is used to build the predictive horizon). Define the matrices: 
\begin{align}
\bar{A}:= \e^{A\Delta},~~~\bar{B}:=\left(\int_{0}^\Delta \e^{A(\Delta-s)}~ds\right) B \, .
\end{align}

At a given time $kT$, instances of the MPC problem can then be built based on the discretized dynamics 
$x_{k,i+1} = \bar{A} x_{k,i} +\bar{B} z_{k,i}$, $i = 1, \ldots, H$, with $x_{k,1} = x(kT)$. We first consider the following unconstrained MPC formulation~\cite{morari1988model} (and comment on constrained versions in Section~\ref{sec:conclusions}) that is solved at each time $t=k T$, $k \in \mathbb{Z}_{> 0}$
:
\vspace{-0.1cm}
\begin{align}
z^*(x(kT)) \! = \! 
  & \argmin_{\substack{z \in \mathbb{R}^{n_u H} \\ x \in \mathbb{R}^{n_x (H\!+\!1)}}}  
 \sum_{i = 1}^{H} \!\left(U_i(z_i) \!+\! J_i(x_i) \right) \!+\! J_{H+1}(x_{H+1}) \nonumber \\
& \text{s. to:~} x_{i+1} = \bar{A} x_{i} +\bar{B} z_{i}, ~~ i = 1, \ldots, H\! \nonumber \\
& ~~~~~~~ x_{1} =  x(kT) , \label{eq:mpc-unconstrained} 
\end{align} 
where $z_i \mapsto U_i(z_i)$ is a cost function associated with the input $z_i\in\mathbb{R}^{n_u}$ and $x_i \mapsto J_i(x_i)$ is a cost function associated with the state of the system at time $kT + (i-1) \Delta $. We note that the functions $U_i(z_i)$ and $J_i(x_i)$ can also include (soft) barrier functions associated with input and state constraints, respectively. Once~\eqref{eq:mpc-unconstrained} is solved (i.e., a \emph{batch} solution), a sampled-data implementation of the MPC amounts to:
\begin{subequations}
\label{eq:mpc-online-optimal} 
\begin{align} 
\dot x(t)&=Ax(t)+ B \Pi_1 z(t) , \\
    z_{k} & = z^*(x(k T)), \\
   z(t) & = z_{k} , \,\, t \in [kT, (k+1)T), 
\end{align}
\end{subequations}
where $\Pi_1 = [\mathbf{I}_{n_u \times n_u}, \mathbf{0}_{n_u \times n_u (H-1)}]$  is  a matrix  that extracts the block $z_1$ from $z^*(x(k T))$. Here, at each time instant $k T$, the MPC problem is solved over the predictive interval $kT, kT+\Delta, \ldots, kT+ H\Delta$, and the input $z_1^*(x(k T)) = \Pi_1 z^*(x(k T))$ is applied over the interval $[kT, (k+1)T)$~\cite{morari1988model}.   

We consider now the ideal case where the MPC problem can be solved continuously in time (i.e., for $T \rightarrow 0^+$)~\cite{figura2024instant}; that is, given the state $x(t)$, $t \in \mathbb{R}$, the following problem is solved instantaneously  over a rolling predictive horizon $[t, t+H\Delta]$: 
\begin{align}
z^*(x(t)) \! = \! & \argmin_{\substack{z \in \mathbb{R}^{n_u H} \\ x \in \mathbb{R}^{n_x (H\!+\!1)}}}   \sum_{i = 1}^{H} \!\left(U_i(z_i) \!+\! J_i(x_i) \right) \!+\! J_{H+1}(x_{H+1}) \nonumber \\
& \text{s. to:~} x_{i+1} = \bar{A} x_{i} +\bar{B} z_{i}, ~~ i =1, 2, \ldots, H\! \nonumber \\
& ~~~~~~~ x_{1} =  x(t), \label{eq:mpc-unconstrained-continuous}  
\end{align}
where the input to the MPC problem is now $x(t)$. In this case, one has  the closed-loop system:
\begin{align}
\label{eq:mpc-solu}  
\dot x(t)=Ax(t)+ B \Pi_1 z^*(x(t)) \, , ~~ t \geq 0 
\end{align}
for $x(0) \in \mathbb{R}^{n_x}$. The system~\eqref{eq:mpc-solu} corresponds to the RM in our framework. 

Note that the optimization problem~\eqref{eq:mpc-unconstrained} can be reduced to the following form:
\begin{align}
\label{eq:mpc-unconstrained2} 
z^*(x(kT)) \! = \! \argmin_{z \in \mathbb{R}^{n_u H}}  \tilde{U} (z) + \tilde{J}(z,x(kT))  
\end{align}
where we define $\tilde{U}(z) := \sum_{i = 1}^{H}  U_i(z_i)$, $\tilde{J}(z,x(kT)) := \sum_{i = 1}^{H+1} \tilde{J}_i(z, x(kT))$ and each of the functions $\tilde{J}_i(z, x(kT))$ is obtained from $J_i(x_i)$ by  repeatedly substituting the dynamics for the state; see, for example,~\cite[Section~14.2]{gros2020linear}. Assuming that the cost in~\eqref{eq:mpc-unconstrained2} is $\mu$-strongly convex and $\ell$-smooth in $z$ uniformly in $x$, one can utilize the following gradient descent map:
\begin{align}
\label{eq:mpc-G} 
\sfG_{\textsf{mpc}}(x,z) := z - \eta \left( \nabla_z \tilde{U}(z) +  \nabla_z \tilde{J}(z; x) \right)
\end{align}
where $\eta \in (0, 2 \mu/\ell^2)$ is the step size. With this choice of $\eta$, the map $z \mapsto \sfG(x,z)$ is contractive with respect to $\|\cdot\|_2$. Then, a sampled-data and \emph{sub-optimal} implementation of the MPC is: 
\begin{subequations}
\label{eq:mpc-online} 
\begin{align} 
\dot x(t)&=Ax(t)+ B \Pi_1 z(t) , \\
    z_{k} & = \sfG_{\textsf{mpc}}^{n}(x(kT),z_{k-1}), \\
   z(t) & = z_{k} , \,\, t \in [kT, (k+1)T) .
\end{align}
\end{subequations}
When the MPC is designed so that the closed-loop system~\eqref{eq:mpc-solu} is strongly infinitesimally contractive, the results of Theorem~\ref{prop:Cont_Disc_21_to_22_general_case} and Corollary~\ref{corollary:local_Cont_Disc_21_to_22_general_case} can be readily applied to the sub-optimal MPC~\eqref{eq:mpc-online} to identify requirements on $n$ and $T$ in order to preserve stability. 

Interestingly, an important aspect of  Theorem~\ref{prop:Cont_Disc_21_to_22_general_case} in the context of sub-optimal MPC is that stability can be preserved even when $n = 1$, so long as $T$ is chosen sufficiently small. 
This guarantee relies on the contractivity of the reduced model induced by
the MPC controller, which is conceptually different from and not directly comparable to the classical
assumptions adopted in time-distributed or online MPC frameworks; see,
e.g.,~\cite{liao2020time,liao2021analysis}. 
Moreover, existing works typically require $n>1$ and therefore do not
address the single-iteration case considered here.

\vspace{-0.2cm}
\subsection{Numerical examples}
In this subsection, we consider a double integrator with 
$A=\begin{bmatrix}
    0 & 1 \\ 0 & 0
\end{bmatrix}$, and 
$B=\begin{bmatrix}
    0\\1
\end{bmatrix}$. Then given $\Delta>0$, one has that  $\bar{A}=\begin{bmatrix}
    1 & \Delta \\ 0 & 1
\end{bmatrix}$, $\bar{B}=\begin{bmatrix}
    0.5\Delta^2\\ \Delta
\end{bmatrix}$. Next, we set $\Delta=0.2$ and  consider the MPC with a predictive horizon $H=5$. 

\subsubsection{First experiment} In the first experiment, similar to, e.g.,~\cite{morari1988model} and~\cite[Sec. 2.5.1]{rawlings2017model}, we consider the cost functions $U_i(z_i):=\|z_i\|_R^2=|z_i|^2$ and $J_i(x_i):=\| x_i\|_Q^2=\|x_i \|^2$, $1\leq i\leq H$, where we set $R=1$ and $Q=\mathbf{I}_2$. The terminal cost $J_{H+1}(x)$ is set as $\|x \|_P^2$, where $P$ is the solution to the discrete-time algebraic Riccati equation:
$$
\bar{A}^\top P \bar{A}-(\bar{A}^\top P \bar{B})(\bar{B}^\top P \bar{B}+R)^{-1}(\bar{A}^\top P \bar{B})^\top+Q=0.
$$
In this case, $\mathcal{X}=\mathbb{R}^2$, $\mathcal{Z}=\mathbb{R}^H$, and the mapping $z^*$ in \eqref{eq:mpc-unconstrained} is a linear mapping $z^*(x)=-K_{\textsf{uncon}}x$~\cite[Sec. 2.5.1]{rawlings2017model}.

Numerically, the resulting closed-loop matrix for the RM is 
$$A_\textsf{cl}:=A-B\Pi_1 K_{\textsf{uncon}}=\begin{bmatrix}
             0  &   1 \\
    -0.8412 &  -1.5460
\end{bmatrix},$$ 
 with $\mu_{2,P}(A_\textsf{cl})=\frac{1}{2}\lambda_{\max}\left(PA_\textsf{cl}P^{-1}+A_\textsf{cl}^\top \right)=-0.4407$  and $\alpha(A_\textsf{cl})=-0.7730$. This implies that  $A_\textsf{cl}$ is Hurwitz and the RM is contractive with respect to the norm $\|\cdot\|_{2,P}$ over $\mathbb{R}^2$. We then define the mapping $\sfG_{\textsf{mpc}}(x, z)$ as the gradient descent algorithm introduced in \eqref{eq:mpc-G}. Under this setup, Assumptions \ref{as: existence-of-weak-pairing}-\ref{as:contrDiscr-2} are satisfied and the sampled-data system with any $(n,T)$ renders the set $\mathcal{X}\times\mathcal{Z}$  forward invariant. Therefore, Theorem~\ref{prop:Cont_Disc_21_to_22_general_case} is verified in this scenario.

\begin{figure}
    \centering
    \includegraphics[width=.9\linewidth]{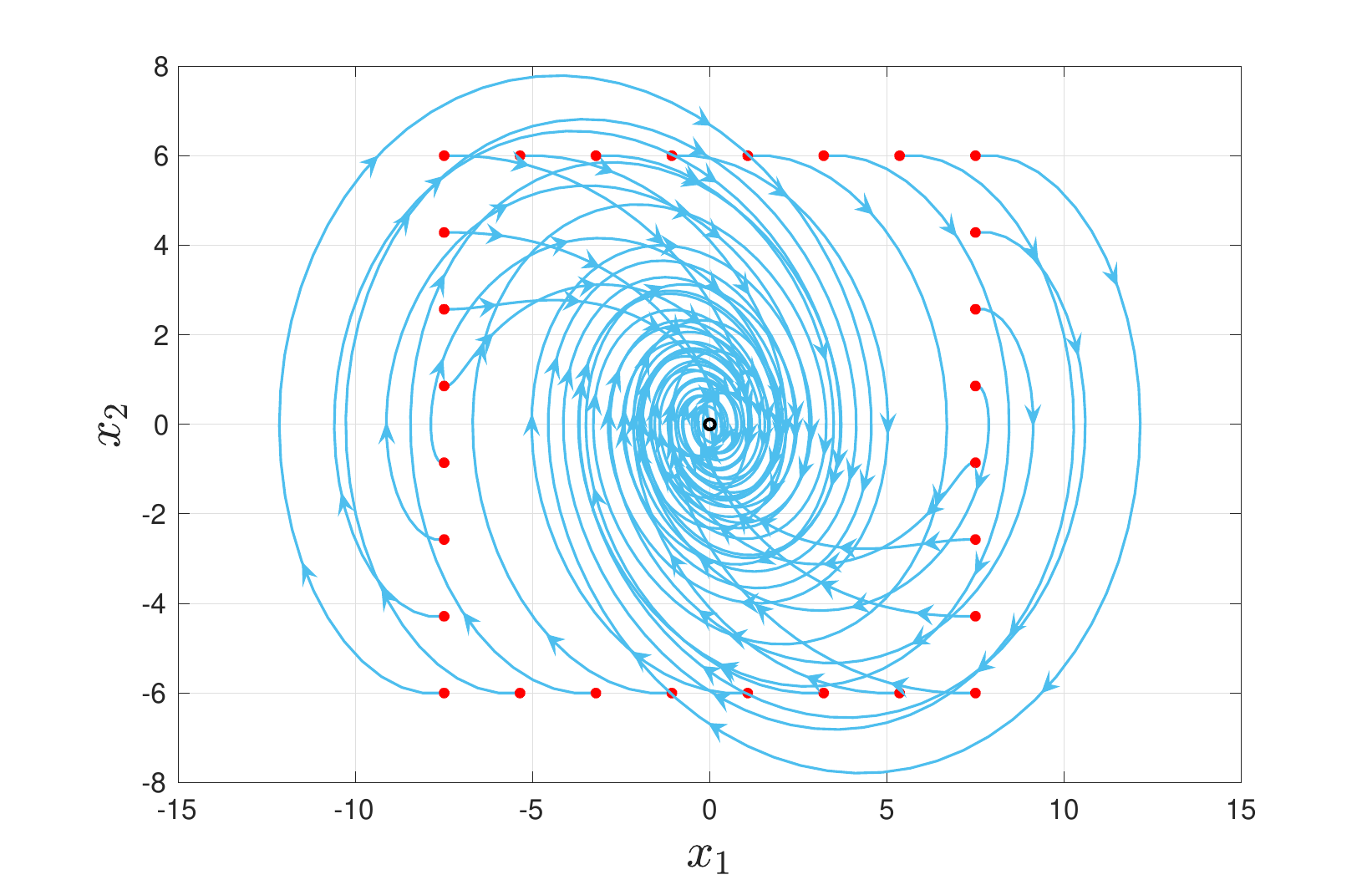}
    \includegraphics[width=.9\linewidth]{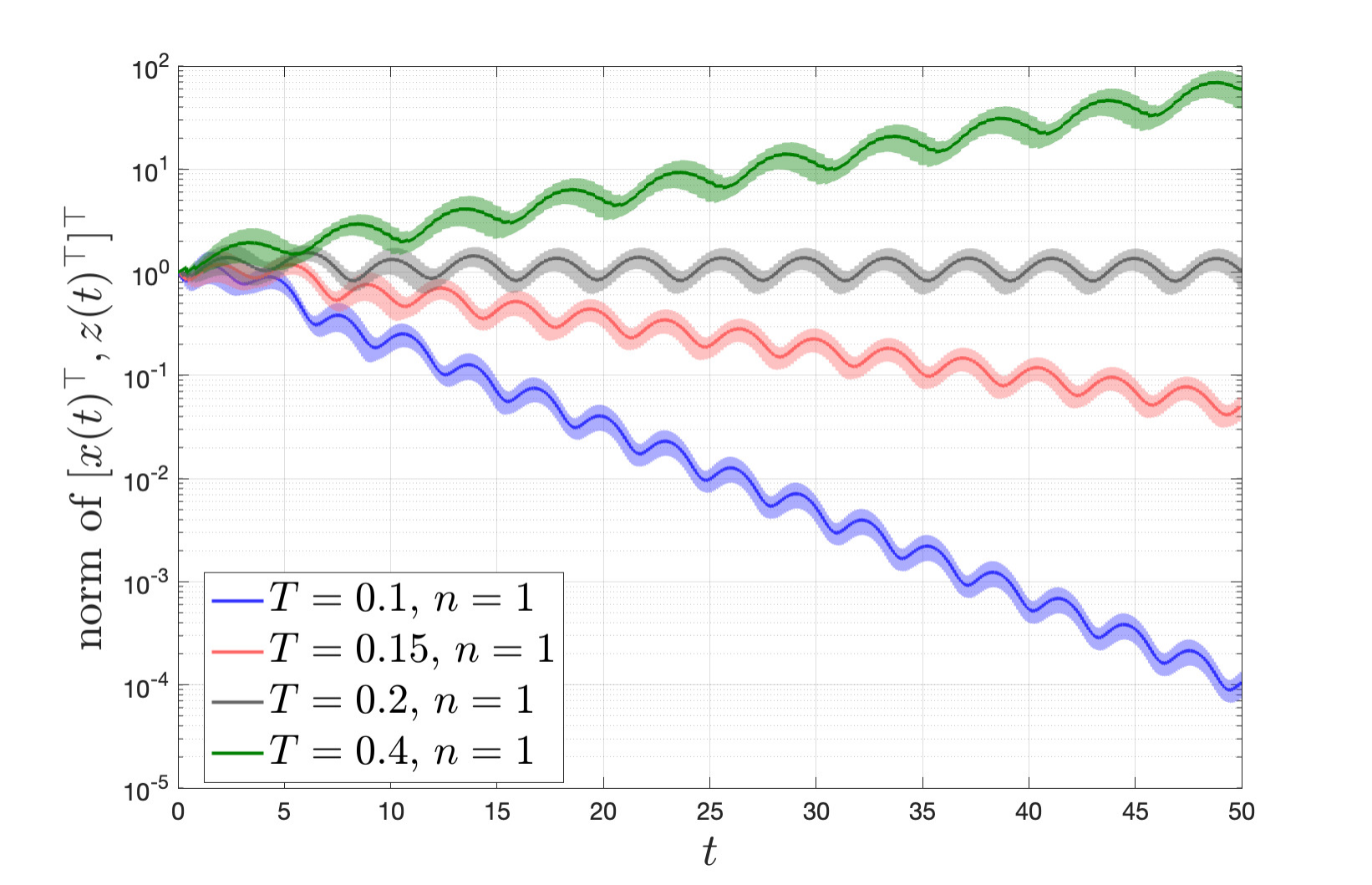}
    \caption{\emph{(Top)} Phase portrait of the suboptimal MPC  when $n = 1$ and  $T=0.1$; the initial conditions of $x$ correspond to the red points, and the initial condition of $z$ is always $[0,0,0,0,0]^\top$. A similar behavior was obtained when considering different initial conditions.   \emph{(Bottom)} Composite norm of $x(t), z(t)$; the plot shows the mean and the confidence band computed from $100$ randomly picked, i.i.d. initial conditions $x(0),z(0)$. It can be seen that the sub-optimal MPC~\eqref{eq:mpc-online} is stable even for $n = 1$ when $T$ is sufficiently small; for larger values of $T$, the closed-loop system~\eqref{eq:mpc-online} becomes unstable. }
    \vspace{-0.2cm}
    \label{fig:uncon-MPC}
\end{figure}
 
In Fig.~\ref{fig:uncon-MPC}, the top plot (phase portrait) illustrates several trajectories of the state $x(t)$ for different initial conditions; here, we set $n = 1$ and $T = 0.1$, with $T$ verifying the conditions of Theorem~\ref{prop:Cont_Disc_21_to_22_general_case}.  The bottom plot shows the $\ell_2$ norm of $[x(t)^\top, z(t)^\top]$ over time, along with a confidence band computed from $100$  independent and identically distributed (i.i.d.) initial conditions. Each initial condition is sampled from the uniform distribution in the interval (0,1), and normalized to have unit $\ell_2$ norm. With $n$ fixed at 1, the system is exponentially stable for small enough values of $T$ and then becomes unstable. This behavior supports our analysis in Theorem~\ref{prop:Cont_Disc_21_to_22_general_case}, showing that a sufficiently small $T$ is necessary to ensure stability when $n = 1$.

\subsubsection{Second experiment} We consider a case where ``soft constraints'' on the state $x(t)$ are added to the cost of the MPC problem~\eqref{eq:mpc-unconstrained}; in this case, the optimal policy $z^*(x)$ can no longer be computed in closed form.  
In particular, we consider the following formulation: 
\begin{align}
z^*(x(t)) \! = \! &\argmin_{\substack{z \in \mathbb{R}^{n_u H} \\ x \in \mathbb{R}^{n_x (H\!+\!1)}}}  \sum_{i = 1}^{H} \!\left( \|x_{i} \|_Q^2 +\| z_i\|_R^2   \right) \!+\! \|x_{H+1}\|_P^2 \nonumber \\
&~~~~~~~~~ +\gamma\sum_{i=1}^{H+1} \left\|\max\left\{\begin{bmatrix}
    0 \\ 0
\end{bmatrix},x_i-\begin{bmatrix}
    10\\3
\end{bmatrix}\right\}\right\|^2 \nonumber \\
&~~~~~~~~~ +\gamma\sum_{i=1}^{H+1} \left\|\max\left\{\begin{bmatrix}
    0 \\ 0
\end{bmatrix},\begin{bmatrix}
    -10\\-3
\end{bmatrix}-x_i\right\}\right\|^2 \nonumber \\
& \text{s. to:~} x_{i+1} = \bar{A} x_{i} +\bar{B} z_{i}, ~~ i = 1, \ldots, H\! \nonumber \\
& ~~~~~~~ x_1 = x(t),  
\label{eq:soft-constrained MPC}
\end{align}
where  $\gamma>0$,  the $\max$ operator is applied entry-wise, and the rest of the parameters are the same as in the unconstrained case.  Intuitively, the soft barrier penalizes violations of constraints of the form 
$[-10, -3]^\top \leq x_i\leq [10, 3]^\top$, $i=1,2,..,H+1$. 

When $\gamma$ is small, we expect a behavior similar to the one analyzed previously (which one recovers by setting $\gamma = 0$), where the RM is globally contractive; on the other hand, an interesting behavior is noted numerically with the increase of $\gamma$:  when 
$\gamma$ is sufficiently large,~\eqref{eq:soft-constrained MPC} approximates solutions of an MPC with hard constraints on the state $x$, and the contractivity of the associated RM is observed not to be global. This is evident from Fig.~\ref{fig:soft-MPC}, where we show the contour plot of $\mu_{2,P}(A+B\Pi_1 J_{z^*}(x))$, which is the log-norm of the Jacobian of the reduced model, obtained numerically by
applying central difference with stepsize $0.01$. The figure considers three cases: $\gamma = 1$ (left), $\gamma = 10$ (center), and $\gamma = 100$ (right). 

\begin{figure}[h]
    \vspace{-0.2cm}
    \centering
    \includegraphics[width=1.0\linewidth]{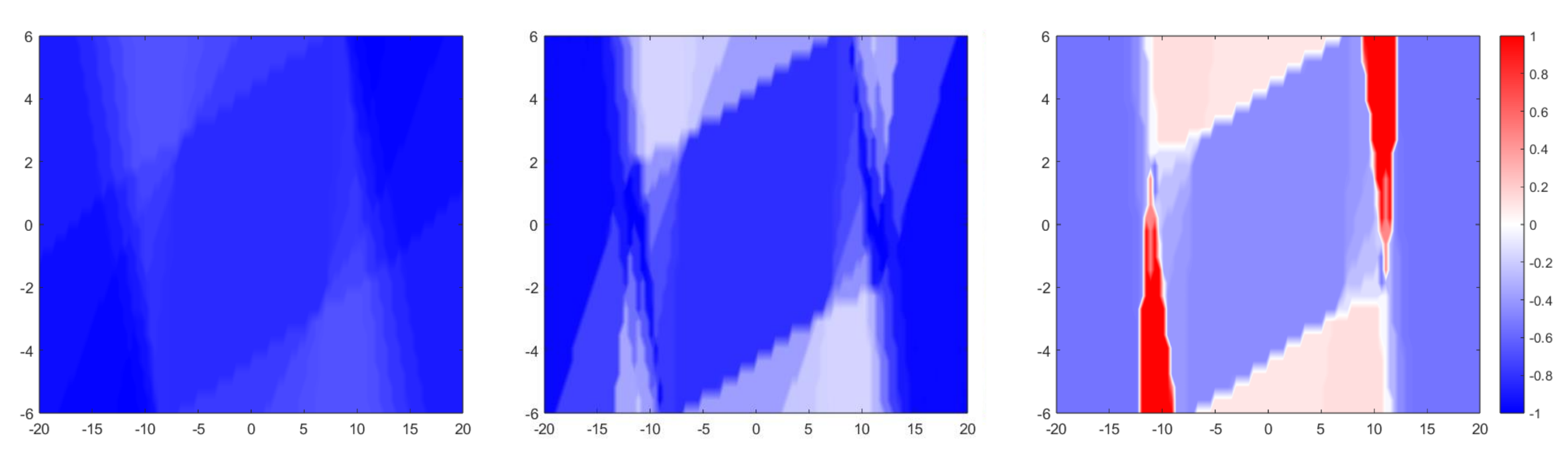}
    \caption{Contour plot of $\mu_{2,P}(A+B\Pi_1 J_{z^*}(x))$, where $z^*(x)$ is the optimal solution map of~\eqref{eq:soft-constrained MPC}; this represents an estimate of the one-sided Lipschitz constant of the RM, for different values of $\gamma$ in~\eqref{eq:soft-constrained MPC}. The blue region signifies contractivity of the RM. The plot shows the cases  $\gamma = 1$ (left), $\gamma = 10$ (center), and $\gamma = 100$ (right).}
    \vspace{-0.2cm}
    \label{fig:soft-MPC}
\end{figure}

In this case, the numerical results presented are used to validate the Corollary~\ref{corollary:local_Cont_Disc_21_to_22_general_case}, since contractivity of the RM is not global. In the following numerical experiment, we fix the penalty rate as $\gamma=10$. Moreover, we set  $\mathcal{X}_0= [-10,10]\times [-3,3]$ and $\mathcal{X}=[-20,20]\times [-6,6]$, based on the contour plot in Fig.~\ref{fig:soft-MPC}(center). We note that the cost function in \eqref{eq:soft-constrained MPC} is strongly convex and smooth since the Hessian of the penalty term satisfies that $0_{2\times2 }\preceq  \text{Hessian}\preceq2 \gamma \mathbf{I}_{2\times2 }$. Hence, we again utilize the gradient descent defined in \eqref{eq:mpc-G} as $\sfG_{\textsf{mpc}}(x,z)$. Since the unique equilibrium of the RM is $[0,0]^\top$, we can simply set $\mathcal{Z}_0$ (the set of initial conditions for $z$) as $\mathcal{Z}_0=\{ 
 z^*(0) \}=\{\mathbf{0}_H\}$ and set $\mathcal{Z}=\mathbb{R}^H$. Under this setup, we ensure the satisfaction of Assumptions \ref{as: existence-of-weak-pairing}-\ref{as:contrDiscr-2}.

\begin{figure}[t]
    \centering
\includegraphics[width=.9\linewidth]{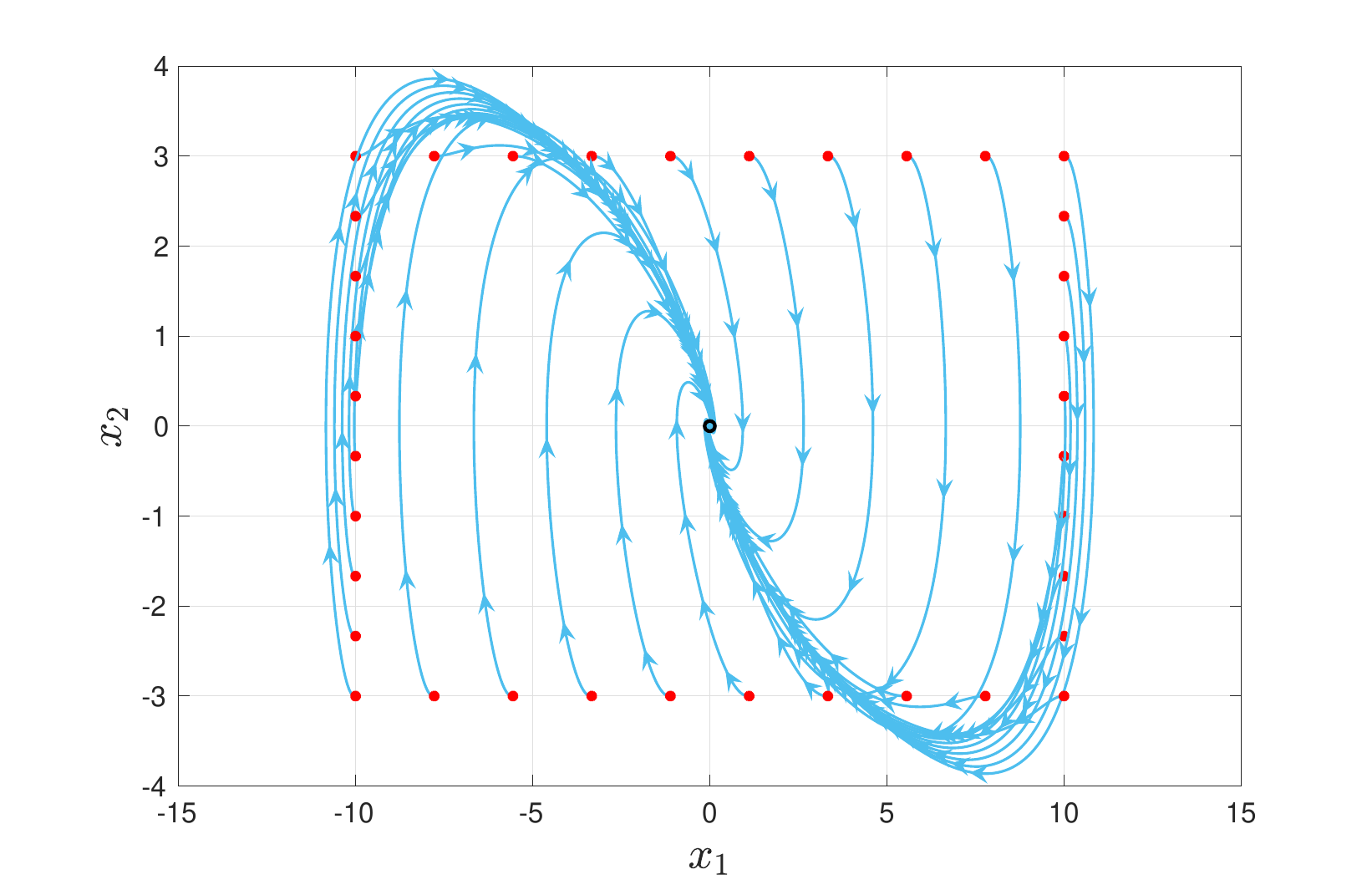}\\    
\includegraphics[width=.9\linewidth]{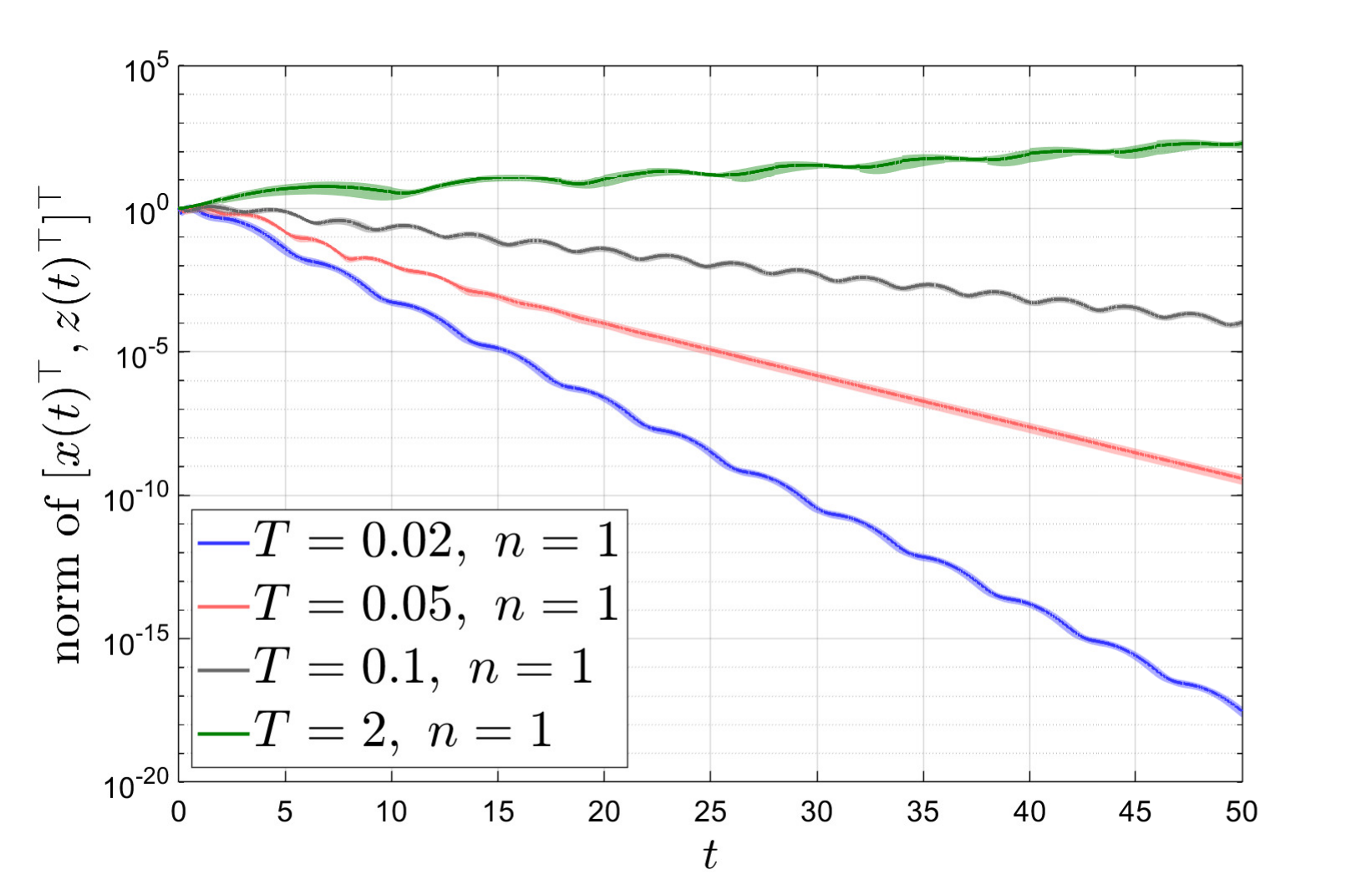}
    \caption{\emph{(Top)} Phase portrait of the suboptimal MPC  when $n = 1$ and  $T=0.02$; the initial conditions of $x$ correspond to the red points, and the initial condition of $z$ is always the origin.   \emph{(Bottom)} Composite norm of $x(t), z(t)$; the plot shows the mean and the confidence band computed from $100$ randomly picked, i.i.d. initial conditions $x(0),z(0)$.  }
    \label{fig:soft-MPC, single iteration}
    \vspace{-0.3cm}
\end{figure}

Next, for any given $(n,T)$ and any initial condition
$(x_0,z_0)\in\mathcal{X}_0\times\mathcal{Z}_0$, we need to verify that the
solution $(x(t),z(t))$ does not exit the set
$\mathcal{X}\times\mathcal{Z}$.  In the top plot of
Fig.~\ref{fig:soft-MPC, single iteration}, we verify this condition with
$(n,T)=(1,0.02)$. The phase portrait is generated with initial condition
$x$ on the boundary of $\mathcal{X}_0$ and initial condition
$z=\mathbf{0}_H$. One can observe that all the solutions stay in
$\mathcal{X}$.  The bottom plot of Fig.~\ref{fig:soft-MPC, single iteration} certifies that when $n = 1$ and $T=0.02$, the origin is
exponentially stable with any initial conditions $[x_0^\top,z_0^\top]^\top
\in \mathcal{X}_0\times\mathcal{Z}_0.$ Similar to the previous example,
increasing $T$ eventually leads to instability.

As a future step, we will investigate the application of Corollary~\ref{corollary:local_Cont_Disc_21_to_22_general_case} for the case of an MPC with hard constraints on both the inputs $z$ and the state $x$, as highlighted in Section~\ref{sec:conclusions}.

\section{Proofs}
\label{sec:proofs}
\subsection{Useful lemmas}

In this subsection, we provide some results that will be leveraged in the subsequent development.

\vspace{.1cm}

\begin{lemma}[${\sfGn}(x,z)$ is Lipschitz]
\label{lemma: Lip-of-gn}
Consider the map ${\sfG}(x,z)$ and let
Assumptions~\ref{as:Lipinterconnection-2} and~\ref{as:contrDiscr-2}
hold. Then, the map ${\sfGn}(x,z)$ is Lipschitz in both arguments and the
associated (minimal) Lipschitz constants $\Lip_x(\sfGn)$ and
$\Lip_z(\sfGn)$ satisfy:
\begin{subequations}\label{eq:g_n_lipschitz}
		\begin{align}
			\Lip_x(\sfGn) 
            & \leq \Lip_x(\sfG) \frac{1-[\Lip_z(\sfG)]^n}{1-\Lip_z(\sfG)},  \label{eq:g_n_lipschitz:x} \\
			\Lip_z(\sfGn) &\leq [\Lip_z(\sfG)]^n, \label{eq:g_n_lipschitz:z}
		\end{align} 
	\end{subequations} 
respectively. \hfill $\Box$
\end{lemma}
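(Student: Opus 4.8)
The plan is to prove the two bounds separately, with \eqref{eq:g_n_lipschitz:z} following directly from the composition structure and \eqref{eq:g_n_lipschitz:x} from an induction argument on $n$ that exploits the recursion $\sfGn(x,z) = \sfG(x,\sfG^{n-1}(x,z))$.

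First I would establish \eqref{eq:g_n_lipschitz:z}. For fixed $x$, the map $z \mapsto \sfGn(x,z)$ is, by definition, the $n$-fold composition of the single map $z \mapsto \sfG(x,z)$, which is Lipschitz with constant at most $\Lip_z(\sfG)$ by Assumption~\ref{as:contrDiscr-2}, uniformly in $x$. Since the Lipschitz constant of a composition is bounded by the product of the individual Lipschitz constants, this immediately yields $\Lip_z(\sfGn) \leq [\Lip_z(\sfG)]^n$, uniformly in $x$.

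For \eqref{eq:g_n_lipschitz:x}, I would argue by induction on $n$. Write $L_x := \Lip_x(\sfG)$ and $L_z := \Lip_z(\sfG)$. For any $x_1, x_2$ and any fixed $z$, I insert the intermediate term $\sfG(x_1, \sfG^{n-1}(x_2,z))$ and apply the triangle inequality:
\begin{align*}
\|\sfGn(x_1,z) - \sfGn(x_2,z)\| &\leq \|\sfG(x_1, \sfG^{n-1}(x_1,z)) - \sfG(x_1, \sfG^{n-1}(x_2,z))\| \\
&\quad + \|\sfG(x_1, \sfG^{n-1}(x_2,z)) - \sfG(x_2, \sfG^{n-1}(x_2,z))\|.
\end{align*}
The first term is bounded by $L_z \|\sfG^{n-1}(x_1,z) - \sfG^{n-1}(x_2,z)\|$ using the $z$-Lipschitz property of Assumption~\ref{as:contrDiscr-2}, and the second by $L_x \|x_1 - x_2\|$ using the $x$-Lipschitz property (uniform over the second argument) of Assumption~\ref{as:Lipinterconnection-2}. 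Taking the supremum over $x_1 \neq x_2$ and over $z$ then gives the scalar recursion $\Lip_x(\sfGn) \leq L_z\, \Lip_x(\sfG^{n-1}) + L_x$, with base case $\Lip_x(\sfG^1) = L_x$.

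Finally I would solve this linear recursion: unrolling it yields $\Lip_x(\sfGn) \leq L_x \sum_{i=0}^{n-1} L_z^{\,i} = L_x \frac{1 - L_z^n}{1 - L_z}$, where the closed form uses $L_z < 1$ from Assumption~\ref{as:contrDiscr-2}; this is precisely \eqref{eq:g_n_lipschitz:x}. The argument is entirely elementary, and I do not anticipate a substantive obstacle beyond careful bookkeeping of one point: both the $x$- and $z$-Lipschitz constants of $\sfG$ must hold \emph{uniformly} in the other argument, so that the two intermediate-term bounds above are valid independently of the (unknown) value of $\sfG^{n-1}(x_2,z)$ and of $z$ itself. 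This uniformity is exactly what Assumptions~\ref{as:Lipinterconnection-2} and~\ref{as:contrDiscr-2} provide, so the induction closes without further hypotheses.
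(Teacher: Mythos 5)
Your proposal is correct and follows essentially the same route as the paper's proof: an induction on $n$ using the recursion $\sfGn(x,z)=\sfG(x,\sfG^{n-1}(x,z))$, a triangle-inequality split with an intermediate term (you swap which argument is changed first, but this is immaterial since both Lipschitz constants are uniform in the other argument), and the geometric-series closed form for the resulting linear recursion. The only cosmetic difference is that you invoke the standard composition rule for the $z$-bound \eqref{eq:g_n_lipschitz:z} rather than writing out the induction step explicitly, which is the same argument in compressed form.
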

\vspace{.1cm}
\begin{proof}
     In this proof, norm subscripts are omitted whenever the underlying space is clear from the argument. We prove the bounds by induction. The bounds in \eqref{eq:g_n_lipschitz} are trivially satisfied for $n=1$. Then, suppose the bounds hold for $n=k$, $k\in\mathbb{Z}_{>0}$. Then, focusing first on~\eqref{eq:g_n_lipschitz:x}, compute for $n=k+1$:
    \begin{align*}
        & \hspace{-.7cm} \|\sfGn (\solxone,z)  - \sfGn(\solxtwo, z)\|  \\
        = & \|\sfG(\solxone , \sfG^{k}(\solxone,z))  - \sfG(\solxtwo , \sfG^{k}(\solxtwo,z))\|\\
         \leq &\|\sfG(\solxone , \sfG^{k}(\solxone,z))  - \sfG(\solxtwo , \sfG^{k}(\solxone,z))\|\\
        &+ \|\sfG(\solxtwo , \sfG^{k}(\solxone,z))- \sfG(\solxtwo , \sfG^{k}(\solxtwo,z))\|  \\
         \leq &\Lip_x(\sfG) \| \solxone-\solxtwo\| + \Lip_z(\sfG) \| \sfG^{k}(\solxone,z) - \sfG^{k}(\solxtwo,z) \| \\
         \leq &  (\Lip_x(\sfG) + \Lip_z(\sfG)\Lip_x(\sfG^{k}) )\|\solxone-\solxtwo\|\\
        \leq &   \left(\Lip_x(\sfG) + \Lip_x(\sfG)  \Lip_z(\sfG)\sum_{i=0}^{k-1} [\Lip_z(\sfG)]^i \right)\|\solxone-\solxtwo\|\\
        =&  \left( \Lip_x(\sfG)  \sum_{i=0}^{k} [\Lip_z(\sfG)]^i \right)\|\solxone-\solxtwo\|,
    \end{align*}
    where we use the inductive argument that $\Lip_x(\sfG^{k})\leq  \Lip_x(\sfG)  \sum_{i=0}^{k-1} [\Lip_z(\sfG)]^i $.
    
    Consider now~\eqref{eq:g_n_lipschitz:z}; following similar steps as before, we have that:
   \begin{align*}
 \|\sfGn (x,\solzone)  - \sfGn(x, \solztwo)\| 
        = &\|\sfG(x , \sfG^{k}(x,\solzone))  - \sfG(x , \sfG^{k}(x,\solztwo))\|\\
        \leq & \Lip_z(\sfG)\| \sfG^{k}(x,\solzone) -\sfG^{k}(x,\solztwo)  \|\\
        \leq & \Lip_z(\sfG) \Lip_z(\sfG^{k})\|\solzone -\solztwo  \|\\
        \leq &[\Lip_z(\sfG)]^{k+1}\|\solzone -\solztwo  \|,
        \end{align*}
where the last inequality follows from the inductive assumption. By the definition of minimal Lipschitz constant, we immediately get that the bounds in  \eqref{eq:g_n_lipschitz} also hold for $n=k+1$. This completes the proof by induction. 
\end{proof}

 \vspace{.1cm}

\begin{lemma}[$z^*(x)$ is Lipschitz]
\label{lemma: Lip-of-zstar}
    Consider the map ${\sfG}(x,z)$ 
    and let Assumptions~\ref{as:Lipinterconnection-2} and~\ref{as:contrDiscr-2} hold. Then, the map 
    $x \mapsto z^*(x)$ is Lipschitz and its Lipschitz constant is bounded as
\begin{equation}
\label{eq:z_lipschitz}
	\Lip_x(z^*) \leq \frac{\Lip_x(\sfG)}{1-\Lip_z(\sfG)}.
\end{equation}
where we recall that $\Lip_z(\sfG) < 1$. \hfill $\Box$
\end{lemma}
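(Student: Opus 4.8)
The plan is to exploit the fixed-point characterization of $z^*$ together with the contractivity of the map $z \mapsto \sfG(x,z)$. Under Assumption~\ref{as:contrDiscr-2} one has $\Lip_z(\sfG) < 1$, so by the Picard--Banach--Caccioppoli theorem (as already invoked in the text preceding this statement) the map $z \mapsto \sfG(x,z)$ admits, for each fixed $x$, a unique fixed point $z^*(x)$ satisfying $z^*(x) = \sfG(x, z^*(x))$. I would take this identity as the starting point and write, for arbitrary $\solxone, \solxtwo \in \mathcal{X}$,
\[
z^*(\solxone) - z^*(\solxtwo) = \sfG(\solxone, z^*(\solxone)) - \sfG(\solxtwo, z^*(\solxtwo)).
\]

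The core step is to split this difference by inserting the intermediate term $\sfG(\solxtwo, z^*(\solxone))$ and applying the triangle inequality, so that one contribution varies only the first argument while the other varies only the second. The first is bounded via Assumption~\ref{as:Lipinterconnection-2} by $\Lip_x(\sfG)\,\|\solxone - \solxtwo\|_\mathcal{X}$, and the second via Assumption~\ref{as:contrDiscr-2} by $\Lip_z(\sfG)\,\|z^*(\solxone) - z^*(\solxtwo)\|_\mathcal{Z}$. This yields
\[
\|z^*(\solxone) - z^*(\solxtwo)\|_\mathcal{Z} \leq \Lip_x(\sfG)\,\|\solxone - \solxtwo\|_\mathcal{X} + \Lip_z(\sfG)\,\|z^*(\solxone) - z^*(\solxtwo)\|_\mathcal{Z}.
\]
Because $\Lip_z(\sfG) < 1$, I can move the second term to the left-hand side and divide by the strictly positive quantity $1 - \Lip_z(\sfG)$ to obtain $\|z^*(\solxone) - z^*(\solxtwo)\|_\mathcal{Z} \leq \frac{\Lip_x(\sfG)}{1 - \Lip_z(\sfG)}\,\|\solxone - \solxtwo\|_\mathcal{X}$. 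Taking the supremum over $\solxone \neq \solxtwo$ then gives the claimed bound on $\Lip_x(z^*)$.

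There is no substantive obstacle here: the argument is essentially a one-line implicit-function / fixed-point perturbation estimate. The only points requiring care are \emph{(i)} confirming that $z^*(x)$ is well defined as a genuine single-valued function of $x$, which is exactly the uniqueness granted by Assumption~\ref{as:contrDiscr-2}, and \emph{(ii)} keeping the two norms $\|\cdot\|_\mathcal{X}$ and $\|\cdot\|_\mathcal{Z}$ consistent throughout, since the Lipschitz constants in Assumptions~\ref{as:Lipinterconnection-2} and~\ref{as:contrDiscr-2} are taken with respect to these specific norms. The final rearrangement is valid precisely because $1 - \Lip_z(\sfG) > 0$, which is where the contractivity hypothesis is essential.
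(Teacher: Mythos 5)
Your proof is correct, but it takes a genuinely different route from the paper. The paper does not argue directly from the fixed-point identity: it obtains \eqref{eq:z_lipschitz} as the limit $n \to +\infty$ of the bound \eqref{eq:g_n_lipschitz:x} in Lemma~\ref{lemma: Lip-of-gn}, i.e., since $\sfGn(x,z) \to z^*(x)$ pointwise and $\Lip_x(\sfGn) \leq \Lip_x(\sfG)\frac{1-[\Lip_z(\sfG)]^n}{1-\Lip_z(\sfG)}$, the Lipschitz constant of the limit map is bounded by $\frac{\Lip_x(\sfG)}{1-\Lip_z(\sfG)}$ (citing also \cite[Lemma~1.9]{FB:24-CTDS}). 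Your argument instead uses the classical fixed-point perturbation estimate: writing $z^*(\solxone)-z^*(\solxtwo) = \sfG(\solxone,z^*(\solxone)) - \sfG(\solxtwo,z^*(\solxtwo))$, inserting the cross term $\sfG(\solxtwo,z^*(\solxone))$, and absorbing the $\Lip_z(\sfG)\,\|z^*(\solxone)-z^*(\solxtwo)\|_\mathcal{Z}$ term into the left-hand side using $1-\Lip_z(\sfG)>0$. Your approach is self-contained and more elementary: it needs neither the inductive bound of Lemma~\ref{lemma: Lip-of-gn} nor the (standard but implicit) fact that a pointwise limit of $L_n$-Lipschitz maps is Lipschitz with constant $\limsup_n L_n$, which the paper's limiting argument silently relies on. The paper's route, on the other hand, is economical in context, since Lemma~\ref{lemma: Lip-of-gn} is already established for use elsewhere, and it reinforces the interpretation of $z^*$ as the limit of the Picard iterates $\sfGn$. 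Both proofs hinge on exactly the same contractivity hypothesis $\Lip_z(\sfG)<1$, and your handling of the two norms $\|\cdot\|_\mathcal{X}$, $\|\cdot\|_\mathcal{Z}$ is consistent with Assumptions~\ref{as:Lipinterconnection-2} and~\ref{as:contrDiscr-2}.
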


 \vspace{.1cm}

We note that \eqref{eq:z_lipschitz} in
Lemma~\ref{lemma: Lip-of-zstar} can be obtained
from~\eqref{eq:g_n_lipschitz:x} by taking the limit $n\to+\infty$ (see
also, e.g., \cite[Lemma~1.9]{FB:26-CTDS}). 

\vspace{.1cm}

\begin{lemma}[Conditions for  $\rho\left(M\right) < 1$]
    \label{le:2by2_matrix} Define the matrix $$
    M:=\begin{bmatrix}
            a & b \\ c & d
        \end{bmatrix}
        $$
        where $a,b,c,d\geq 0$. Then,
    \begin{equation*}
        \rho\left(M\right) <1 \Leftrightarrow (1-a)(1-d)> bc, ~a+d < 2 \, .
    \end{equation*}
\end{lemma}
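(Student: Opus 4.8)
The plan is to work directly with the two eigenvalues of $M$, exploiting the sign structure $a,b,c,d \ge 0$. First I would write the characteristic polynomial $\chi(\lambda) = \lambda^2 - (a+d)\lambda + (ad-bc)$ and compute its discriminant $(a+d)^2 - 4(ad-bc) = (a-d)^2 + 4bc$. The crucial first observation is that, because $bc \ge 0$, this discriminant is nonnegative, so \emph{both eigenvalues are real}; writing $\lambda_\pm = \tfrac12\big((a+d) \pm \sqrt{(a-d)^2+4bc}\big)$, we also have $\lambda_+ \ge \tfrac{a+d}{2} \ge 0$. Consequently $\rho(M) = \max\{\lambda_+, |\lambda_-|\}$, and since $\lambda_- \le \lambda_+$, the condition $\rho(M)<1$ is equivalent to the pair of scalar conditions $\lambda_+ < 1$ and $\lambda_- > -1$.

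Next I would translate $\lambda_+ < 1$. Rearranging gives $\sqrt{(a-d)^2+4bc} < 2-(a+d)$, which holds if and only if the right-hand side is positive, i.e.\ $a+d<2$, and the squared inequality holds. A short computation shows $\big(2-(a+d)\big)^2 - \big((a-d)^2+4bc\big) = 4\big[(1-a)(1-d) - bc\big]$, so $\lambda_+ < 1$ is exactly equivalent to $a+d < 2$ together with $(1-a)(1-d) > bc$ --- precisely the two conditions in the statement.

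It then remains to check that $\lambda_->-1$ is automatic under these two conditions, and this is where the nonnegativity is used again. Handling $\lambda_- > -1$ leads, by the same squaring argument (the relevant right-hand side $a+d+2$ being positive), to the condition $(1+a)(1+d) > bc$; the supporting identity here is $(a+d+2)^2 - \big((a-d)^2+4bc\big) = 4\big[(1+a)(1+d) - bc\big]$. The key point is the identity $(1+a)(1+d) - (1-a)(1-d) = 2(a+d) \ge 0$, valid since $a,d \ge 0$; hence $(1+a)(1+d) \ge (1-a)(1-d) > bc$, so $\lambda_- > -1$ follows for free. Combining the two directions yields the claimed equivalence.

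The routine calculations (expanding the squares and verifying the two algebraic identities) are straightforward; the main conceptual obstacle is recognizing that the stated pair of conditions is complete --- that one does \emph{not} also need to list $(1+a)(1+d)>bc$ or $|ad-bc|<1$, as a generic Jury/Schur test would require --- and that this redundancy is exactly what the sign assumptions $a,d\ge 0$ buy us. A secondary point to be careful about is the validity of squaring the inequalities, which requires first establishing positivity of the relevant right-hand sides ($2-(a+d)$ and $a+d+2$) rather than squaring blindly.
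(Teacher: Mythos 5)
Your proposal is correct and follows essentially the same route as the paper: explicit computation of the (necessarily real) eigenvalues, reduction to the largest one, and the same squaring identity $\bigl(2-(a+d)\bigr)^2 - \bigl((a-d)^2+4bc\bigr) = 4\bigl[(1-a)(1-d)-bc\bigr]$. The only organizational difference is that the paper dispatches the smaller eigenvalue in one line by noting $|\lambda_-|\le\lambda_+$ (since the trace $a+d$ is nonnegative), whereas you treat $\lambda_->-1$ as a separate condition and then prove it redundant via $(1+a)(1+d)\ge(1-a)(1-d)>bc$ --- the same nonnegativity fact, packaged differently.
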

\vspace{.2cm}
\begin{proof}
    The eigenvalues of the matrix $M$ are 
\begin{align*}
        \lambda_{1} = \frac{a + d -\sqrt{(a-d)^2 + 4bc }}{2},\\
        \lambda_{2} = \frac{a + d +\sqrt{(a-d)^2 + 4bc }}{2},
\end{align*}
and they both are  real. Since $|\lambda_1|\leq \lambda_2$,
    we conclude that $\rho(M)<1$ if and only if $\lambda_2=\left(a + d +\sqrt{(a-d)^2 + 4bc }\right)/2<1 $. Then, we have the following: 
    \begin{align*}
        & \hspace{-1.4cm} \left(a + d +\sqrt{(a-d)^2 + 4bc }\right)/2<1 \\\Leftrightarrow ~~ & 2-a-d > \sqrt{(a-d)^2 + 4bc} \\
        \Leftrightarrow ~~ & (2-a-d)^2 > (a-d)^2 + 4bc,~2-a-d> 0 \\
        \Leftrightarrow ~~ & 4ad - 4(a+d) + 4 > 4bc,~2-a-d> 0 \\
        \Leftrightarrow ~~ & (1-a)(1-d) > bc,~2-a-d> 0.
    \end{align*}
This concludes the proof. 
\end{proof}

\vspace{.1cm}

\begin{lemma}
\label{lemma:Input-state stability properties}
Consider the system \eqref{eq:sample_data-plant} and let Assumptions~\ref{as: existence-of-weak-pairing} and \ref{as:Lipinterconnection-2} hold. Let $\xi \in \mathbb{R}$ be such that  $\osLip_x(f)\leq\xi$. Then, any two solutions $\solxone(t)$ and $\solxtwo(t)$ of \eqref{eq:sample_data-plant} with input signals $u$, $\bar{u}$: $\mathbb{R}_{\geq 0} \to \mathcal{Z}$, respectively,  satisfy:
    \begin{align}
    \notag
        \|\solxone(t)-\solxtwo(t)\| &\leq  \e^{ \xi t }\|\solxone(0)-\solxtwo(0)\|\\
        &+\Lip_z(f) \int_0^t \e^{\xi(t-s)}\left\|u(s)-\bar{u}(s)\right\|_{\mathcal{Z}} ds.  \tag*{\text{$\Box$}}
    \end{align}
\end{lemma}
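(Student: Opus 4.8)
The plan is to track the error signal $\soleone(t) := \solxone(t) - \solxtwo(t)$ and to reduce the vector bound to a scalar differential inequality for $t \mapsto \norm{\soleone(t)}_\mathcal{X}$, using the weak pairing $\llbracket \cdot ; \cdot \rrbracket$ compatible with $\norm{\cdot}_\mathcal{X}$. Since $\solxone$ and $\solxtwo$ solve~\eqref{eq:sample_data-plant} with inputs $u$ and $\bar u$, one has $\dot{\soleone} = f(\solxone,u) - f(\solxtwo,\bar u)$. First I would add and subtract $f(\solxtwo,u)$ to split the right-hand side into a \emph{state} term $f(\solxone,u) - f(\solxtwo,u)$ and an \emph{input} term $f(\solxtwo,u) - f(\solxtwo,\bar u)$. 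By the curve norm derivative property of weak pairings \cite[Sec.~2.6]{FB:24-CTDS}, the upper Dini derivative obeys $\norm{\soleone(t)}_\mathcal{X}\, D^+\norm{\soleone(t)}_\mathcal{X} \le \llbracket \dot{\soleone}(t); \soleone(t)\rrbracket$, and subadditivity of the weak pairing in its first argument bounds the pairing by the sum of the pairings of the two terms against $\soleone$.

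Next I would bound each term. For the state term, Assumption~\ref{as: existence-of-weak-pairing} ensures that $x \mapsto f(x,u)$ is uniformly one-sided Lipschitz with $\osLip_x(f) \le \xi$, giving $\llbracket f(\solxone,u) - f(\solxtwo,u); \soleone\rrbracket \le \xi\norm{\soleone}_\mathcal{X}^2$. For the input term, the Cauchy--Schwarz inequality for weak pairings combined with Assumption~\ref{as:Lipinterconnection-2} yields $\llbracket f(\solxtwo,u) - f(\solxtwo,\bar u); \soleone\rrbracket \le \norm{f(\solxtwo,u) - f(\solxtwo,\bar u)}_\mathcal{X}\norm{\soleone}_\mathcal{X} \le \Lip_z(f)\norm{u - \bar u}_\mathcal{Z}\norm{\soleone}_\mathcal{X}$. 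Summing and cancelling one factor of $\norm{\soleone(t)}_\mathcal{X}$ (wherever it is nonzero) produces the scalar inequality $D^+\norm{\soleone(t)}_\mathcal{X} \le \xi\norm{\soleone(t)}_\mathcal{X} + \Lip_z(f)\norm{u(t) - \bar u(t)}_\mathcal{Z}$.

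Finally, I would apply the Gr\"onwall comparison lemma to this differential inequality -- equivalently, multiply through by the integrating factor $\e^{-\xi t}$ and integrate -- to arrive at $\norm{\soleone(t)}_\mathcal{X} \le \e^{\xi t}\norm{\soleone(0)}_\mathcal{X} + \Lip_z(f)\int_0^t \e^{\xi(t-s)}\norm{u(s) - \bar u(s)}_\mathcal{Z}\,ds$, which is the claimed bound. The algebra here is routine; the only step requiring care is the weak-pairing/Dini-derivative passage, where one must justify the curve norm derivative inequality and address the instants at which $\norm{\soleone(t)}_\mathcal{X} = 0$, so that the cancellation used above is not directly valid. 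Both points are handled by the standard weak-pairing machinery of \cite[Sec.~2.6]{FB:24-CTDS}: the map $t \mapsto \norm{\soleone(t)}_\mathcal{X}$ is locally Lipschitz, the Dini inequality holds everywhere, and the comparison lemma applies irrespective of the zero set of $\norm{\soleone(\cdot)}_\mathcal{X}$.
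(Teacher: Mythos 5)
Your proof is correct and follows essentially the same route as the paper: the paper omits the proof, noting the lemma is a simple generalization of \cite[Corollary~3.17]{FB:24-CTDS} with the negativity assumption on $\osLip_x(f)$ dropped, and your argument (split $\dot{\soleone}$ into state and input terms, curve-norm-derivative plus subadditivity and Cauchy--Schwarz for the weak pairing, then Gr\"onwall) is precisely the standard weak-pairing proof underlying that corollary. You also correctly identify why the sign of $\xi$ is irrelevant -- the integrating-factor/comparison step never uses $\xi<0$ -- which is exactly the ``generalization'' the paper alludes to.
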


\vspace{.1cm} 

Lemma \ref{lemma:Input-state stability properties} is a simple generalization of \cite[Corollary 3.17]{FB:26-CTDS}. The only difference is that the assumption about negativeness of  $\osLip_x(f(x,z))$ is removed. The proof is omitted. 
Lastly, the following lemma was used in the proof of Theorem \ref{prop:Cont_Disc_21_to_22_general_case}.

\vspace{.1cm} 

\begin{lemma}
\label{lemma: Explicit bound for x(t)- x(0,k-1)}
     Consider the interconnected system \eqref{eq:sample_data} and let Assumptions~\ref{as: existence-of-weak-pairing} and~\ref{as:Lipinterconnection-2} be satisfied. 
     Then, for any solution $[x(t)^\top,z(t)^\top]^\top$  of  \eqref{eq:sample_data}, for any $t\in[(k-1)T,kT)$ and any $\xi\geq \osLip_x(f)$ it holds that: 
     \begin{align*}
         &\|x(t)-x((k-1)T)\|\\
       \leq &  h(T,\xi)\left(\Lip_x(f) \left\|x((k-1)T)\right\|+\Lip_z(f)  \left\|z((k-1)T)\right\| \right),
     \end{align*}
     where function $h$ is defined as \eqref{eq:def-of-h}.

\end{lemma}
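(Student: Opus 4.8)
The plan is to reduce the claim to a scalar differential inequality for the increment $s \mapsto \|x(s) - x((k-1)T)\|$ on $[(k-1)T, t]$ and then integrate it through the comparison lemma. First I would exploit the zero-order hold~\eqref{eq:sample_data-z}: on the interval $[(k-1)T, kT)$ the input is frozen, i.e. $z(s) = z((k-1)T)$ for all $s \in [(k-1)T, kT)$. Introducing $w(s) := x(s) - x((k-1)T)$, we have $w((k-1)T) = \mathbf{0}_{n_x}$ and $\dot w(s) = f(x(s), z((k-1)T))$. The goal is then to bound $\|w(t)\|$.

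Next I would split the drift by adding and subtracting $f(x((k-1)T), z((k-1)T))$, and apply the weak-pairing curve-norm derivative formula together with subadditivity in the first argument (see~\cite[Sec.~2.6]{FB:24-CTDS}). The difference $f(x(s), z((k-1)T)) - f(x((k-1)T), z((k-1)T))$ is controlled by the one-sided Lipschitz property of $x \mapsto f(x,z)$ furnished by Assumption~\ref{as: existence-of-weak-pairing}, contributing at most $\osLip_x(f)\|w(s)\|^2 \leq \xi\|w(s)\|^2$; the remaining constant term $f(x((k-1)T), z((k-1)T))$ is handled by the Cauchy--Schwarz property of the weak pairing. Together these give $\llbracket \dot w(s); w(s)\rrbracket \leq \xi\|w(s)\|^2 + \|f(x((k-1)T), z((k-1)T))\|\,\|w(s)\|$, and hence the Dini inequality $D^+\|w(s)\| \leq \xi\|w(s)\| + \|f(x((k-1)T), z((k-1)T))\|$.

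To evaluate the forcing term I would use the normalization $f(\mathbf{0}_{n_x}, \mathbf{0}_{n_z}) = \mathbf{0}_{n_x}$ and the triangle inequality, writing $f(x((k-1)T), z((k-1)T))$ as a sum of two single-variable increments; Assumptions~\ref{as: existence-of-weak-pairing} and~\ref{as:Lipinterconnection-2} then bound its norm by $M := \Lip_x(f)\|x((k-1)T)\| + \Lip_z(f)\|z((k-1)T)\|$. Since $\|w((k-1)T)\| = 0$, the comparison lemma applied to $D^+\|w\| \leq \xi\|w\| + M$ yields $\|w(t)\| \leq M\int_{(k-1)T}^{t} \e^{\xi(t-s)}\,ds = M\,h(t-(k-1)T, \xi)$. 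Finally, because $t - (k-1)T < T$ and $\tau \mapsto h(\tau,\xi)$ is increasing (as recorded after~\eqref{eq:def-of-h}), I conclude $\|w(t)\| \leq M\,h(T,\xi)$, which is exactly the stated bound.

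The main obstacle I anticipate is the careful bookkeeping with the weak pairing, namely applying subadditivity and Cauchy--Schwarz so that the (possibly negative) one-sided Lipschitz constant $\xi$ multiplies $\|w\|^2$ while the drift enters only linearly. This argument parallels that of Lemma~\ref{lemma:Input-state stability properties} (a generalization of~\cite[Corollary~3.17]{FB:24-CTDS}); the essential new feature is that the comparison is taken against the \emph{frozen} state $x((k-1)T)$ rather than a second genuine trajectory, which is precisely what produces the constant forcing $M$ in place of an input mismatch.
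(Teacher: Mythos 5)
Your proof is correct, and it arrives at exactly the paper's estimate, but by a somewhat different packaging of the same machinery. The paper does not set up a differential inequality directly: it observes that $r(t) := x(t) - x((k-1)T)$ solves the auxiliary system $\dot r = f(r + v, u)$ with \emph{constant} inputs $v = x((k-1)T)$ and $u = z((k-1)T)$, notes that $r_{\mathrm{zero}}(t) \equiv \mathbf{0}_{n_x}$ solves the same system with zero inputs (this is where the normalization $f(\mathbf{0}_{n_x},\mathbf{0}_{n_z}) = \mathbf{0}_{n_x}$ enters), and then invokes Lemma~\ref{lemma:Input-state stability properties} — in a two-input form — to compare these two solutions, which produces $\Lip_x(f)\|x((k-1)T)\|$ and $\Lip_z(f)\|z((k-1)T)\|$ as integrated input mismatches. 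You instead split the drift around the frozen state, bound the constant forcing $\|f(x((k-1)T), z((k-1)T))\|$ by $M := \Lip_x(f)\|x((k-1)T)\| + \Lip_z(f)\|z((k-1)T)\|$ using the same normalization, and integrate the Dini inequality $D^+\|w\| \leq \xi\|w\| + M$; this is precisely what one obtains by unfolding the (omitted) proof of Lemma~\ref{lemma:Input-state stability properties} in this special case. The trade-off: the paper's argument is shorter as a citation, but strictly speaking it applies Lemma~\ref{lemma:Input-state stability properties} beyond its stated single-input form, whereas your inlined derivation is self-contained and sidesteps that extension, at the cost of redoing the weak-pairing bookkeeping (subadditivity, Cauchy--Schwarz, curve-norm derivative) that the lemma encapsulates. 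Both routes end with the same integral $\int_{(k-1)T}^{t}\e^{\xi(t-s)}\,ds = h(t-(k-1)T,\xi) \leq h(T,\xi)$, invoking the monotonicity of $h$ recorded after~\eqref{eq:def-of-h}.
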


\begin{proof}
    We define $r(t;k-1):= x(t)- x((k-1)T),~t\in[(k-1)T,kT)$. Then $r((k-1)T;k-1)=0$ and $r(t;k-1)$ is a solution of the system \begin{equation}\label{eq:def_dot_x-x_k-1}  
        \dot r= \dot x=f(r+v ,u), ~\forall~t\in[(k-1)T,kT)
  \end{equation}
  for $t\in[(k-1)T,kT)$, with constant inputs $u=z((k-1)T)$ and $v=x((k-1)T)$. Also, $r_{\text{zero}}(t)\equiv 0$ is a solution to \eqref{eq:def_dot_x-x_k-1} with two zero inputs.
  Applying Lemma~\ref{lemma:Input-state stability properties} to the two solutions $r(t;k-1)$ and $r_{\text{zero}}(t)$, it follows that  for any $t\in [(k-1)T,kT)$,
\begin{align*}
 &  \|r(t;k-1)-r_{\text{zero}}(t)\| \\
  \leq &\Lip_x(f) \! \int_{(k\!-\!1)T}^{t} \e^{\xi(t-s)}\left\|x((k\!-\!1)T)\right\| \! ds\\
 &~+\Lip_z(f) \! \int_{(k\!-\!1)T}^{t}\!\e^{\xi(t-s)}\left\|z((k\!-\!1)T)\right\| \! ds\\
   \leq & h(T,\xi)\left(\Lip_x(f) \left\|x((k\!-\!1)T)\right\|+\Lip_z(f)  \left\|z((k-1)T)\right\| \right),
 \end{align*}
where the second inequality results from the fact that for any $t\in [(k-1)T,kT)$ and $\tau=t-(k-1)T$ it holds that:
\begin{align*}
     \int_{(k\!-\!1)T}^{t} \e^{\xi(t-s)} ds= & \int_{0}^{\tau} \e^{\xi(\tau-s)}ds= h(\tau,\xi)\leq h(T,\xi) \, .
\end{align*}
\end{proof}

\subsection{Proof of Proposition~\ref{prop:contractivity-implies-GES}}

To prove the result of Proposition~\ref{prop:contractivity-implies-GES}, recall first that 
for any $t\geq 0$, we write $t=kT+\tau$, $k\in\mathbb{Z}_{\geq 0}$, $\tau\in [0,T)$. By Lemma \ref{lem:bound-x-by-xkT}, it holds that
\begin{align*}
&\begin{bmatrix}
  \left\| x(\tau + kT)\right\|_\mathcal{X}\\
        \left\| z(\tau + kT)\right\|_\mathcal{Z}
    \end{bmatrix}
    \leq \mathcal{B}(T)
    \begin{bmatrix}
    \left\| x(kT)\right\|_\mathcal{X}\\
        \left\| z(kT)\right\|_\mathcal{Z}
    \end{bmatrix} 
\end{align*}
for any $\tau\in [0,T)$, where $h(T,\xi)$ is defined in \eqref{eq:def-of-h} and $\xi \in \mathbb{R}$ is such that $\xi\geq \osLip_x(f(x,z))$. Recall that $\| \cdot\|$ is a monotonic norm on $\mathbb{R}^2$, and that the composite norm $\|\cdot\|_{\textsf{cmp}}$ on $\mathcal{X}\times\mathcal{Z}$ is defined as $\|y\|_{\textsf{cmp}}:=\| [\| x\|_\mathcal{X},\| z\|_\mathcal{Z} ]^\top\|$.
Then, it follows that
\begin{align*}
&~~~\left\|\begin{bmatrix}
     x(t)\\z(t)
\end{bmatrix}\right\|_{\textsf{cmp}}
 =\left\|\begin{bmatrix}
  \left\| x(kT+\tau)\right\|_\mathcal{X}\\
        \left\| z(kT+\tau)\right\|_\mathcal{Z}
    \end{bmatrix}\right\|\\
&    \leq  \left\|\mathcal{B}(T)\right\|
   \left\|\begin{bmatrix}
  \left\| x(kT)\right\|_\mathcal{X}\\
        \left\| z(kT)\right\|_\mathcal{Z}
    \end{bmatrix}\right\|  \leq  b^k \left\| \mathcal{B}(T)\right\| 
   \left\|\begin{bmatrix}
  \left\| x(0)\right\|_\mathcal{X}\\
        \left\| z(0)\right\|_\mathcal{Z}
    \end{bmatrix}\right\|\\
    &  =  
    \frac{
    \left\| \mathcal{B}(T)
    \right\|} {b^\frac{\tau}{T}} 
    \left(b^\frac{1}{T}\right)^{kT+\tau}
 \left\|\begin{bmatrix}
  \left\| x(0)\right\|_\mathcal{X}\\
        \left\| z(0)\right\|_\mathcal{Z}
    \end{bmatrix}\right\|\\
    &  \leq  r \e^{-c(kT+\tau)}  \left\|\begin{bmatrix}
     x(0)\\z(0)
\end{bmatrix}\right\|_{\textsf{cmp}},
\end{align*}
where   $r:=\left\|\mathcal{B}(T)
    \right\|/b$ and $c:=- \ln  b^\frac{1}{T}=-\frac{1}{T}\ln b>0$. We note that the first inequality follows the monotonicity of the norm $\|\cdot\|$; in the second inequality, we use the fact that the system is contractive in discrete time with composite norm $\left\|\cdot\right\|_{\textsf{cmp}}$ and sampling time $T$; the third inequality holds by $0<b<1$ and $b^{-\frac{\tau}{T}}<b^{-1}$.

\subsection{Proof of Theorem \ref{prop:Cont_Disc_11_to_12_general_case}}

To keep the notation light without losing information, we define $\xi=\osLip_x(f)$, $\tx(t):=\solxone(t)-\solxtwo(t)$ and $\tz(t)=\solzone(t)-\solztwo(t)$. We also drop the subscript from the norm. By Lemma \ref{lem:bound-x-by-xkT} and the continuity of $\tx(t)$, it follows that:
\begin{align*}
   & \hspace{-.3
   cm} \|\tx(kT) \|=  \lim_{\tau\to T_{-}}\|\tx(\tau + (k-1)T) \|\\
   \leq&  \lim_{\tau\to T_{-}} \e^{\xi\tau}\|\tx((k-1)T) \|+ 
 \Lip_z(f)\frac{1-\e^{\xi\tau}}{-\xi}\|\tz((k-1)T) \| \\
   =& \e^{\xi T}\|\tx((k-1)T) \|+ 
 \Lip_z(f)\frac{1-\e^{\xi T}}{-\xi}\|\tz((k-1)T) \|.
\end{align*}

Next, we bound $\|\tz(kT)\|$. By Lemma~\ref{lemma: Lip-of-gn}, it holds that:
\begin{align*}
    &\|\tz(kT) \|\\
    & \leq  \frac{1-\Lip_z(\sfG)^n}{1-\Lip_z(\sfG)}\Lip_x(\sfG)\|\tx(kT) \|+\Lip_z(\sfG)^n\|\tz((k-1)T) \|\\
    & \leq  \frac{1-\Lip_z(\sfG)^n}{1-\Lip_z(\sfG)}\Lip_x(\sfG) \Big(
 \Lip_z(f)\frac{1-\e^{\xi T}}{-\xi}\|\tz((k-1)T) \| \\
    &~~~+  
    \e^{\xi T}\|\tx((k-1)T) \|\Big)+\Lip_z(\sfG)^n\|\tz((k-1)T) \|\\
 & = \frac{1-\Lip_z(\sfG)^n}{1-\Lip_z(\sfG)}\Lip_x(\sfG) \e^{\xi T} \|\tx((k-1)T) \|\\
 &+\frac{1-\Lip_z(\sfG)^n}{1-\Lip_z(\sfG)}\Lip_x(\sfG)  
 \Lip_z(f)\frac{1-\e^{\xi T}}{-\xi}\|\tz((k-1)T) \|\\
 &+\Lip_z(\sfG)^n  \|\tz((k-1)T) \| \, .
\end{align*}
Using the bounds on $\tx(kT)$ and $\tz(kT)$ above, we can write the following:
\begin{equation}\label{eq: small gain_iterative_bound}
    \begin{bmatrix}
    \left\| \tx(kT)\right\|_\mathcal{X}\\
        \left\| \tz(kT)\right\|_\mathcal{Z}
    \end{bmatrix}\leq \mathcal{A}(n,T) \begin{bmatrix}
    \left\| \tx((k-1)T)\right\|_\mathcal{X}\\
        \left\| \tz((k-1)T)\right\|_\mathcal{Z}
    \end{bmatrix},
\end{equation}
where $\mathcal{A}(n,T)$ is a $2\times 2$ matrix defined in Remark~\ref{rem:No-constraints-T-n}. 
We leverage Lemma \ref{le:2by2_matrix} to show that the matrix $\mathcal{A}(n,T)$ is Schur. To this end, we first inspect  $\mathcal{A}_{11}+\mathcal{A}_{22}$. Compute,
 \begin{align*}
\mathcal{A}_{11} +\mathcal{A}_{22} &  =  \e^{\xi T}+\Lip_z(\sfG)^n\\
   & ~~~~ + (1-\Lip_z(\sfG)^n)(1-\e^{\xi T})\frac{\Lip_x(\sfG)  
 \Lip_z(f)}{-\xi (1-\Lip_z(\sfG))}\\
 & <   \e^{\xi T}+  (1-\Lip_z(\sfG)^n)(1-\e^{\xi T})+\Lip_z(\sfG)^n\\
 & = 1+\Lip_z(\sfG)^n \e^{\xi T} \, .
   \end{align*}
We will use this inequality shortly. Next, it holds that 
\begin{align*}
    & \hspace{-1.0cm} (1-\mathcal{A}_{11})(1-\mathcal{A}_{22})-\mathcal{A}_{12}\mathcal{A}_{21}>0\\
     \Longleftrightarrow~~& \frac{1-\mathcal{A}_{11}}{1-\e^{\xi T} }(1-\mathcal{A}_{22})-\frac{\mathcal{A}_{12}\mathcal{A}_{21}}{1-\e^{\xi T}}>0\\
 \Longleftrightarrow~~&  
 1-\Lip_z(\sfG)^n -\frac{\Lip_x(\sfG)  
 \Lip_z(f)}{-\xi (1-\Lip_z(\sfG))}(1-\Lip_z(\sfG)^n) \e^{\xi T} \\
&-\frac{\Lip_x(\sfG)  
 \Lip_z(f)}{-\xi(1-\Lip_z(\sfG))}(1-\Lip_z(\sfG)^n)(1-\e^{\xi T}) >0\\
 \Longleftrightarrow~~&   
 1-\Lip_z(\sfG)^n-(1-\Lip_z(\sfG)^n)\frac{\Lip_x(\sfG)  
 \Lip_z(f)}{-\xi(1-\Lip_z(\sfG))}  >0\\
  \Longleftrightarrow~~&   
 \left(1-\frac{\Lip_x(\sfG)  
 \Lip_z(f)}{-\xi(1-\Lip_z(\sfG))} \right) (1-\Lip_z(\sfG)^n) >0\\
  \Longleftrightarrow~~&   
 -\xi(1-\Lip_z(\sfG))>\Lip_x(\sfG)  
 \Lip_z(f) \, .
 \end{align*}
Given the last inequality,  one has that $\xi < 0$; therefore, $1+\Lip_z(\sfG)^n \e^{\xi T} < 2$. It then follows from Lemma \ref{le:2by2_matrix} that $\mathcal{A}(n,T)$ is Schur for any $n\in\mathbb{Z}_{>0}$ and any $T>0$.

To show the contractivity in discrete time, we note that  all the entries of  $\mathcal{A}(n,T)$ are positive, by the fact that $\xi<0$, $\Lip_z(\sfG)<1$, $\Lip_z(f)\neq 0,~\Lip_x(\sfG)\neq 0$. Hence  $\mathcal{A}(n,T)$ is irreducible. By \cite[Lemma 2.8, 2.22]{FB:26-CTDS},
  it follows that there exists a vector $\eta=[\eta_1,\eta_2]^\top\in\mathbb{R}^2$ such that the vector norm  $\|\cdot \|_{2,[\eta]}$ is monotonic; and (ii) the induced matrix norm of $\|\cdot \|_{2,[\eta]}$ satisfies that $\| \mathcal{A}(n,T) \|_{2,[\eta]}=\rho\left( \mathcal{A}(n,T)\right)$, where the vector norm  $\|\cdot \|_{2,[\eta]}$ is defined as   $\|[v_1,v_2]^\top \|_{2,[\eta]}:=\sqrt{\eta_1 v_1^2+\eta_2 v_2^2}$. 
  
  Taking $\|\cdot \|_{2,[\eta]}$ on both sides of \eqref{eq: small gain_iterative_bound}, then by monotonicity of the norm $\|\cdot \|_{2,[\eta]}$, it follows that
  \begin{align*}
    &\left\|   \begin{bmatrix}
    \left\| \tx(kT)\right\|_\mathcal{X}\\
        \left\| \tz(kT)\right\|_\mathcal{Z}
    \end{bmatrix}\right\|_{2,[\eta]}\\
    \leq& \! \left\|\mathcal{A}(n,T) \right\|_{2,[\eta]}\! \left\|  \begin{bmatrix}
    \left\| \tx((k\!-\!1)T)\right\|_\mathcal{X}\\
        \left\| \tz((k\!-\!1)T)\right\|_\mathcal{Z}
    \end{bmatrix}\right\|_{2,[\eta]}\\
    \leq & \! \left\|\mathcal{A}(n,T) \right\|_{2,[\eta]}^k  \left\|  \begin{bmatrix}
    \left\| \tx(0)\right\|_\mathcal{X}\\
        \left\| \tz(0)\right\|_\mathcal{Z}
    \end{bmatrix}\right\|_{2,[\eta]}\\
    =& \! \left[\rho\left( \mathcal{A}(n,T)\right)\right]^k  \left\|  \begin{bmatrix}
    \left\| \tx(0)\right\|_\mathcal{X}\\
        \left\| \tz(0)\right\|_\mathcal{Z}
    \end{bmatrix}\right\|_{2,[\eta]}.
\end{align*}
Hence, for any $n\in\mathbb{Z}_{>0}$ and any sampling time $T>0$, the system is contractive in discrete time at rate $ \rho\left( \mathcal{A}(n,T)\right)$, with composite norm $\|[x^\top,z^\top]^\top\|_{\textsf{cmp}}:=\| [\| x\|_\mathcal{X},\| z\|_\mathcal{Z} ]^\top\|_{2,[\eta]}$.

To show GES, we recall that the norm $\| \cdot\|_{2,[\eta]}$ is monotonic. Then, by Proposition \ref{prop:contractivity-implies-GES}, it holds that for $n\in\mathbb{Z}_{>0}$, any $T>0$, and any solution $[x(t)^\top,z(t)^\top]$ to the system \eqref{eq:sample_data},
\begin{equation*}
     \left\|   \begin{bmatrix}
    \left\| \tx(t)\right\|_\mathcal{X}\\
        \left\| \tz(t)\right\|_\mathcal{Z}
    \end{bmatrix}\right\|_{2,[\eta]}\leq r\e^{-ct}  \left\|  \begin{bmatrix}
    \left\| \tx(0)\right\|_\mathcal{X}\\
        \left\| \tz(0)\right\|_\mathcal{Z}
    \end{bmatrix}\right\|_{2,[\eta]},~\forall~t\geq 0,
\end{equation*}
where $r:=\left\| \mathcal{B}(T)
    \right\|_{2,[\eta]}/\rho\left( \mathcal{A}(n,T)\right)$ and $c:=-\frac
{1}{T}\ln \rho\left( \mathcal{A}(n,T)\right)$. This completes the proof.

\subsection{Proof of Theorem \ref{prop:Cont_Disc_11_to_21_general_case}}

We need to show that the assumptions in Theorem~\ref{prop:Cont_Disc_11_to_21_general_case} imply that $\osLip(f(x,z^*(x))) < 0$ for any $x$. To this end, recall the definition of one-sided Lipschitz constant: 
\begin{align*}
\osLip_x(f(x,z^*(x)))
    \!=\! \sup_{\solxone \neq \solxtwo} \! \frac{\llbracket f(\solxone,z^*(\solxone))\!-\!f(\solxtwo,z^*(\solxtwo)) ; \solxone\!-\!\solxtwo \rrbracket_{\mathcal{X}}}{\|\solxone-\solxtwo\|^2_{\mathcal{X}}}
\end{align*}
where we use a  weak pairing  $\llbracket \cdot ; \cdot \rrbracket_\mathcal{X} $ compatible with the norm $\| \cdot\|_\mathcal{X}$; we also recall that a compatible weak pairing is such that $\llbracket x ; x \rrbracket_\mathcal{X} = \|x\|_\mathcal{X}^2$~\cite[Ch.~2]{FB:26-CTDS}. Then, we have the following: 
\begin{align*}
    &\osLip(f(x,z^*(x)))\\
    \leq & \sup _{\solxone \neq \solxtwo} \frac{\llbracket f(\solxone,z^*(\solxone))-f(\solxtwo,z^*(\solxone)) ; \solxone-\solxtwo \rrbracket_{\mathcal{X}}}{\|\solxone-\solxtwo\|^2_{\mathcal{X}}} \\
    &+\sup _{\solxone \neq \solxtwo} \frac{\llbracket f(\solxtwo,z^*(\solxone))-f(\solxtwo,z^*(\solxtwo)) ; \solxone-\solxtwo \rrbracket_{\mathcal{X}}}{\|\solxone-\solxtwo\|^2_{\mathcal{X}}} \\
    \leq&~\osLip_x(f)\\
    &+\sup _{\solxone \neq \solxtwo} \frac{\| f(\solxtwo,z^*(\solxone))-f(\solxtwo,z^*(\solxtwo))\|_{\mathcal{X}} \| \solxone-\solxtwo\|_{\mathcal{X}}}{\|\solxone-\solxtwo\|^2_{\mathcal{X}}}    \\
    \leq &~ \osLip_x(f)+\sup _{\solxone \neq \solxtwo} \frac{ \Lip_z(f)  \|z^*(\solxone)-z^*(\solxtwo)  \|_{\mathcal{Z}}}{\|\solxone-\solxtwo\|_{\mathcal{X}}}    \\
    \leq & ~\osLip_x(f)+\frac{\Lip_z(f)\Lip_x(\sfG)}{1-\Lip_z(\sfG)}\\
    =&\frac{\osLip_x(f)(1-\Lip_z(\sfG))+\Lip_z(f)\Lip_x(\sfG)}{1-\Lip_z(\sfG)} < 0 
\end{align*}
where we have used the sub-additivity of the weak pairing, the Cauchy-Schwarz inequality for the weak pairing~\cite[Def.~2.27]{FB:26-CTDS}, and where the last term is negative by~\eqref{eq:small-gain-condition} and the fact that $\Lip_z(\sfG)<1$. This concludes the proof.

\subsection{Proof of Theorem \ref{prop:Cont_Disc_21_to_22_general_case}}

In this proof, define the error variable $e(t) := z(t) - z^*(x(t))$ for brevity. Recall that,  without loss of generality, $f(\mathbf{0}_{n_x},\mathbf{0}_{n_z})=\mathbf{0}_{n_x}$, $\sfG(\mathbf{0}_{n_x},\mathbf{0}_{n_z})=\mathbf{0}_{n_z}$, and $z^*(\mathbf{0}_{n_x}) = \mathbf{0}_{n_z}$. To simplify the notation, we omit the sub-indices for all the norms and log-norms when the context is clear. 

Several steps of the proof will rely on the following bounds to switch back-and-forth between $z$ and $e$: 
\begin{align}
\notag
        \|z(kT) \|\leq&  \|z(kT)-z^*(x(kT)) +z^*(x(kT))-z^*(\mathbf{0}_{n_x})\|\\
        \label{eq: nonlinear bound z by e}
        \leq &\|e(kT)\| +\frac{\Lip_x(\sfG)}{1-\Lip_z(\sfG)}\|x(kT)\| \\
        \notag
        \|e(kT) \|=&  \|z(kT)-z^*(x(kT))+z^*(\mathbf{0}_{n_x}) \|\\
        \label{eq: nonlinear bound e by z}
        \leq&\|z(kT)\| +\frac{\Lip_x(\sfG)}{1-\Lip_z(\sfG)}\|x(kT)\| \ 
\end{align}
Next, let $y(t) = [x(t)^\top,z(t)^\top]^\top$ be an arbitrary solution of interconnected system \eqref{eq:sample_data}. 
During time interval $[(k\!-\!1)T,kT)$, $x(t)$ is also a solution to the following system:
\begin{equation}\label{eq: reduced model with input}
    \dot x(t)=f(x(t),z^*(x(t)))+u(t),  ~~ t \geq 0 \, ,
\end{equation}
where $t \mapsto u(t)$ is defined as $u(t):=f(x(t),z((k\!-\!1)T))-f(x(t),z^*(  x((k\!-\!1)T) ) )+f(x(t),z^*(  x((k\!-\!1)T) ) )-f(x(t),z^*(x(t)))$. In addition, let $x_{\text{zero}}(t)\equiv 0$ be a solution to system \eqref{eq: reduced model with input} with input $u\equiv 0$. 
Next, we apply Lemma \ref{lemma:Input-state stability properties} to bound $\|x(t)-x_{\text{zero}}(t) \|$. We verify that the system \eqref{eq: reduced model with input} satisfies the assumptions in Lemma \ref{lemma:Input-state stability properties}: $\osLip_x(f(x,z^*(x))+u)=\osLip(f(x,z^*(x)))\leq -\zeta$ for any fixed $u$ and the dynamics is Lipschitz w.r.t $u$ with Lipschitz constant $1$. By Lemma \ref{lemma:Input-state stability properties}, it follows that for  $t=(k-1)T+\tau$, $\tau\in[0,T)$,
\begin{equation}   \label{eq: SP_ bound for x} 
\begin{aligned}
    \|x(t) \| & \leq  \e^{ -\zeta \tau }\|x((k\!-\!1)T)\| +\int_{(k-1)T}^t \e^{-\zeta(t-s)}\cdot \overline{u} 
  ~ds\\
 & \leq \e^{ -\zeta \tau }\|x((k-1)T)\|+\frac{1-\e^{-\zeta T} }{\zeta}\overline{u},
\end{aligned}
\end{equation}
where $\overline{u} := \sup_{t\in[(k-1)T,kT)} \|u(t)\|$ and where we used the fact that $\int_{(k-1)T}^t \e^{-\zeta(t-s)} ds\leq\int_{0}^T \e^{-\zeta(T-s)} ds=\frac{1-\e^{-\zeta T} }{\zeta}$. Then it suffices to bound $\overline{u}$; to this end, compute, 
\begin{align*}
    \overline{u}  \leq & \Lip_z(f)\|z((k-1)T)-z^*(  x((k-1)T) )   \|\\
    &+ \sup_{t\in[(k-1)T,kT)}\Lip_z(f)\|z^*(  x((k-1)T) )-z^*(x(t))   \|\\
    \leq &  \Lip_z(f) \|e((k-1)T)   \|\\
    &+ \frac{\Lip_z(f)\Lip_x(\sfG)}{1-\Lip_z(\sfG)}\sup_{t\in[(k-1)T,kT)}\|  x((k-1)T) -x(t) \|\\
    \leq & \Lip_z(f) \|e((k-1)T)   \|\\
    &+ \frac{\Lip_z(f)\Lip_x(\sfG)}{1-\Lip_z(\sfG)} h(T,\xi)\Lip_x(f)\| x((k-1)T) \|\\
    &+\frac{\Lip_z(f)\Lip_x(\sfG)}{1-\Lip_z(\sfG)} h(T,\xi)\Lip_z(f)\| z((k-1)T) \|\\
    \leq & \Lip_z(f) \|e((k-1)T)   \|\\
    &+ \frac{\Lip_z(f)\Lip_x(\sfG)}{1-\Lip_z(\sfG)} \Lip_x(f) h(T,\xi)\| x((k-1)T) \|\\
    &+\left[\frac{\Lip_z(f)\Lip_x(\sfG)}{1-\Lip_z(\sfG)} \right]^2 h(T,\xi)\| x((k-1)T) \|\\
    &+\frac{\Lip_z(f)\Lip_x(\sfG)}{1-\Lip_z(\sfG)} \Lip_z(f)h(T,\xi)\| e((k-1)T) \|\\
    = & h(T,\xi)   C_{1}\| x((k-1)T)\|\\
    &+\left( \Lip_z(f)+h(T,\xi)C_{12} \right) \|e((k-1)T)   \|\\
\end{align*}
where we use Lemma \ref{lemma: Explicit bound for x(t)- x(0,k-1)} in the third inequality and \eqref{eq: nonlinear bound z by e} in the fourth inequality. Combining \eqref{eq: SP_ bound for x} and the upper bound for $\overline{u} $, we can write 
\begin{equation}\label{eq: SP_bound for x, final}    
\begin{aligned}
 &\|x(kT) \|=\lim_{\tau\to T_{-}} \|x(\tau+ (k-1)T) \| \\
 & ~~ \leq \left(\e^{-\zeta T}+    \frac{1-\e^{-\zeta T}}{\zeta}h(T,\xi) C_{1} \right) \left \|x((k-1)T)\right\|\\
 & ~~~~ +  \frac{1-\e^{-\zeta T}}{\zeta}(\Lip_z(f)+ h(T,\xi)C_{12})\|e((k-1)T)\|
\end{aligned}
\end{equation}
with $C_{1}$ and $C_{12}$ defined in Theorem~\ref{prop:Cont_Disc_21_to_22_general_case} and Remark~\ref{rem:transientbound}.

Next, we bound $\|e(kT)\|$. Since $z^*(x)$ is a fixed point of $\sfGn(x, \cdot)$ for any $x$ and any $n\in\mathbb{Z}_{>0}$, it follows that
\begin{equation}\label{eq: SP bound for e}
\begin{aligned}
     &\hspace{-.5cm}\|e(kT)\| =\|z(kT) -z^*( x(kT) )\|\\
     =& \|\sfGn (x(kT),z((k-1)T))-\sfGn( x(kT),z^*(x(kT)) )\|  \\
     \leq & [\Lip_z(\sfG)]^n  \|z((k-1)T)-z^*(x(kT))  \|    \\
      \leq & [\Lip_z(\sfG)]^n  \|z((k-1)T)-z^*(x ((k-1)T) )\|\\
      &+ [\Lip_z(\sfG)]^n\| z^*(x((k-1)T))-z^*(x(kT))  \|              \\
      \leq &  [\Lip_z(\sfG)]^n \| e((k-1)T)\|\\
      &+  [\Lip_z(\sfG)]^n \frac{\Lip_x(\sfG)}{1-\Lip_z(\sfG)}\|x(kT)- x((k-1)T) \|, 
\end{aligned}
\end{equation}
where we use Lemma \ref{lemma: Lip-of-gn} and Lemma \ref{lemma: Lip-of-zstar}. By Lemma~\ref{lemma: Explicit bound for x(t)- x(0,k-1)}, we have 
\begin{align*}
    &\hspace{-.3cm} \|x(kT)- x((k-1)T) \| = \lim_{t\to (kT)_{-}}  \|x(t)- x((k-1)T) \|\\
    \leq&h(T,\xi)( \Lip_x(f)\|  x((k-1)T) \|+\Lip_z(f) \|z((k-1)T)\|)\\
    \leq &\left( \Lip_x(f)+\frac{\Lip_x(\sfG)\Lip_z(f)}{1-\Lip_z(\sfG)} \right)  h(T,\xi) \|  x((k-1)T) \|\\
    &+\Lip_z(f)  h(T,\xi) \|e((k-1)T)\| \, .
\end{align*}
Then, combining \eqref{eq: SP bound for e} and the upper bound for $\|x(kT)- x((k-1)T) \|$, we can write
\begin{equation}\label{eq: SP bound for e, final}
\begin{aligned}
    \|e(kT) \|\leq &(h(T,\xi)C_{2} +[\Lip_z(\sfG)]^n) \|  e((k-1)T) \|\\
    &+h(T,\xi) C_{21} \|  x((k-1)T) \|
\end{aligned}
\end{equation}
where the constant $C_{21}$ and $C_{2}$ are defined as in Theorem~\ref{prop:Cont_Disc_21_to_22_general_case} and Remark~\ref{rem:transientbound}.

By \eqref{eq: SP_bound for x, final} and \eqref{eq: SP bound for e, final}, it follows that 
\begin{equation}
\begin{bmatrix}\label{eq: iterative_bound_for x and e}
    \left\| x(kT)\right\|\\
        \left\| e(kT)\right\|
    \end{bmatrix}\leq \overline{\mathcal{A}}(n,T) \begin{bmatrix}
    \left\| x((k-1)T)\right\|\\
        \left\| e((k-1)T)\right\|
    \end{bmatrix}
\end{equation}
    with $ \overline{\mathcal{A}}(n,T)$ defined in Remark~\ref{rem:transientbound}. Next, we use Lemma \ref{le:2by2_matrix} to derive a sufficient condition for $\overline{\mathcal{A}}(n,T)$ to be Schur stable. Compute,
\begin{align*}    
& \hspace{-.4cm} \overline{\mathcal{A}}_{11} +\overline{\mathcal{A}}_{22}\\
=& \e^{-\zeta T}+    \frac{1-\e^{-\zeta T}}{\zeta}h(T,\xi) C_{1}+  h(T,\xi)C_{2} +[\Lip_z(\sfG)]^n\\
< &1+    \frac{1}{\zeta}h(T,\xi) C_{1}+  h(T,\xi)C_{2} +[\Lip_z(\sfG)]^n
\end{align*}
where we use $\e^{-\zeta T}<1$ and $1-\e^{-\zeta T}< 1$ in the first inequality; hence, a sufficient condition for $\overline{\mathcal{A}}_{11} +\overline{\mathcal{A}}_{22}<2$ is
\begin{equation}\label{eq: condition for T(n)}
    h(T,\xi)(C_{2}+C_{1}/\zeta)+[\Lip_z(\sfG)]^n-1<0.
\end{equation}

In addition, we note that
    \begin{align*}
    &(1-\overline{\mathcal{A}}_{11})(1-\overline{\mathcal{A}}_{22})-\overline{\mathcal{A}}_{12}\overline{\mathcal{A}}_{21}>0\\
     \Longleftrightarrow~~& \frac{1-\overline{\mathcal{A}}_{11}}{1-\e^{-\zeta T} }(1-\overline{\mathcal{A}}_{22})-\frac{\overline{\mathcal{A}}_{12}\overline{\mathcal{A}}_{21}}{1-\e^{-\zeta T}}>0\\
  \Longleftrightarrow~~&\left(1- \frac{1}{\zeta} h(T,\xi)C_{1} \right) 
  \left(1-   h(T,\xi)C_{2}-[\Lip_z(\sfG)]^n \right)\\
  ~~&- \frac{1}{\zeta}(\Lip_z(f)+  h(T,\xi)C_{12})h(T,\xi)C_{21} >0, \\
\Longleftrightarrow~~&\left(1- \frac{1}{\zeta} h(T,\xi)C_{1} \right) 
  \left(1-[\Lip_z(\sfG)]^n \right)\\
  ~~&- h(T,\xi)C_{2} - \frac{1}{\zeta} \Lip_z(f)h(T,\xi)C_{21}\\
~~&  +   \frac{1}{\zeta} [h(T,\xi)]^2C_{1}C_{2}- \frac{1}{\zeta} [h(T,\xi)]^2C_{12}C_{21}>0\\ 
\Longleftrightarrow~~&\left(1- \frac{1}{\zeta} h(T,\xi)C_{1} \right) 
  \left([\Lip_z(\sfG)]^n-1 \right)\\
 ~~&+ h(T,\xi)C_{2} + \frac{1}{\zeta} \Lip_z(f)h(T,\xi)C_{21}<0\\
 \Longleftrightarrow~~&h(T,\xi)(C_{2}+C_{1}/\zeta)+[\Lip_z(\sfG)]^n-1\\
 ~~&- \frac{1}{\zeta} h(T,\xi)(C_{1} 
  [\Lip_z(\sfG)]^n-\Lip_z(f)C_{21})<0\\
  \Longleftrightarrow~~&h(T,\xi)(C_{2}+C_{1}/\zeta)+[\Lip_z(\sfG)]^n-1<0,
\end{align*}
where we used the fact that $C_{1}C_{2}=C_{12}C_{21}$ and $C_{1}[\Lip_z(\sfG)]^n=\Lip_z(f)C_{21}$ (the latter can be verified from the definitions of $C_{1}$, $C_{12}$, $C_{21}$ and $C_{2}$). 
Hence, \eqref{eq: condition for T(n)} is a sufficient and necessary condition for $(1-\overline{\mathcal{A}}_{11})(1-\overline{\mathcal{A}}_{22})-\overline{\mathcal{A}}_{12}\overline{\mathcal{A}}_{21}>0$.
By Lemma \ref{le:2by2_matrix}, we conclude that $\overline{\mathcal{A}}(n,T)$ is Schur if 
\begin{equation}\label{eq: condition II for T(n)}
    h(T,\xi)< \frac{1-[\Lip_z(\sfG)]^n}{C_{2}+C_{1}/\zeta}.
\end{equation}
Based on  \eqref{eq:def-of-h} and the monotonicity of $h$, we get that \eqref{eq: condition II for T(n)} holds for any $T<T(n)$, where $T(n)$ is defined as \eqref{eq:Tn}.

To conclude the proof for GES, notice that all the entries of  $\overline{\mathcal{A}}(n,T)$ are positive and that $\overline{\mathcal{A}}(n,T)$ is irreducible. By \cite[Lemma 2.8, 2.22]{FB:26-CTDS},
  it follows that there exists a vector $\bar{\eta}=[\bar{\eta}_1,\bar{\eta}_2]^\top\in\mathbb{R}^2$ such that the vector norm  $\|\cdot \|_{2,[\bar{\eta}]}$ is monotonic; and (ii) the induced matrix norm of $\|\cdot \|_{2,[\bar{\eta}]}$ satisfies that $\| \overline{\mathcal{A}}(n,T) \|_{2,[\bar{\eta}]}=\rho\left( \overline{\mathcal{A}}(n,T)\right)$, where the vector norm  $\|\cdot \|_{2,[\bar{\eta}]}$ is define as   $\|[v_1,v_2]^\top \|_{2,[\bar{\eta}]}:=\sqrt{\bar{\eta}_1 v_1^2+\bar{\eta}_2 v_2^2}$. 

Define the matrix $L := \begin{bmatrix}
        1 & 0 \\
        \frac{\Lip_x(\sfG)}{1-\Lip_z(\sfG)} & 1
    \end{bmatrix} 
$ for brevity. Applying Lemma \ref{lem:bound-x-by-xkT}, \eqref{eq: nonlinear bound z by e}, \eqref{eq: iterative_bound_for x and e} and \eqref{eq: nonlinear bound e by z} sequentially, we get 
  \begin{align*}
    \begin{bmatrix}
    \left\| x(kT+\tau)\right\|\\
        \left\| z(kT+\tau)\right\|
    \end{bmatrix}
    & \leq   \mathcal{B}(T)   \begin{bmatrix}
    \left\| x(kT)\right\|\\
        \left\| z(kT)\right\|
    \end{bmatrix}\\
      &  \leq  \mathcal{B}(T) L   \begin{bmatrix}
    \left\| x(kT)\right\|\\
        \left\| e(kT)\right\|
    \end{bmatrix}\\
    & \leq   \mathcal{B}(T) L\left[\overline{ \mathcal{A}}(n,T)\right]^k\begin{bmatrix}
    \left\| x(0)\right\|\\
        \left\| e(0)\right\|
    \end{bmatrix}\\
       & \leq   \mathcal{B}(T) L \left[\overline{ \mathcal{A}}(n,T)\right]^k L\begin{bmatrix}
    \left\| x(0)\right\|\\
        \left\| z(0)\right\|
    \end{bmatrix} \, .
  \end{align*} 
  By monotonicity of the norm $\|\cdot \|_{2,[\bar{\eta}]}$,  it follows that
  \begin{align*}
    & \hspace{-.8cm} \left\|   \begin{bmatrix}
    \left\| x(kT+\tau)\right\|\\
        \left\| z(kT+\tau)\right\|
    \end{bmatrix}\right\|_{2,[\bar{\eta}]} \leq \left\| \mathcal{B}(T) L \left[\overline{ \mathcal{A}}(n,T)\right]^k  L\begin{bmatrix}
    \left\| x(0)\right\|\\
        \left\| z(0)\right\|
    \end{bmatrix}\right\|_{2,[\bar{\eta}]} \\
\leq &  \left\| \mathcal{B}(T)\right\|_{2,[\bar{\eta}]} \left\| L\right\|_{2,[\bar{\eta}]}^2  \left\|\overline{ \mathcal{A}}(n,T)\right\|_{2,[\bar{\eta}]}^k \left\|\begin{bmatrix}
    \left\| x(0)\right\|\\
        \left\| z(0)\right\|
    \end{bmatrix}\right\|_{2,[\bar{\eta}]}\\
    = & Q \cdot b^k \cdot \left\|\begin{bmatrix}
    \left\| x(0)\right\|\\
        \left\| z(0)\right\|
    \end{bmatrix}\right\|_{2,[\bar{\eta}]}\\
     = & \frac{Q}{b^{\frac{\tau}{T}}}  \left(b^{\frac{1}{T}}\right)^{kT+\tau}  \left\|\begin{bmatrix}
    \left\| x(0)\right\|\\
        \left\| z(0)\right\|
    \end{bmatrix}\right\|_{2,[\bar{\eta}]}\\
\leq &  \varrho \e^{- \alpha (kT+\tau)} \left\|\begin{bmatrix}
    \left\| x(0)\right\|\\
        \left\| z(0)\right\|
    \end{bmatrix}\right\|_{2,[\bar{\eta}]},
  \end{align*}
where $Q:= \left\| \mathcal{B}(T)\right\|_{2,[\bar{\eta}]} \left\| L \right\|_{2,[\bar{\eta}]}^2$, $b:=\rho\left( \overline{\mathcal{A}}(n,T)\right)$, $\varrho:=\frac{Q}{b}$ and $\alpha:=-\frac{1}{T}\log b$. This concludes the proof.

\section{Conclusions}
\label{sec:conclusions}

This paper examined sampled-data systems, formed by the interconnection of a CT dynamical system and a  DT system that updates its output at each interval $T\!>\!0$ based on an  $n$-fold composition of a given map. We formalized the notion of reduced model associated with such a sampled-data system, which was defined as the limiting system for $T\!\rightarrow \!0^+$ and $n\!\rightarrow \!+\infty$. Our main results (Theorem~\ref{prop:Cont_Disc_21_to_22_general_case}) showed that, if the reduced model is contractive, then for each $n$ there exists a $T(n)$ such that the sampled-data system is  exponentially stable for each $T < T(n)$; the theorem provides an explicit (possibly conservative) expression for $T(n)$ as a function of relevant Lipschitz constants. 
This result requires a weaker condition than the exponential‐stability conditions derived via the small‐gain arguments (Theorem~\ref{prop:Cont_Disc_11_to_12_general_case}), which constitute an additional contribution of this work.  
Our findings were then applied to MPC frameworks, offering new insights into the stability of suboptimal MPC. Importantly, our results provide conditions for stability even for a single-iteration suboptimal MPC algorithm; to the best of our knowledge, this is the first time that such an MPC result is established. Future research will further investigate local versions of our results (expanding on Corollary~\ref{corollary:local_Cont_Disc_21_to_22_general_case}), exploring conditions for local contractivity and exponential stability,  with an emphasis on MPC formulations with input and state constraints. Another avenue for future research involves 
the design of observers and output-feedback controllers for interconnected CT–DT 
systems, extending ideas from related work such as \cite{slotine2001modularity}.

\section*{References}
\def\refname{\vadjust{\vspace*{-2.5em}}} 
\vspace{-0.2cm}
\bibliographystyle{IEEEtran}
\bibliography{alias,biblio}

\end{document}